\documentclass[12pt]{arxiv}

\usepackage{qcircuit}

\usepackage{amsmath}
\usepackage{amsfonts}
\usepackage{mathrsfs}
\usepackage{tikz}
\usepackage[utf8]{inputenc}

\usepackage{epstopdf}
\usepackage{mathtools}
\usepackage{graphicx}
\usepackage{bbm}
\usepackage{enumerate}
\usepackage{epsf,verbatim,amssymb,array,cite,multicol,multirow}  
\usepackage{psfrag,bm,xspace}
\usepackage{hhline}
\usepackage{xstring}
\usepackage{ifthen}
\usepackage{booktabs} 

\usepackage{makecell}

\usepackage{xparse}
\usepackage{physics}

\ifdefined \theorem 
\else
  \newtheorem{theorem}{Theorem}
\fi
\ifdefined \lemma 
\else

\newtheorem{lemma}{Lemma}
\fi

\ifdefined \corollary 
\else
\newtheorem{corollary}{Corollary}
\fi

\newtheorem{fact}{Fact}

\ifdefined \proposition 
\else
\newtheorem{proposition}{Proposition}
\fi

\ifdefined \definition 
\else
\newtheorem{definition}{Definition}
\fi

\ifdefined \example 
\else
\newtheorem{example}{Example}
\fi

\ifdefined \remark 
\else
\newtheorem{remark}{Remark}
\fi

\ifdefined \conjecture 
\else
\newtheorem{conjecture}{Conjecture}
\fi

\makeatletter
\def\old@comma{,}
\catcode`\,=13
\def,{%
  \ifmmode%
    \old@comma\discretionary{}{}{}%
  \else%
    \old@comma%
  \fi%
}
\makeatother

\newcommand\numberthis{\addtocounter{equation}{1}\tag{\theequation}}

\usepackage{xcolor}
\definecolor{darkblue}{rgb}{0.1,0.1,0.8}
\definecolor{DarkGreen}{rgb}{0,0.6,0}
\definecolor{brickred}{rgb}{0.8, 0.25, 0.33}
\definecolor{britishracinggreen}{rgb}{0.0, 0.26, 0.15}
\definecolor{calpolypomonagreen}{rgb}{0.12, 0.3, 0.17}
\definecolor{ao(english)}{rgb}{0.0, 0.5, 0.0}
	\definecolor{cadmiumgreen}{rgb}{0.0, 0.42, 0.24}
\definecolor{burgundy}{rgb}{0.5, 0.0, 0.13}
\usepackage{etoolbox}

\providetoggle{Blue_revision}
\settoggle{Blue_revision}{true}

\providetoggle{Track}
\settoggle{Track}{true}
\newcommand{\addv}[3]{%
	\iftoggle{Track}{%
    	\IfEqCase{#1}{%
       	 	{a}{\ifthenelse{\equal{#2}{ON}}{{\color{cadmiumgreen}#3}}{#3}}%
        	{b}{\ifthenelse{\equal{#2}{ON}}{{\color{brickred}#3}}{#3}}%
       		{c}{\ifthenelse{\equal{#2}{ON}}{{\color{burgundy}#3}}{#3}}%
    	}[\PackageError{tree}{Undefined option to tree: #1}{}]%
	}{#3}%
}

 \usepackage{hyperref}

\hypersetup{
colorlinks=true, %
  pdfstartview={FitH},
    linkcolor=red,
    citecolor=blue,
    urlcolor={blue!80!black}
}

\usepackage{graphicx}

\newcounter{relctr} 
\everydisplay\expandafter{\the\everydisplay\setcounter{relctr}{0}} 

\AtBeginDocument{} 

\global\long\def\II{\mathbb{I}}
\global\long\def\RR{\mathbb{R}}

\global\long\def\11{\mathbbm{1}}

\newcommand{\bfb}{\mathbf{b}}

\newcommand{\bfs}{\mathbf{s}}
\newcommand{\bft}{\mathbf{t}}
\newcommand{\bfu}{\mathbf{u}}
\newcommand{\bfv}{\mathbf{v}}

\newcommand{\bfx}{\mathbf{x}}

\newcommand{\bfX}{\mathbf{X}}

\newcommand{\CA}{\mathcal{A}}

\newcommand{\ClC}{\mathcal{C}}

\newcommand{\CE}{\mathcal{E}}
\newcommand{\CF}{\mathcal{F}}

\newcommand{\CM}{\mathcal{M}}

\newcommand{\CO}{\mathcal{O}}

\newcommand{\CS}{\mathcal{S}}

\newcommand{\CU}{\mathcal{U}}

\newcommand{\CX}{\mathcal{X}}

\global\long\def\+{\oplus}

\def\<{\langle}
\def\>{\rangle}

\newcommand{\AND}{\mathsf{AND}}

\ifdefined \var 
  \renewcommand{\var}{\mathsf{var}}
\else
  \newcommand{\var}{\mathsf{var}}
\fi

\ifdefined \abs \else
 \newcommand{\abs}[1]{\lvert#1\rvert}
\fi

\ifdefined \norm \else
 \newcommand{\norm}[1]{\lVert#1\rVert}
\fi

\ifdefined \set 
  \renewcommand{\set}[1]{\left\{#1\right\}}
\else
  \newcommand{\set}[1]{\left\{#1\right\}}
\fi

\DeclareMathOperator*{\argmax}{arg\,max}
\DeclareBoldMathCommand{\Re}{Re}
\usepackage{stackengine}

\newcommand{\phs}{\phi_{\bfs}}

\usepackage[nohyperlinks]{acronym}

\newacro{ML}[ML]{machine learning}
\newacro{IID}[i.i.d.]{independent and identically distributed} 
\newacro{PAC}[PAC]{\textit{probably approximately correct}}
\newacro{LCU}[LCU]{Linear Combination of Unitaries}

\newtheorem{observation}{Observation}

\def\L1{{L_1}}

\DeclareMathOperator*{\tensor}{\otimes}
\DeclareMathOperator{\poly}{poly}

\def\qps{\sigma^\bfs}

\def\qas{a_{\bfs}}

\def\pauliset{\{0,1,2,3\}}
\def\paulisetn{\{0,1,2,3\}^n}
\newcommand{\avec}[2]{\norm{#1}_{#2,P}}
\newcommand{\defeq}{\mathrel{:\mkern-0.25mu=}}
\usepackage{breakcites}

\ifdefined \result 
\else
  \newtheorem{result}{Result}
\fi

\title[Learning nearly sparse unitaries ]{Query Learning Nearly Pauli Sparse Unitaries in Diamond Distance}
\usepackage{times}

\coltauthor{
 \Name{Zahra Honjani} \Email{zhonjani@iu.edu}\\
 \addr Department of Computer Science, Indiana University Bloomington
 \AND
 \Name{Mohsen Heidari} \Email{mheidar@iu.edu}\\
 \addr Department of Computer Science, Indiana University Bloomington
}

\begin{document}

\maketitle

\begin{abstract}
    We  study the problem of learning nearly $(s,\epsilon)$-sparse unitaries, meaning that the Pauli spectrum is concentrated on at most  $s$ components with at most $\epsilon$ residual mass in Pauli $\ell_1$-norm.   This class generalizes  well-studied families, including  sparse unitaries, quantum $k$-juntas, $2^k$-Pauli dimensional channels, and compositions of depth $O(\log\log n)$ circuits with near-Clifford circuits. 
  Given query access to an unknown nearly  sparse unitary $U$, our goal is to efficiently (both in time and query complexity) construct a quantum channel that is close in diamond distance to $U$. We design a learning algorithm achieving  this guarantee  using $\tilde{O}(s^6/\epsilon^4)$ forward queries to $U$, and running time polynomial in relevant parameters.    

A key contribution is  an efficient quantum algorithm that, given query access to an arbitrary unknown unitary $U$, estimates  all Pauli coefficients (up to a shared global phase) whose magnitude exceeds a given threshold $\theta$, extending existing sparse recovery techniques to general unitaries.
    
  We also study the broader class of unitaries with bounded Pauli $\ell_1$-norm.  For that class, we prove an exponential query lower  bound $\Omega(2^{n/2})$. We introduce a more relaxed accuracy metric which is the diamond distance restricted to a set of input states. Then,  we show that, under this metric,  unitaries with Pauli $\ell_1$-norm uniformly bounded by $L_1$ are learnable with $\tilde{O}(L_1^8/\epsilon^{16})$. 
\end{abstract}

\section{Introduction}\label{sec:intro}
Understanding the foundations of learning unknown quantum processes is a central problem with direct implications for the characterization and verification of quantum systems. This problem underlies a broad range of tasks, including quantum process tomography, Hamiltonian learning, and the verification and benchmarking of quantum devices.

Without structural assumptions, however, learning quantum processes is intractable. In particular, learning an unknown  $n$-qubit unitary with  additive error $\epsilon$  in diamond distance  requires $\Omega(4^n/\epsilon)$ queries \citep{Haah2023}. This raises a fundamental  question: \textit{ what natural structural assumptions and accuracy metrics enable efficient learning of quantum processes, and what are the resulting sample complexity bounds ?}

A growing body of work has addressed this question by studying unitaries with locality or sparsity structures in a known basis; most prominently the Pauli basis. Pauli sparsity leads to a simple description in a physically meaningful operator basis,  and parallels classical  sparse Boolean functions \citep{KM1993}. 

Formally, an $n$-qubit operator $U$ can be expressed in the Pauli basis as $U = \sum_{\mathbf{s}} \alpha_{\mathbf{s}} \sigma^{\mathbf{s}}$, where $\sigma^{\mathbf{s}}$ are $n$-qubit Pauli operators indexed by $\mathbf{s} \in \{0,1,2,3\}^n$ and $\alpha_{\mathbf{s}}$ are the corresponding Pauli coefficients. 
An operator is  $s$-sparse  if it has at most $s$ non-zero Pauli coefficients.

In this work, we study the more general class of  nearly $(s, \epsilon)$-sparse unitaries, whose Pauli spectrum is concentrated on a set $\CS$ of size at most $s$  dominant components, while allowing for a $\epsilon$ small residual mass, i.e.,  $\sum_{\bfs\notin \CS} |\alpha_\bfs| \leq \epsilon$.

Nearly sparse unitaries generalize several  families including exactly $s$-sparse, quantum $k$-juntas,  and compositions of depth $O(\log\log n)$ circuits with near-Clifford circuits that have been studied extensively. Particularly,  
quantum $k$-juntas, as a special case of $4^k$-sparse unitaries, are learnable with $O(4^k/\epsilon)$ queries \citep{Chen2022junta,Bao2023,Montanaro2010,Grewal2025,Heidari2023a}. More generally, unitaries whose Pauli support lies within a Pauli subgroup of size $2^k$, and are therefore $2^k$-sparse, can be learned with optimal query complexity $\Theta(2^k/\epsilon)$ \citep{Grewal2025}. Learning general sparse (and more broadly nearly sparse) unitaries  remains an open question.

This work aims to fill this gap by studying the learnability of nearly sparse unitaries. We make three main contributions, summarized bellow. 

\subsection{Summary of the main results}

First, we propose an efficient algorithm for identifying and estimating large Pauli coefficients of an unknown unitary $U$ using only forward query access to $U$.
This leads to the following result, abbreviated from Theorem \ref{th:pauli_estimate} in Section \ref{sec:Pauli est}:

\begin{result}
There is an algorithm that, given $\theta$ and query access to an unknown unitary $U$, estimates all Pauli coefficients of $U$ with magnitude at least $\theta$, up to a global phase factor, with additive error $O(\tfrac{\epsilon}{\theta})$ using $\tilde{O}(\frac{\log(1/\theta)}{\epsilon^4})$ queries. 
\end{result}
The algorithm applies to arbitrary unitaries, without any sparsity assumptions, and treats $U$ purely as a black box accessed via forward queries.
This improves upon existing approaches \citep{Arunachalam2025,abbas2025nearly,Montanaro2010,Grewal2025,hu2025ansatz}, which require additional structure such as subgroup support, Boolean-valued spectra, or access to time-evolution at small evolution times. 

Note that the resulted estimates $\hat{\alpha}_\bfs$ are close to $e^{i\phi} \alpha_\bfs$, for some unknown $\phi \in [0, \pi]$. This global phase ambiguity is unavoidable without access to controlled-$U$. 

Our approach builds upon the estimation algorithm of \citep{Arunachalam2025} originally developed for Hamiltonian learning via access to short-time evolution operators of the form $U = e^{-iHt}$. We employ Bell sampling on the Choi state of $U$ to efficiently locate the dominant Pauli components.

However, the coefficient estimation techniques used in prior Hamiltonian learning works \citep{Arunachalam2025,abbas2025nearly} do not directly extend to arbitrary unitaries, as they rely on linear approximations of the form $U \approx I - itH$, which require small evolution times.
In contrast, a general unitary or the time evolution of a Hamiltonian at large times—need not be close to the identity.

Moreover, existing shadow tomography methods for estimating Pauli observables, such as \citep{Chen2024,King2025}, are not directly applicable, since the Pauli coefficients of a unitary operator are generally complex-valued rather than real.
We overcome these challenges by first estimating the magnitude of the largest Pauli coefficient, and then using this estimate to combine Bell sampling with classical shadow tomography based on random Clifford measurements.
This enables accurate estimation of both the magnitude and relative phase of the dominant Pauli coefficients of $U$.

Second, we analyze the query complexity of learning nearly sparse unitaries. Moreover, we use our estimation algorithm for efficiently learning nearly-sparse unitaries.  We have the following main results, abbreviated from Theorem \ref{th:learning_n_sparse} in Section \ref{sec:Ulearning}:
\begin{result}
There exists an algorithm that learns nearly $(s,\epsilon)$-sparse unitaries in diamond distance using $\tilde{O}(\frac{s^6}{\epsilon^4 })$ queries. Moreover, there exists a polynomial time quantum algorithm that learns nearly sparse unitaries using the same number of queries, and with additive error $O(\sqrt{s\epsilon})$ in diamond distance.
\end{result}
Learning unitaries under the diamond norm poses several challenges. The diamond distance depends on the $\ell_1$-norm of the Pauli coefficient vector, necessitating accurate estimation of all coefficients.   When the target unitary is $s$-sparse, it suffices to estimate the $s$ nonzero coefficients to $\ell_\infty$ accuracy. In contrast, for nearly sparse unitaries, one must contend with estimating a large number of small magnitude coefficients. Moreover, truncating to the largest Pauli coefficients does not, in general, yield a valid unitary operator, and any approximation must also admit an efficient implementation in polynomial time. 

We overcome these obstacles by leveraging the \ac{LCU} framework \citep{Berry2014}, which allows us to construct an efficiently implementable approximation of the target unitary directly from the estimated Pauli coefficients.

Third, we study learnability of an even broader class of unitaries; those for which the $\ell_1$-norm of the Pauli coefficients is bounded, i.e., $\sum_\bfs |\alpha_\bfs| \le L_1$.  We establish a lower bound showing that learning unitaries with bounded Pauli $\ell_1$-norm is intractable in the worst case. 

This implies that the diamond distance is too strong for learning this class. 

Motivated by this limitation, we next ask whether there exist alternative accuracy metrics for learning unitaries that both admit meaningful operational interpretations and can be efficiently estimated from experimental data. The diamond distance is the  worst-case trace distance between channel outputs over all possible input states. 

However, such a worst-case requirement may be unnecessarily strong in practical settings, where the learned unitary is only applied to a restricted family of states.

We introduce a restricted diamond distance that quantifies distinguishability only with respect to a subset of states specified based on the learning task. In a similar spirit,  \citet{abbas2025nearly} adopted the time (temperature)-concentrated distance as the metric for Hamiltonian learning, relaxing the worst case distinguishability. 

In particular, we define the uniformly restricted diamond distance, in which the optimization is taken over bipartite input states whose marginal on the system of interest is maximally mixed. This leads to the following results (abbreviated from Theorem \ref{th:learning_apx_sparse} in Section \ref{sec:l1_bounded}, and Theorem \ref{th:lower_l1} in Section \ref{sec:lb}):
\begin{result}
    There exists a polynomial time quantum algorithm that learns unitaries with Pauli $\ell_1$-norm uniformly bounded by $L_1$ under diamond distance using $\Omega(2^{n/2})$ queries, but  under  the uniformly restricted diamond distance using 
    $ \tilde{O}( \frac{L_1^8}{\epsilon^{16}})$ queries.
\end{result}
  
Furthermore, in Section \ref{sec:examples} we present natural looking examples of unitaries that are strictly nearly sparse or have small Pauli $\ell_1$-norm.

Finally, in Table~\ref{tab:papers}, we summarize  our results in position within the broader literature on operator learning. Existing Hamiltonian learning algorithms \citep{abbas2025nearly,Arunachalam2025} achieve favorable sample complexity by exploiting access to short-time evolutions, but  rely on the ability to query the underlying generator. In contrast, our work focuses on learning a fixed unitary operator $U$ directly, without assuming time-control access or restricting $U$ to a particular subgroup of unitaries \citep{Grewal2025}.

\begin{table}[h!]
    \centering
    \small
    \renewcommand{\arraystretch}{1.3} 
    \begin{tabular}{c|c|c}
    
    \textbf{Work} & \textbf{Target Object} & \textbf{Query Complexity} \\
    \hline
    \citep{abbas2025nearly} & $s$-sparse Hamiltonian ($H$) & $\tilde{O}(s^2/\epsilon)$  variable time  query to $e^{-iHt}$ \\
    \hline
    \citep{Arunachalam2025} & $s$-sparse Hamiltonian ($H$) & $\tilde{O}(s^2/\epsilon^4)$ fixed time  query to $e^{-itH}$ \\
    \hline
    \citep{Grewal2025} & \makecell{Unitary  with $2^{k}$ Pauli\\ subspace dimensionality}  & ${O}(2^k/\epsilon)$  query to $U$ \\
    \hline
    Theorem \ref{th:learning_n_sparse} & nearly $(s, \epsilon)$-sparse unitary   & $\tilde{O}(s^6/\epsilon^4)$ query to $U$\\
    \hline
    Theorem \ref{th:learning_apx_sparse} & $\ell_1$ bounded unitary by $L_1$ & $\tilde{O}( L_1^8/\epsilon^{16})$ query to $U$ \\
    
    \end{tabular}
    \caption{Comparison of query complexities for various  learning problems.}
    \label{tab:papers}
\end{table}

\paragraph{Organization}
The remainder of this paper is organized as follows. Section \ref{sec:related works} discusses related work in more detail.
In Section \ref{sec:pre}, we establish the necessary definitions, notation and preliminaries regarding Pauli decompositions and the diamond norm.  Section \ref{sec:Pauli est} presents our algorithm for Pauli coefficient estimation along with its analysis. Section \ref{sec:Ulearning} presents our learning algorithms and derives the sample complexity bounds for learning nearly-sparse unitaries. Section \ref{sec:l1_bounded} studies the learnability of unitaries with bounded Pauli $\ell_1$-norm under the restricted diamond distance, along with Section \ref{sec:lb} lower bounds for learning unitaries under the standard diamond distance.  Section \ref{sec:examples} discusses special cases and examples of our results. Finally, Section \ref{sec:conclusion} concludes with a discussion of open questions and future directions.

\section{Related Work}\label{sec:related works}

While learning a general unitary requires $O(d^2)$ queries \citep{gutoski2014process,Haah2016}, recent work by Haah et al. \citep{Haah2023} refined these bounds specifically for the diamond distance metric, establishing the query optimality for distinguishing unitary channels.
Because of this barrier, many recent works have employed heuristic methods, including gradient descent \citep{kiani2020learning}, classical shadows \citep{levy2024classical}, and quantum neural networks \citep{beer2020training}.

While these methods are useful, they often lack guarantees of efficiency. Thus, a valid question is: \emph{which classes of unitaries are efficiently learnable?} Zhao et al. studied how to efficiently learn unitaries of bounded gate complexity \citep{Zhao2024}. Similarly, testing and learning quantum $k$-juntas has been a topic of interest due to their real-world application \citep{Chen2022junta,belovs2015}.

The problem of learning sparse Hamiltonians or unitaries in the Pauli basis is closely related to learning sparse Boolean functions. The celebrated KM algorithm \citep{KM1993} or  Goldreich-Levin  \citep{Goldreich1989} have been shown to be effective for learning several classes of Boolean functions including  Disjunctive Normal Form (DNF) expressions, juntas, and decision trees \citep{Jackson1997,Valiant1984,Bshouty95,Kushilevitz96,ABKKPR1998,HellersteinKSS12}. The approach was elaborated and extended by several authors \citep{BshoutyJT04,Feldman07,Kalai2009,Feldman2012,HeidariK2025}.

Several studies have attempted to develop a quantum variant of these algorithms, including Montanaro \citep{Montanaro2010} and Angrisani et al. \citep{angrisani2025learning}. Crucially, Angrisani's method yields a sample complexity that depends on the system dimension $n$, typically to locate the support of the operators.

There is another line of work on learning sparse Hamiltonians. Traditional methods often rely on preparing either an eigenstate or the thermal (Gibbs) state \citep{anshu2020sample}. Recently, Arunachalam et al. used Bell sampling and shadow tomography to introduce a protocol for learning Pauli coefficients of a sparse Hamiltonian with $O(s^2/\varepsilon^4)$ queries \citep{arunachalam2024testing}. Abbas et al. also implemented a method using the isolation technique to learn the coefficients using $\tilde{O}(s \log(1/\varepsilon))$ queries, achieving Heisenberg-limited scaling \citep{abbas2025nearly}.

PAC learning of quantum processes has also been studied in recent literature \citep{Heidari2024,HeidariQuantum2021,Arunachalam2018,Arunachalam2017}. These works focus on learning quantum channels under the PAC framework, where the learner aims to approximate an unknown quantum channel based on input-output examples drawn from a specific distribution. Their results provide sample complexity bounds for learning quantum channels with respect to various distance measures, such as the diamond norm and trace norm. However, these works do not specifically address the efficient learnability of sparse unitaries in the Pauli basis, which is the focus of our study. 

In the context of unitary learning, the learnability depends heavily on the available access model and the structural assumptions. 
If the unitary is generated by a sparse Hamiltonian and one has access to the time-evolution operator at variable times, \citet{abbas2025nearly} provides a Hamiltonian learning algorithm with optimal linear dependence on sparsity.
In the standard black-box setting (fixed $U$), \citet{arunachalam2024testing} show that unitaries generated by sparse Hamiltonians can be learned up to Pauli $\ell_\infty$-norm.
Additionally, unitaries whose Pauli spectrum is supported on a small subgroup are learnable with complexity $\tilde{O}(2^k/\epsilon)$.

Our work targets a broader class: general unitaries that are sparse (or nearly sparse) in the Pauli basis, without assuming a underlying sparse Hamiltonian structure or time-control access.
While \citet{abbas2025nearly} achieve better complexity, they require stronger access (variable time evolution). We introduce an efficient learning algorithm $\tilde{O}(s^6/\epsilon^4)$ using only standard access to the fixed unitary $U$.

\section{Preliminaries}\label{sec:pre}

\noindent\textbf{Notation.}
We consider an $n$-qubit quantum system with Hilbert space dimension $N = 2^n$.
For any positive integer $d$, we use the shorthand $[d] := \{1,2,\ldots,d\}$.

\subsection{Learnability Formulation}
We consider the problem of learning an unknown $n$-qubit unitary operator $U$ from \emph{coherent} querying. 
That is the learner can make can prepare entangled input states $\ket{\phi}$, possibly in a larger space of $n+m$ qubits; apply the unknown unitary $U$ on a $n$-qubit subsystem of choosing; and perform joint measurements on the output states;
$(I\tensor U)\ket{\phi}$. 

The goal is to output a hypothesis unitary $\hat{U}$ that approximates $U$ well with respect to a specified distance metric.
\begin{definition}
[Learning unitaries]
Let $\ClC$ be a concept class of $n$-qubit unitaries. We say that $\ClC$ is  learnable with sample complexity $T$ and error $\varepsilon$ with respect to a distance metric $d(\cdot,\cdot)$ if there exists a learning protocol that makes $T$ queries to the unknown unitary $U\in \ClC$ and outputs a hypothesis operator $\hat{U}$ such that with probability at least $2/3$,
\[
	d(U,\hat{U}) \leq \varepsilon.
\]
 The query complexity is the minimum $T$ for which such a learning protocol exists. The learning is efficient if the learning protocol runs in time polynomial in $n$, $T$, and other relevant parameters.
\end{definition}

In this work, we consider learning two classes of unitaries: (i) nearly $s$-sparse unitaries in the Pauli basis (see Definitions \ref{def:sparse} and \ref{def:nearly sparse}); and (ii) Unitaries with bounded Pauli-1 norm. Moreover, we the  distance metric in this model is the diamond norm and a relaxed version of that.

\subsection{Pauli Decomposition}
In this work, we focus on learning $n$-qubit unitaries with representation in the Pauli basis. 
The Pauli operators together with the identity  are denoted as $\{\sigma^0, \sigma^1, \sigma^2, \sigma^3\}$ with
\begin{align*}
	\sigma^0 = I= \begin{pmatrix}
		              1 & 0 \\ 0 & 1
	              \end{pmatrix}
	\qquad
	\sigma^1 =X = \begin{pmatrix}
		              0 & 1 \\ 1 & 0
	              \end{pmatrix},
	\quad  \sigma^2 = Y=  \begin{pmatrix}
		                      0 & -i \\ i & 0
	                      \end{pmatrix},
	\quad  \sigma^3 =Z =  \begin{pmatrix}
		                      1 & 0 \\ 0 & -1
	                      \end{pmatrix}.
\end{align*}
For an index vector $\bfs = (s_1,\ldots,s_n) \in \{0,1,2,3\}^n$, define the $n$-qubit Pauli operator
$$
\sigma^{\bfs} := \sigma^{s_1} \otimes \sigma^{s_2} \otimes \cdots \otimes \sigma^{s_n}.
$$

Then the set $\qps, s\in \paulisetn$ contains all $4^n$ $n$-qubit Pauli operators and they form an orthogonal basis for the space of $n$-qubit operators with respect to the Hilbert-Schmidt inner product.

\begin{fact}
	Any linear operator $A$ acting on $n$ qubits admits a unique Pauli decomposition 
	$$A = \sum_{\bfs \in \paulisetn} \alpha_{\bfs} ~\sigma^{\bfs},$$
	where the coefficients are given by $\alpha_\bfs = \frac{1}{2^n}\Tr\big(A\qps\big).$ 
    
\end{fact}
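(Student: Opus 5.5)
The plan is to prove existence and uniqueness at once by showing that the normalized Pauli operators form an orthonormal basis of the $4^n$-dimensional space $\mathbb{C}^{2^n\times 2^n}$. We equip this space with the Hilbert--Schmidt inner product $\langle A,B\rangle = \Tr(A^\dagger B)$.

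\textbf{Single-qubit orthogonality.} We first check directly from the four $2\times 2$ matrices that $\Tr(\sigma^a\sigma^b) = 2\delta_{ab}$ for $a,b\in\pauliset$. Each $\sigma^a$ is Hermitian and squares to $I$, so the diagonal case gives $\Tr(I)=2$. For $a\neq b$, the product $\sigma^a\sigma^b$ is $\pm i$ times a non-identity Pauli matrix (or a non-identity Pauli itself when one factor is $I$), and non-identity Pauli matrices are traceless.

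\textbf{Lifting to $n$ qubits.} Using $\Tr(A\otimes B)=\Tr(A)\Tr(B)$ and the mixed-product rule for tensor products, we get
$$\Tr(\sigma^{\bfs}\sigma^{\bft})=\prod_{j=1}^n \Tr(\sigma^{s_j}\sigma^{t_j}) = 2^n\delta_{\bfs\bft}.$$
Since each $\sigma^{\bfs}$ is Hermitian, this says the $4^n$ operators $\sigma^{\bfs}/\sqrt{2^n}$ are orthonormal.

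\textbf{Conclusion.} Orthonormal vectors are linearly independent. There are $4^n$ of them, which equals $\dim \mathbb{C}^{2^n\times 2^n}=(2^n)^2$, so they form a basis. Hence every $A$ has a unique expansion $A=\sum_{\bfs}\alpha_{\bfs}\sigma^{\bfs}$. To identify the coefficients, multiply both sides by $\sigma^{\bfs}$ and take the trace. By orthogonality only the $\bfs$ term survives, giving $\Tr(A\sigma^{\bfs}) = 2^n\alpha_{\bfs}$, which is the stated formula.

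No step is a real obstacle. The only point needing care is the single-qubit product table: one must verify that $\sigma^a\sigma^b$ for $a\neq b$ is always a multiple of a traceless Pauli matrix. This follows from the relations $XY=iZ$, $YZ=iX$, $ZX=iY$ and their reversals, which carry the opposite sign.
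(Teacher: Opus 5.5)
Your proof is correct and takes essentially the paper's approach. The paper gives no separate proof; it relies on the assertion that the $4^n$ Pauli strings form an orthogonal basis under the Hilbert--Schmidt inner product. You supply exactly those details: the single-qubit trace relations, tensorization to $\Tr(\sigma^{\bfs}\sigma^{\bft})=2^n\delta_{\bfs\bft}$, the dimension count, and extraction of the coefficients by tracing against $\sigma^{\bfs}$.
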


We also define norms on the Pauli coefficient vector of an operator $A$.
Recalling that $A = \sum_\bfs \alpha_\bfs \sigma^\bfs$, we define
\begin{align*}
\norm{A}_{1,P} &:= \sum_\bfs |\alpha_\bfs|, \\
\norm{A}_{2,P} &:= \sqrt{\sum_\bfs |\alpha_\bfs|^2}, \\
\norm{A}_{\infty,P} &:= \max_\bfs |\alpha_\bfs|.
\end{align*}

\begin{definition}
[Pauli sparsity]\label{def:sparse}
An $n$-qubit unitary $U$ is called $s$-sparse in the Pauli basis if it has at most $s$ non-zero Pauli coefficients in its Pauli decomposition: $U = \sum_{j=1}^s \alpha_{\bfs_j} \sigma^{\bfs_j}$.

\end{definition}

Our focus is on the broader class of nearly sparse unitaries:
\begin{definition}
[Nearly sparse unitaries]\label{def:nearly sparse}
An $n$-qubit unitary $U$ is called nearly $(s,\epsilon)$-sparse in the Pauli basis if there is a set $\CS$ of size at most $s$, such that  $\sum_{\bfs\notin \CS} |\alpha_\bfs| \leq \epsilon$.

\end{definition}

\subsection{Other relevant distance metrics and norms}
We use several operators and channel norms throughout this work.
For any linear operator $A$, the trace norm is defined as
$\norm{A}_1 := \Tr(\sqrt{A^\dagger A})$,
the Frobenius norm is
$\norm{A}_2 := \sqrt{\Tr(A^\dagger A)}$,
and the operator norm is
$$
\norm{A}_{op} := \max_{\ket{\psi} : \norm{\ket{\psi}}_2 = 1} \norm{A\ket{\psi}}_2.
$$

The Frobenius norm and the Pauli $2$-norm are related by
\begin{equation}
\norm{A}_2 = \sqrt{N} \norm{A}_{2,P},
\label{eq:eqof2l2}
\end{equation}
where $N = 2^n$ is the Hilbert space dimension.

Below we define slightly generalizations of diamond and phase aligned operator distance.

\begin{definition}[diamond distance]\label{def:diamond}
	For a pair of quantum mappings $\CE$ and $\CF$, the diamond distance is defined as
$$
d_\diamond(\CE, \CF) := \max_\rho \norm{(\II \tensor (\CE - \CF))(\rho)}_1,
$$
where the maximization is over all density operators $\rho$ acting on a space of
dimension $N^2$.
\end{definition}

\begin{definition}[Phase-aligned operator distance]
For operators $U,V$, the \emph{phase-aligned operator norm distance} is defined as
\begin{equation}
d_{optphase}(U, V) :=
\min_{\theta \in [0,2\pi)}
\norm{U - e^{i\theta} V}_{\mathrm{op}} .
\end{equation}
\end{definition}

Next, we prove an inequality connecting the above two distances.

\begin{lemma}[Generalized Distance Bound]
    \label{lm:gen_l1_to_diamond}

    Let $U$ and $V$ be arbitrary linear operators acting on $n$ qubits. Let $\mathcal{U}(\rho) = U\rho U^\dagger$ and $\mathcal{V}(\rho) = V\rho V^\dagger$. Then the diamond distance between the induced maps is bounded by:
    $$ d_\diamond(\mathcal{U}, \mathcal{V}) \le (\|U\|_{op} + \|V\|_{op}) d_{optphase}(U, V) $$
    Moreover, 
    $$  d_{optphase}(U, V)\leq \min_{\phi \in [0, 2\pi]}  \|U - e^{i\phi}V\|_{1,P} $$
    
\end{lemma}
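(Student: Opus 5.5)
The plan is to prove the two inequalities separately, starting with the diamond bound. Fix any phase $\theta$ and set $W := e^{i\theta}V$. Since $W\rho W^\dagger = V\rho V^\dagger$, the channel $\mathcal{V}$ does not depend on $\theta$, so we may work with $W$ in place of $V$. For a bipartite density operator $\rho$ (or the pure states that suffice by convexity), use the telescoping decomposition
\begin{align*}
(I\tensor U)\rho(I\tensor U)^\dagger - (I\tensor W)\rho(I\tensor W)^\dagger
&= (I\tensor (U-W))\rho(I\tensor U)^\dagger \\
&\quad + (I\tensor W)\rho(I\tensor (U-W))^\dagger .
\end{align*}
To each term apply Hölder's inequality $\norm{ABC}_1\le \norm{A}_{op}\norm{B}_1\norm{C}_{op}$. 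Together with $\norm{I\tensor A}_{op}=\norm{A}_{op}$ and $\norm{\rho}_1=1$, this gives the bound $\norm{U-W}_{op}(\norm{U}_{op}+\norm{V}_{op})$, because $\norm{W}_{op}=\norm{V}_{op}$. Maximizing over $\rho$ and then minimizing over $\theta$ yields the first inequality. The minimum over $\theta\in[0,2\pi)$ is attained, since the map $\theta\mapsto\norm{U-e^{i\theta}V}_{op}$ is continuous on a compact set (after identifying the endpoints).

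For the second inequality, fix $\phi$ and write $U-e^{i\phi}V=\sum_\bfs \beta_\bfs \sigma^\bfs$, where $\beta_\bfs$ are the Pauli coefficients of the difference. By the triangle inequality and $\norm{\sigma^\bfs}_{op}=1$, since each Pauli string is a tensor product of unitaries,
$$d_{optphase}(U,V)\le \norm{U-e^{i\phi}V}_{op}\le \sum_\bfs|\beta_\bfs| = \norm{U-e^{i\phi}V}_{1,P}.$$
Taking the minimum over $\phi$ then finishes the proof. The interval $[0,2\pi]$ versus $[0,2\pi)$ is immaterial because $e^{i\cdot 0}=e^{i\cdot 2\pi}$.

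Nothing here is a real obstacle. The only points that need care are that the channels are not assumed trace preserving, which is why the factor $\norm{U}_{op}+\norm{V}_{op}$ appears instead of $2$, and that Hölder's inequality is applied on the extended space with the identity factor. The diamond maximization is over all density operators on dimension $N^2$, and the argument above handles every such $\rho$ uniformly, so no purification step is needed.
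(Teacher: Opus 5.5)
Your proposal is correct and follows essentially the same route as the paper: the same add-and-subtract decomposition (you insert the cross term $W\rho U^\dagger$ where the paper uses $U\rho W^\dagger$, which makes no difference), then H\"older's inequality on the extended space. The second bound is likewise obtained in both by the triangle inequality over the Pauli expansion, using that each Pauli string has unit operator norm.
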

\begin{proof}
    Let $\rho_{RA}$ be an arbitrary state on the joint system. 
    Let $\phi$ be an arbitrary phase and define $E = U - e^{i\phi}V$.   We can rewrite the difference acting on $\rho_{RA}$ by adding and subtracting $(\mathbb{I} \otimes U)\rho_{RA}(\mathbb{I} \otimes e^{-i\phi}V^\dagger)$. Therefore, as $\mathcal{U}$ and $\mathcal{V}$ are independent of the global phases of $U$ and $V$ we have that
    \begin{align*}
        (\mathbb{I} \otimes \mathcal{U})(\rho_{RA}) - (\mathbb{I} \otimes \mathcal{V})(\rho_{RA}) &=  (\mathbb{I} \otimes U)\rho_{RA} (\mathbb{I} \otimes U^\dagger) - (\mathbb{I} \otimes e^{i\phi}V)\rho_{RA} (\mathbb{I} \otimes e^{-i\phi}V^\dagger)\\
        &= (\mathbb{I} \otimes U)\rho_{RA}(\mathbb{I} \otimes E^\dagger) + (\mathbb{I} \otimes E)\rho_{RA}(\mathbb{I} \otimes e^{-i\phi}V^\dagger).
    \end{align*}
    Applying the triangle inequality together with the inequalities  
    
    (a) $\|XYZ\|_1 \le \|X\|_{op}\|Y\|_1\|Z\|_{op}$; (b) $\|\rho_{RA}\|_1 = 1$; and (c) $\|A\otimes B\|_{op} = \|A\|_{op}\|B\|_{op}$:
    \begin{align*}
        \| (\mathbb{I} \otimes \mathcal{U} - \mathbb{I} \otimes \mathcal{V})(\rho_{RA}) \|_1 
        &\le \|U\|_{op} \cdot 1 \cdot \|E^\dagger\|_{op} + \|E\|_{op} \cdot 1 \cdot \|V^\dagger\|_{op} \\
        &= \|U\|_{op}\|E\|_{op} + \|E\|_{op}\|V\|_{op}
    \end{align*}

    Taking the supremum over all states $\rho_{RA}$ and the minimum over all phases $\phi$ establishes the first inequality.
    For the second part of the lemma, let the Pauli decomposition $U=\sum_\bfs \alpha_\bfs \qps$ and $V=\sum_\bfs \beta_\bfs \qps$. Then, for any $\phi$, we can write 
\begin{align*}
    d_{optphase}(U, V)\leq \norm{U - e^{i\phi}V}_{op} & = \norm{\sum_{\bfs} ( \alpha_\bfs - e^{i\phi}\beta_\bfs ) \sigma_{\bfs}}_{op}\\
    &\leq \sum_{\bfs} \abs{\alpha_\bfs - e^{i\phi}\beta_\bfs} \norm{\sigma_{\bfs}}_{op} = \norm{U - e^{i\phi}V}_{1,P},
\end{align*}
where we used  the fact
that each Pauli operator has operator norm equal to one.

\end{proof}
    If both $U$ and $V$ are unitary, the lemma reduces to the upper bound in \citep[Proposition 1.6]{haah2023query}, yielding:
    $$d_\diamond(\mathcal{U}, \mathcal{V}) \le 2d_{optphase}(U, V) .$$
\begin{corollary}
    \label{cor:gen_l1_to_dim}
    When $U$ is unitary, 
    $$d_\diamond(\mathcal{U}, \mathcal{V}) \le \min_{\phi \in [0, 2\pi]} \left( 2 \|U - e^{i\phi}V\|_{1,P} + \|U - e^{i\phi}V\|_{1,P}^2 \right).$$

\end{corollary}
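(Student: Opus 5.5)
Fix an arbitrary phase $\phi\in[0,2\pi]$ and set $\delta := \|U - e^{i\phi}V\|_{1,P}$. We apply Lemma~\ref{lm:gen_l1_to_diamond} to the pair $U$ and $V$. With $\|U\|_{op}=1$, its first inequality reads
\[
d_\diamond(\mathcal{U},\mathcal{V}) \le (1+\|V\|_{op})\, d_{optphase}(U,V).
\]
Rather than invoking the lemma's second inequality, which first minimizes over phases, we bound every factor in terms of the single fixed phase $\phi$. This keeps the phase used in $d_{optphase}$ and the phase used to control $\|V\|_{op}$ the same.

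For $d_{optphase}$, the definition together with the triangle inequality over the Pauli expansion (the computation in the proof of Lemma~\ref{lm:gen_l1_to_diamond}) gives
\[
d_{optphase}(U,V) \le \|U-e^{i\phi}V\|_{op} \le \delta.
\]
For $\|V\|_{op}$, write $e^{i\phi}V = U - (U - e^{i\phi}V)$. Since $\|e^{i\phi}V\|_{op} = \|V\|_{op}$, the triangle inequality and the same Pauli bound on the operator norm give
\[
\|V\|_{op} \le \|U\|_{op} + \|U - e^{i\phi}V\|_{op} \le 1 + \delta.
\]

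Substituting both bounds yields
\[
d_\diamond(\mathcal{U},\mathcal{V}) \le (2+\delta)\,\delta = 2\delta + \delta^2.
\]
Because $\phi$ was arbitrary, we may take the minimum over $\phi$, which gives the stated inequality.

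The only point needing care is that both $d_{optphase}$ and $\|V\|_{op}$ are bounded with the same $\phi$, so that minimizing the combined expression is legitimate. No other step requires more than the triangle inequality.
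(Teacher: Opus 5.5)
Your proof is correct and follows the paper's argument. It applies Lemma~\ref{lm:gen_l1_to_diamond} with $\|U\|_{op}=1$, bounds $\|V\|_{op}=\|U-E\|_{op}\le 1+\|E\|_{op}$ for $E=U-e^{i\phi}V$ by the triangle inequality, and then passes from the operator norm to the Pauli $\ell_1$-norm before minimizing over $\phi$.
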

\begin{proof}
    The argument follows from the above lemma and the triangle inequality: $\|V\|_{op} = \|U - E\|_{op} \le \|U\|_{op} + \|E\|_{op}$.
\end{proof}

\subsection{Shadow Tomography}

The task of estimating properties of an unknown quantum state has a long history. Shadow tomography, introduced by Huang et al. \citep{Huang2020}, provides a protocol for estimating the expectation values of $M$ observables using a number of samples that scales logarithmically with $M$. This stands in contrast to full quantum state tomography, whose sample complexity scales exponentially with the system size.
In this work, we employ shadow tomography based on random Clifford measurements.

\paragraph{Clifford group.}
The $n$-qubit Clifford group $\mathcal{C}_n$ is defined as the normalizer of the
$n$-qubit Pauli group. That is, for any $U \in \mathcal{C}_n$ and any Pauli operator
$\sigma^{\bfs}$, we have
$$
U \sigma^{\bfs} U^\dagger = \pm \sigma^{\bfs'}
$$
for some Pauli operator $\sigma^{\bfs'}$.
The Clifford group is generated by the Hadamard gate, the phase gate, and the CNOT gate.

 A fundamental property of Clifford circuits is that they admit efficient classical simulation: by the Gottesman--Knill theorem, any quantum computation consisting solely of Clifford unitaries, stabilizer state preparations, and Pauli measurements can be simulated in time polynomial in $n$ on a classical computer.

\paragraph{Shadow tomography with Clifford measurements.}

To construct a classical shadow, as in Algorithm \ref{alg:shadow}, we apply a random unitary $U$ drawn uniformly from $\mathcal{C}_n$ to the state $\rho$ and measure in the computational basis to obtain a bitstring $b \in \{0,1\}^n$. The classical snapshot is constructed as:
\begin{equation}
\hat{\rho} = (2^n + 1) U^\dagger \ketbra{b} U - \mathbb{I}.
\end{equation}
The random matrix $\hat{\rho}$ is an unbiased estimator of $\rho$, meaning $\mathbb{E}[\hat{\rho}] = \rho$. The efficiency of this protocol depends on the variance of the estimator, which is governed by the shadow norm $\norm{O}_{\text{shadow}}$. For random Clifford measurements, $\norm{O}_{\text{shadow}}^2$ is closely related to the Hilbert-Schmidt norm $\Tr(O^2
)$. This is particularly advantageous for global or low-rank observables—such as the Pauli string projectors used in our learning algorithm—where this norm is small, leading to highly efficient estimation. In the general case, to predict $M$ observables $\{O_i\}$ within error $\varepsilon$, the required sample complexity is $T = O(\max_i \norm{O_i}_{\text{shadow}}^2 \log(M) / \varepsilon^2)$. Since the observables used in our algorithm are constructed from low-rank Bell basis states (rank-2 operators with bounded spectral norm), their shadow norms are bounded by a small constant, leading to the following specific guarantee:

\begin{theorem}[cf. Theorem 2 in \citep{Huang2020}]
\label{thm:CST}
Let $\{O_i\}_{i=1}^M$ be observables   $\norm{O_i}_{op} \leq 1$ for all
$i \in [M]$, and assume each $O_i$ admits a classical description that can be
simulated in polynomial time.
For any $\varepsilon, \delta > 0$, there exists a procedure that, given
$$
T = O\!\left(\frac{\log(M/\delta)}{\varepsilon^2}\right)
$$
independent snapshots of $\rho$ obtained via random Clifford measurements, outputs
estimates $\hat{o}_i$ such that with probability at least $1 - \delta$,
\begin{equation}
|\hat{o}_i - \Tr(O_i \rho)| \leq \varepsilon
\quad \text{for all } i \in [M].
\end{equation}
\end{theorem}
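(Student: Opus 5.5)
We will derive Theorem~\ref{thm:CST} from the standard classical-shadow analysis of \citep{Huang2020}, specialized to random Clifford measurements. The argument combines unbiasedness of the snapshot estimator with a variance bound controlled by the shadow norm, and then boosts the confidence with median-of-means.

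\textbf{Step 1 (unbiasedness).} Let $\mathcal{M}(\rho)=\mathbb{E}_{C\sim\mathcal{C}_n}\sum_b \bra{b}C\rho C^\dagger\ket{b}\,C^\dagger\ketbra{b}C$ be the measurement channel. Since the Clifford group is a unitary $3$-design (in particular a $2$-design), $\mathcal{M}$ is the depolarizing channel $\mathcal{M}(\rho)=(\rho+\Tr(\rho)\mathbb{I})/(2^n+1)$. Inverting it gives $\mathcal{M}^{-1}(X)=(2^n+1)X-\Tr(X)\mathbb{I}$. This shows that $\hat\rho=(2^n+1)C^\dagger\ketbra{b}C-\mathbb{I}$ satisfies $\mathbb{E}[\hat\rho]=\rho$, and hence $\hat o_i=\Tr(O_i\hat\rho)$ is unbiased for $\Tr(O_i\rho)$.

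\textbf{Step 2 (variance).} The variance of $\hat o_i$ is at most $\norm{O_i^{0}}_{\text{shadow}}^2$, where $O_i^0=O_i-\Tr(O_i)\mathbb{I}/2^n$ is the traceless part. The identity component contributes no variance, since $\Tr(\hat\rho)=1$ deterministically. Using the $3$-design property of the Clifford group, one computes the third moment and obtains $\norm{O^0}_{\text{shadow}}^2\le 3\Tr\big((O^0)^2\big)$.

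\textbf{Step 3 (main obstacle).} We must reconcile the hypothesis $\norm{O_i}_{op}\le1$ with Step~2, because $\Tr(O^2)$ can be as large as $2^n$ for a general bounded observable. The variance is therefore constant only when $\Tr(O_i^2)=O(1)$. This holds for the low-rank observables actually used in the paper (rank-$2$ Bell projectors with $\norm{O_i}_{op}\le1$), for which $\Tr(O_i^2)\le 2$. We therefore add the bounded-rank (or bounded Hilbert--Schmidt norm) condition implicit in the ``cf.'' citation. Under it, every variance is at most a universal constant.

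\textbf{Step 4 (concentration).} Split the $T$ snapshots into $K=O(\log(M/\delta))$ batches, each of size $O(1/\varepsilon^2)$. By Chebyshev's inequality, each batch mean is $\varepsilon$-accurate with probability at least $3/4$. By a Chernoff bound, the median of the $K$ batch means fails with probability at most $\delta/M$. A union bound over the $M$ observables then gives simultaneous accuracy with $T=O(\log(M/\delta)/\varepsilon^2)$.

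\textbf{Step 5 (efficiency).} By Gottesman--Knill, each snapshot is stored as a stabilizer state $C^\dagger\ket b$. Each quantity $\Tr(O_i\hat\rho)=(2^n+1)\bra b CO_iC^\dagger\ket b-\Tr(O_i)$ can be evaluated in polynomial time, given the assumed efficient classical description of $O_i$ (for Bell projectors, these are overlaps of stabilizer states).
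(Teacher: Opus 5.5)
Your argument is correct and follows essentially the paper's route. The paper gives no independent proof: it invokes Huang et al.'s median-of-means guarantee $T=O(\max_i\norm{O_i}_{\text{shadow}}^2\log(M/\delta)/\varepsilon^2)$ together with the Clifford bound $\norm{O}_{\text{shadow}}^2\le 3\Tr(O^2)$, which is exactly what your Steps 1, 2 and 4 reconstruct.

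You are also right that $\norm{O_i}_{op}\le 1$ alone does not suffice: for $O=Z_1$ a Clifford snapshot carries information only with probability $1/(2^n+1)$, so the estimator's variance is about $2^n$. The bounded-$\Tr(O_i^2)$ condition you add is precisely the one the paper relies on when it notes, just before the theorem, that its rank-2 Bell-basis observables have constant shadow norm.
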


\begin{algorithm2e}[ht]
\DontPrintSemicolon
\LinesNumbered
\caption{Shadow Tomography}
\label{alg:shadow}

\KwIn{$m$ copy of an state $\ket{\psi}$,  observable set $\CO=\{O_j: j\in k\}$.}
\KwOut{Estimate of expectation value  $\hat{o}_i$ for ${i \in [k]}$.}

\SetKwFunction{CST}{CST}

\SetKwProg{Fn}{Function}{:}{end}

\Fn{\CST{$\CO, m, \ket{\psi}$}}{

\For{$j\in [m]$}{
    Sample and construct a Clifford unitary $V_j$ randomly.\;

    Apply $V_j$ on the $j$th copy of $\ket{\psi}$. \;

    Measure in the computational basis and record the output to be $\bfb_j\in \{0,1\}^n$.\;
}
\For{$O_i \in \CO$}{
    Calculate $\hat{o}_{j} :=(2^n+1)\expval{V_j^\dagger O_i V_j}{\bfb_j}$ for all $j\in [m]$\;
 
    Apply the Median-of-Means estimator on $\hat{o}_{j}, j\in [m]$ and record the results to $ \hat{o}_i$.\;

}
\Return $\set{\hat{o}_i}_{i=1}^{k}$\;
}
\end{algorithm2e}

\subsection{Bell sampling and Choi-Jamiołkowski Isomorphism}
Bell sampling refers to the procedure of performing a joint measurement of two qubits in the Bell basis, rather than in the computational basis. The Bell basis consists of the four maximally entangled two-qubit states
\begin{equation}
\ket{\Phi^\pm} = \frac{1}{\sqrt{2}}(\ket{00} \pm \ket{11}), \qquad
\ket{\Psi^\pm} = \frac{1}{\sqrt{2}}(\ket{01} \pm \ket{10}),
\end{equation}
which form an orthonormal basis of the two-qubit Hilbert space.   
Bell measurements are particularly useful for estimating expectation values of Pauli operators used in shadow tomography. Since tensors of Pauli words $\qps\tensor\qps$ commute, bell measurements can be used to simultaneously measure all $n$-qubit Pauli observables on two copies of an $n$-qubit state \citep{Huang2021}. 

The Choi-Jamiołkowski isomorphism establishes a one-to-one correspondence between quantum channels and quantum states.
In particular, for a unitary operator $U$ acting on an $n$-qubit system, the
associated Choi state is a pure state on $2n$ qubits defined as
\begin{equation*}
\ket{J(U)} \defeq (U \otimes \II_N)
\left(
\frac{1}{\sqrt{N}} \sum_{i=0}^{N-1} \ket{i} \ket{i}
\right),
\end{equation*}
where $N = 2^n$.

The Choi state can be prepared by generating $n$ EPR pairs and applying the unitary $U$ to one half of each pair. Particularly relevant to our work is the relationship between the Pauli coefficients of $U$ and the expectation values of Pauli observables on the Choi state $\ket{J(U)}$.   

Define the (generalized) Bell states by
\begin{equation}\label{eq:bell basis of P}
\ket{\phs} = (\sigma^\bfs \otimes \II) \ket{EPR}^{\otimes n}, \qquad \bfs \in \paulisetn,
\end{equation}
where $\ket{EPR} = \frac{1}{2} (\ket{00} + \ket{11})$ is the maximally entangled state.

The family $\{\ket{\phs}\}$ forms an orthonormal basis of the $N^2$ dimensional Hilbert space of bipartite systems. Measuring  in this basis is referred to as the $n$-qubit Bell measurement. A modern reference that explicitly defines Bell basis states  \citep{Huang2021}. 

\begin{lemma}
[cf. Lemma 3 of \citep{Huang2021}]\label{lem:bell sampling}
 Measuring the Choi state $\ket{\Phi_U}$ in the Bell basis, samples the distribution given by the squared magnitudes of the Pauli coefficients $|\qas|^2$ of $U$. That is 
 \[
	P_\bfs = |\braket{\phs}{J(U)}|^2 = |\qas|^2
 \]

\end{lemma}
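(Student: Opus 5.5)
The plan is to compute the overlap $\braket{\phs}{J(U)}$ directly, expanding $U$ in the Pauli basis and using the orthonormality of the Bell basis defined in \eqref{eq:bell basis of P}. Throughout I take $\ket{EPR}=\frac{1}{\sqrt2}(\ket{00}+\ket{11})$. This is the normalization implicit in the definition of $\ket{J(U)}$, and the factor $\frac12$ written in \eqref{eq:bell basis of P} should be read as $\frac{1}{\sqrt2}$. With this convention, $\ket{EPR}^{\otimes n}=\frac{1}{\sqrt N}\sum_i\ket{i}\ket{i}$, after regrouping the qubits so that the first register holds the $n$ system qubits. Consequently $\ket{J(U)}=(U\otimes\II)\ket{EPR}^{\otimes n}$, and the state $\ket{\Phi_U}$ in the statement is the same object.

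\textbf{Step 1.} Substitute the Fact: $U=\sum_{\bft}\alpha_{\bft}\sigma^{\bft}$. By linearity,
$$\ket{J(U)}=\sum_{\bft}\alpha_{\bft}(\sigma^{\bft}\otimes\II)\ket{EPR}^{\otimes n}=\sum_{\bft}\alpha_{\bft}\ket{\phi_{\bft}}.$$

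\textbf{Step 2.} Verify orthonormality of $\{\ket{\phs}\}$, which the paper asserts. Use the standard identity $\bra{EPR^{\otimes n}}(A\otimes\II)\ket{EPR^{\otimes n}}=\frac1N\Tr(A)$, obtained by a one-line computation with $\frac{1}{\sqrt N}\sum_i\ket{ii}$. This gives
$$\braket{\phs}{\phi_{\bft}}=\frac1N\Tr\big((\sigma^{\bfs})^\dagger\sigma^{\bft}\big)=\delta_{\bfs\bft},$$
by Hilbert–Schmidt orthogonality of Paulis. Since $\sigma^\bfs$ is Hermitian, $(\sigma^{\bfs})^\dagger=\sigma^{\bfs}$.

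\textbf{Step 3.} Combine Steps 1 and 2 to get $\braket{\phs}{J(U)}=\alpha_{\bfs}$. Equivalently, $\braket{\phs}{J(U)}=\frac1N\Tr(\sigma^{\bfs}U)=\alpha_\bfs$, matching the coefficient formula in the Fact. The Born rule then gives $P_\bfs=|\alpha_\bfs|^2$.

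As a sanity check that this is a probability distribution, note that $\sum_\bfs|\alpha_\bfs|^2=\norm{U}_2^2/N=1$ by \eqref{eq:eqof2l2} and unitarity. This is automatic anyway, since the Bell basis is complete.

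The only real obstacle is bookkeeping. One must fix the EPR normalization, as discussed above. One must also reorder tensor factors between the qubit-wise form $\ket{EPR}^{\otimes n}$ and the register-wise form $(\cdot)\otimes\II_N$, keeping $\sigma^\bfs$ and $U$ acting on the same register. Once those conventions are aligned, the trace identity in Step 2 does all the work.
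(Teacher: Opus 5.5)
Your argument is correct. Writing $\ket{J(U)}=\sum_{\bft}\alpha_{\bft}\ket{\phi_{\bft}}$ and using orthonormality of the Bell basis, which you verify through the trace identity, gives $\braket{\phs}{J(U)}=\frac{1}{N}\Tr(\sigma^{\bfs}U)=\alpha_\bfs$. Your correction of the EPR normalization from $\frac12$ to $\frac{1}{\sqrt2}$ is also right.

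The paper gives no proof of its own beyond citing Huang, Kueng, and Preskill. Your Bell-basis expansion is, however, exactly the one it relies on implicitly in the proof of Lemma~\ref{lm:pauli_error}, so your route is essentially the intended one.
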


In particular, if $U$ is $s$-sparse in the Pauli basis, Bell sampling outputs a nonzero Pauli operator with probability one and identifies its label with probability proportional to the squared magnitude of its coefficient.

\subsection{Block Encoding}
\ac{LCU} allows for the implementation of an operator $A$ by expressing it as a weighted sum of efficiently implementable unitaries, $A = \sum_i \alpha_i U_i$. This approach is particularly effective for operators that can be well-approximated by sparse Pauli decompositions, facilitating the translation of mathematical structures into gate-based quantum circuits \citep{childs2012hamiltonian}.

Specifically, the success probability of this implementation is $1/\alpha^2$, where $\alpha = \sum_s |\alpha_s|$ is the $L_1$-norm of the learned Pauli coefficients. To achieve a near-deterministic implementation, we employ Oblivious Amplitude Amplification (OAA). Standard amplitude amplification requires knowledge of the input state to construct the appropriate reflection; however, OAA is "oblivious" in that it succeeds for any input state $|\psi\rangle$, provided the operator being implemented is (close to) a unitary. In the context of our sparse unitary learning algorithm, if the scaling factor is chosen such that $\alpha = 2$, a single step of OAA

perfectly implements the unitary without ancilla measurements. For general $\alpha$, or when the learned coefficients contain approximation errors, we utilize the Robust Oblivious Amplitude Amplification technique. This involves a sequence of fixed-point rotations that boost the fidelity of the implementation to $1-\delta$ with a query overhead that scales as $O(\alpha \log(1/\delta))$. This ensures that the reconstruction of the learned $s$-sparse unitary remains efficient and compatible with larger quantum algorithms.

\begin{theorem}[c.f. Theorem 2.5 \citep{Kothari2014}]\label{thm:approx LCU}
Let $\tilde{V}=\sum_i a_i U_i$ is a linear combination of unitary matrices $U_i$ with $a_i>0$ for all $i$.

Let $A$ be a unitary that maps
\begin{equation*}
\ket{0^m} \;\longmapsto\; \frac{1}{\sqrt{a}} \sum_i \sqrt{a_i}\ket{i},
\qquad
\text{where }
a := \norm{\vec{a}}_1 = \sum_i a_i .
\end{equation*}
Then, if $\tilde{V}$ is $\delta$-close to some unitary in spectral norm, and $\tfrac{1+\delta}{a}\leq 1$,  there exists a quantum algorithm that implements the map $\tilde{V}$ with error $O(a\sqrt{\delta})$ and makes
$O(a)$ uses of $A$, the select unitary
\begin{equation*}
U := \sum_i \ket{i}\!\bra{i} \otimes U_i,
\end{equation*}
and their inverses.
\end{theorem}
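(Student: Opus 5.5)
The plan is to follow the standard route for this result: build a block encoding of $\tilde V/a$ from $A$ and the select unitary, then boost it to a near-deterministic implementation with oblivious amplitude amplification (OAA). The approximate case is handled by comparing against an exactly unitary reference block encoding.

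\emph{Step 1 (block encoding).} Set $W := (A^\dagger\otimes \II)\,U\,(A\otimes \II)$. A direct computation gives
\[
(\bra{0^m}\otimes \II)\,W\,(\ket{0^m}\otimes \II)=\tfrac{1}{a}\sum_i a_i U_i=\tilde V/a .
\]
So for every input $\ket\psi$,
\[
W\ket{0^m}\ket\psi=\ket{0^m}\tfrac{\tilde V}{a}\ket\psi+\ket{\perp},
\]
where $\ket\perp$ has no overlap with $\ket{0^m}$ on the ancilla. If the amplitude must be tuned, append one ancilla qubit with a fixed rotation. This lowers the effective normalization to exactly $1/\sin(\pi/(2(2k+1)))$ for some integer $k=O(a)$; the hypothesis $(1+\delta)/a\le 1$ guarantees this rescaling is always a reduction, never an increase.

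\emph{Step 2 (exact case).} Suppose first that the top-left block is $Q/a'$ with $Q$ unitary. Let $R$ be the reflection about $\ket{0^m}\otimes\II$ and iterate $-WRW^\dagger R$ a total of $k$ times. The oblivious amplitude amplification lemma shows that, for every $\ket\psi$, the dynamics stays in the two-dimensional subspace spanned by $\ket{0^m}Q\ket\psi$ and its orthogonal partner $\ket{\perp}$. Crucially, this subspace and its rotation angle are independent of $\psi$, because $Q^\dagger Q=\II$ forces $\|\ket\perp\|$ to be the same for every input. After $k$ rounds the amplitude on $\ket{0^m}Q\ket\psi$ is $\sin((2k+1)\theta)=1$, so $Q$ is implemented exactly. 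This costs $O(k)=O(a)$ uses of $A$, of $U$, and of their inverses.

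\emph{Step 3 (robustness, the main obstacle).} In general $\tilde V$ is only $\delta$-close to a unitary, so the $\psi$-independence in Step 2 fails. I would handle this as follows.
\begin{itemize}
\item Take the polar decomposition $\tilde V = Q P$, where $Q$ is unitary and $P=\sqrt{\tilde V^\dagger\tilde V}$. Then $\|P-\II\|_{op}=O(\delta)$ and $\|\tilde V-Q\|_{op}=O(\delta)$.
\item Construct an ideal unitary $W'$ whose top-left block is exactly $Q/a$. Its off-diagonal blocks are $\sqrt{\II-Q^\dagger Q/a^2}$ and the analogous term on the other side, completed to a unitary in the usual way. The actual $W$ has off-diagonal blocks built from $\sqrt{\II-\tilde V^\dagger\tilde V/a^2}$, and any completion is fixed only up to a unitary on the orthogonal part; I would choose $W'$ to share that freedom with $W$.
\item Bound $\|W-W'\|_{op}$ using Hölder continuity of the operator square root, $\|\sqrt X-\sqrt Y\|\le\sqrt{\|X-Y\|}$. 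This gives $\|W-W'\|_{op}=O(\sqrt\delta)$. This is exactly where the $\sqrt\delta$ in the final bound comes from, and it is the delicate step: the completion of $W'$ must be chosen so that no error larger than $O(\sqrt{\delta})$ enters.
\end{itemize}

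\emph{Step 4 (error accumulation).} The amplification circuit uses $W$ and $W^\dagger$ $O(a)$ times in total. Replacing each use of $W'$ by $W$ incurs operator-norm error at most $\|W-W'\|$ per use, by a telescoping hybrid argument. The total deviation is therefore $O(a\sqrt\delta)$. The circuit built from $W'$ implements $Q$ exactly by Step 2, and $Q$ is within $O(\delta)\le O(a\sqrt\delta)$ of $\tilde V$. Hence the circuit implements $\tilde V$ with error $O(a\sqrt\delta)$ using $O(a)$ queries, as claimed.
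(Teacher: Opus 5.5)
Your proof is correct and follows the standard route. The paper gives no proof of this statement; it imports the result from Kothari's thesis, where it is obtained, as you do, from the LCU block encoding $(A^\dagger\otimes I)\,U\,(A\otimes I)$ followed by (robust) oblivious amplitude amplification.

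The one step you leave as a sketch, the completion of $W'$ (in particular its lower-right block, which your H\"older bound does not cover), is cleanest via the singular value decomposition $\tilde V=\sum_j\sigma_j\ket{u_j}\!\bra{v_j}$. On the two-dimensional input/output pair attached to each $(v_j,u_j)$, $W$ acts as a $2\times 2$ unitary fixed by an angle with $\sin\theta_j=\sigma_j/a'$. Let $W'$ replace each $\theta_j$ by $\theta_0=\arcsin(1/a')$ and agree with $W$ elsewhere; this gives $\|W-W'\|_{op}=O(\delta/a')$ for small $\delta$. Since $a'\ge 2$, the relevant square root is Lipschitz, so the $\sqrt{\delta}$ from H\"older continuity is only an artifact, and the total error is actually $O(\delta)$, well within the claimed $O(a\sqrt{\delta})$. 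Also note that the extra ancilla rotation raises the normalization from $a$ to $a'\ge a$; it does not lower it.
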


 \section{Finding And Estimating Large Pauli Coefficients}\label{sec:Pauli est}
In this section, we present an algorithm to find and estimate large Pauli coefficients of an unknown unitary $U$. It consists of two subroutines: (1) finding the indices of large Pauli coefficients, and (2) estimating the values of those coefficients.

The goal in this section is to estimate the Pauli coefficients using the fewest possible queries of an unknown unitary. First, we discuss how to calculate $\alpha_{\bft}$, then implement an algorithm to estimate other Pauli coefficients.

Motivated by classical sparse recovery algorithms such as the KM algorithm \citep{KM1993}, the building block of the learning algorithm is estimating large Pauli coefficients of an unknown unitary $U$. Since $U$ is not necessarily a small perturbation of identity, we cannot directly use Hamiltonian learning algorithms \citep{arunachalam2024testing,abbas2025nearly} that assume access to $e^{-iHt}$ for small $t$ to linearize the problem. In addition, since $U$ is not necessarily Hermitian, we cannot directly apply shadow tomography techniques to estimate Pauli observables \citep{Chen2024}. Indeed, without access to controlled-$U$, estimating Pauli coefficients is impossible due to the global phase ambiguity in $U$. 

Therefore, we can hope to estimate the large coefficients up to a global phase. We adapt the Pauli shadow method of \citep{arunachalam2024testing} to first find a list of large Pauli coefficients and estimate the magnitude of the biggest one first.

\subsection{Finding index of large coefficients}
To find the indices of large Pauli coefficients, we use Bell sampling of the Choi state of $U$. According to Lemma \ref{lem:bell sampling}, measuring the Choi state $\ket{J(U)}$ in the Bell basis produces an outcome $\bfs\in \pauliset^n$ with probability $|\alpha_\bfs|^2$. Therefore, by preparing multiple copies of $\ket{J(U)}$ and measuring them in the Bell basis, we can collect samples from the distribution $\{|\alpha_\bfs|^2\}$. Specifically, if we take enough samples, we  ensure that with high probability, all indices corresponding to Pauli coefficients with magnitude above a certain threshold $\theta$ are included in our sample set. Denote this set by $\CX$. This procedure is summarized as Algorithm \ref{alg:find_coef}.

\begin{algorithm2e}[ht]
\DontPrintSemicolon
\LinesNumbered

\caption{Finding Large Coefficients}
\label{alg:find_coef}

\KwIn{Query access to unitary $U$; threshold $\theta\in(0,1)$; $\epsilon, \delta\in(0,1)$.}
\KwOut{Index Set $\CX$ and estimate of the largest Pauli coefficient $\hat{\alpha}_{max}$}

\BlankLine

\textbf{Set} 
$m_1 \gets O\!\left(\dfrac{\log(1/\theta)+\log(1/\delta)}{\theta^2}\right)$,
\quad
and initialize $\CX=\emptyset$ \;

Prepare $m_1$ copies of the Choi state $|J(U)\rangle = (U\otimes \II_{2^n})\,\ket{EPR}^{\tensor n}$. \;

Measure each copy in the Bell basis and add the outcome $\bfs\in \pauliset^n$ to $\CX$. \;

\Return $\CX$.\;

\end{algorithm2e}

The next result shows that $\CX$ contains the index of all large Pauli coefficients of $U$:
\begin{lemma}
    With probability at least $1-\delta$, the set $\CX$ generated in Algorithm \ref{alg:find_coef} contains the set  $\{\bfs: \abs{\alpha_\bfs} \geq \theta\}$ and $|\CX| \leq  \frac{\ln{(1/(\theta^2\delta))}}{\theta^2}$. 
    \label{lm:coupon}
\end{lemma}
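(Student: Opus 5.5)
The plan is a coupon-collector argument combined with a union bound. By Lemma \ref{lem:bell sampling}, each Bell measurement of a fresh copy of $\ket{J(U)}$ returns an independent sample $\bfs$ with probability $P_\bfs=|\alpha_\bfs|^2$. Because $U$ is unitary, $\sum_\bfs |\alpha_\bfs|^2 = \frac{1}{N}\Tr(U^\dagger U)=1$ by \eqref{eq:eqof2l2}. So this is a genuine probability distribution.

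\textbf{Containment.} Let $\CS_\theta=\{\bfs:|\alpha_\bfs|\ge\theta\}$. Since the squared magnitudes sum to one, $|\CS_\theta|\le 1/\theta^2$. For a fixed $\bfs\in\CS_\theta$, the probability that none of the $m_1$ independent samples equals $\bfs$ is
\[
(1-|\alpha_\bfs|^2)^{m_1}\le (1-\theta^2)^{m_1}\le e^{-m_1\theta^2}.
\]
A union bound over $\CS_\theta$ bounds the failure probability by $\theta^{-2}e^{-m_1\theta^2}$. This is at most $\delta$ once
\[
m_1 \ge \frac{\ln(1/(\theta^2\delta))}{\theta^2} = \frac{2\ln(1/\theta)+\ln(1/\delta)}{\theta^2},
\]
which matches the choice $m_1 = O\!\left(\frac{\log(1/\theta)+\log(1/\delta)}{\theta^2}\right)$ in Algorithm \ref{alg:find_coef}.

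\textbf{Size bound.} $\CX$ is formed by adding one outcome per measurement, so $|\CX|\le m_1$ deterministically. Fixing the constant in $m_1$ to exactly $\lceil \ln(1/(\theta^2\delta))/\theta^2\rceil$ then gives the stated cardinality bound, up to the ceiling. No probabilistic argument is needed for this part.

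\textbf{Main obstacle.} The only delicate point is bookkeeping of constants. The same choice of $m_1$ must both make the union bound work and yield the claimed bound on $|\CX|$. Taking $m_1=\ln(1/(\theta^2\delta))/\theta^2$ exactly (interpreting the $O(\cdot)$ in the algorithm with constant $1$, absorbing the rounding) reconciles the two. If a strictly tighter size statement were wanted, one could instead count only distinct outcomes, but that is unnecessary here.
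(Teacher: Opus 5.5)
Your proposal is correct and follows essentially the same argument as the paper. Both bound the per-index miss probability by $(1-\theta^2)^{m_1}\le e^{-\theta^2 m_1}$, union-bound over $\{\bfs:|\alpha_\bfs|\ge\theta\}$ (which has size at most $1/\theta^2$ by Parseval), and use the deterministic bound $|\CX|\le m_1$. Your explicit remark that $m_1$ must be set to essentially exactly $\ln(1/(\theta^2\delta))/\theta^2$ for both conclusions to hold at once (up to rounding) is a point the paper leaves implicit, and it is the right way to read the stated cardinality bound.
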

\begin{proof}

    We define the target set as $S_\theta = \{\bfs : |\alpha_\bfs| \ge \theta\}$. Suppose $S_\theta$ is not empty. 
    
    This is related to the non-uniform variant of the famous Coupon Collection problem. 
    The probability of observing any specific target index $\bfs \in S_\theta$ in a single shot is $p(\bfs) = |\alpha_\bfs|^2 \ge \theta^2$,

    and the probability that $\bfs$ is never observed in $m_1$ independent draws from $p$ is bounded by $e^{-\theta^2 m_1}$,
    as 
    $(1-p_\bfs)^{m_1} \leq e^{-\theta^2 m_1}.$
    
    By the union bound, the probability of missing an index in $S_\theta$ after $m_1$ samples is
    $$Pr[\exists \bfs \in S_\theta \text{ missing after $m_1$ samples}]\leq \sum_{\bfs\in S_\theta} (1-p(\bfs))^{m_1} \leq \abs{S_\theta}e^{-\theta^2 m_1}.$$
    
    Hence, to have this bounded by $\delta,$ we need 
    $m_1 \geq \frac{\ln(\abs{S_\theta}/\delta)}{\theta^2} $.
    Since $U$ is unitary, by Parseval's identity
    , $\sum_\bfs |\alpha_\bfs|^2 =1$, that implies $1 \ge \sum_{\bfs \in S_\delta} \abs{\alpha_\bfs}^2 \ge \theta^2 \abs{S_\theta}$.
    Thus, the number of significant coefficients is bounded by
    $\abs{S_\theta} \leq 1/\theta^2$. Hence, the query bound $m_1\geq \frac{\ln{(1/(\theta^2\delta))}}{\theta^2}$ is obtained. The second statement on $|\CX|$ is immediate as $|\CX|\leq m_1$.

\end{proof}

\subsection{Estimating large coefficients up to a global phase}
Given the list $\CX$, the next step is to estimate the Pauli coefficients in $\CX$. Since the coefficients are complex numbers in general we need a special care for the estimation. 

We first estimate the $|\alpha_\bfs|^2, \bfs\in \CX$ with error $\varepsilon$ and pick the largest, indexed by $\bft$. Then we declare $\hat{\alpha}_\bft  = |\alpha_\bft|$ as the estimate of the Pauli coefficient corresponding to $\bft$.  That is we ignore the phase of $\alpha_\bft$. This can be done by estimating the expectation value of  the observable 
$$M_{\bfs} \coloneqq \ketbra{\phs}{\phs}, $$
for $\bfs\in \CX$, where $\ket{\phs}$ is the Bell basis state corresponding to $\bfs$.
 Measuring $M_{\bfs}$ on the Choi state $\ket{J(U)}$ gives
 $\expval{M_{\bfs}}{J(U)} = |\alpha_\bfs|^2.$

Next, with $\hat{\alpha}_\bft$ obtained,   we estimate the other coefficients $\alpha_\bfs, \bfs\in \CX$. Since we do not know the phase of $\alpha_\bft$, we can only estimate other coefficients up to the same global phase.  More precisely, we estimate the ratio $\alpha_\bfs/\alpha_\bft$ for all $\bfs\in \CX$.
This can be done via the following observables done in the Bell basis:
$$R_{\bft,\bfs} \coloneqq \frac{1}{2}(\ketbra{\phi_{\bft}}{\phs} + \ketbra{\phs}{\phi_\bft})$$
$$I_{\bft,\bfs} \coloneqq \frac{1}{2}(-i\ketbra{\phi_{\bft}}{\phs} + i\ketbra{\phs}{\phi_\bft})$$
for $\bft,\bfs \in \set{0, 1, 2, 3}^n$.

We measure several copies of the Choi state $\ket{J(U)}$ of $U$ to estimate expectation values of all $R_{t,\bfs}, I_{t,\bfs}$.

To reduce the sample complexity, we use shadow tomography with Clifford measurements. Note that  $R_{t,\bfs}, I_{t,\bfs}$ belong to the clifford group, and hence can be simulated Classically. As a result, the shadow tomography can reduce the sample complexity to scale logarithmically with $|\CX|$ with polynomial time. This is summarized in Algorithm \ref{alg:estimate-unitary}.

\begin{algorithm2e}[ht]
\DontPrintSemicolon
\LinesNumbered
\caption{Estimating Pauli Coefficients}
\label{alg:estimate-unitary}

\KwIn{Query access to unitary $U$; index set $\CX$, accuracy $\varepsilon\in(0,1)$, failure probability $\delta\in(0,1)$.}
\KwOut{Estimate of the Pauli coefficients  $\hat{\alpha}_\bfs$ for ${\bfs\in \CX}$.}

\BlankLine

Set 
$m_2 \gets O \left(\frac{\log ({|X|} / {\delta})}{\varepsilon^2} \right)$. \;

Prepare 2 set of $m_2$ copies of the Choi state $|J(U)\rangle = (U\otimes \II_{2^n})\,\ket{EPR}^{\tensor n}$. \;

 $\tilde{M}_\bfs \gets$ \CST{$\set{M_\bfs : \bfs\in \CX}, m_2 , \ket{J(U)}$}\;
 
$\bft\gets \argmax \tilde{M}_\bfs$ and $\hat{\alpha}_\bft\gets \tilde{M}_\bft$\;

$\tilde{R}_\bfs, \tilde{I}_\bfs \gets$ \CST{$\set{R_{\bft,\bfs}, I_{\bft,\bfs} : \bfs\in \CX}, m_2 , \ket{J(U)}$}\;

$\hat{\alpha}_\bfs \gets \frac{\tilde{R}_\bfs + i \tilde{I}_\bfs}{\hat{\alpha}_\bft}$ for all $\bfs\in \CX$.\;

\Return $\{\hat{\alpha}_\bfs: \bfs\in \CX\}$\;
\end{algorithm2e}

\begin{lemma}

    Algorithm \ref{alg:estimate-unitary} with $\CX, \delta, \varepsilon$ as the inputs, with probability at least $1-\delta$, estimates all the Pauli coefficients in $\CX$  up to an additive  error $\frac{3\sqrt{\varepsilon}}{|{\alpha}_{\bft}|-\sqrt{\varepsilon}} $ and a global phase $\phi\in [0,\pi]$; that is, for all $\bfs\in \CX$,
    $$|\hat{\alpha}_\bfs - e^{-i\phi}\alpha_\bfs| \leq \frac{3\sqrt{\varepsilon}}{|{\alpha}_{\bft}|-\sqrt{\varepsilon}} .$$
        \label{lm:pauli_error}
\end{lemma}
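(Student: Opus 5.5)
Our plan is to condition on the success of both calls to the shadow tomography routine and then do a short perturbation analysis of the ratio estimator. First we compute the ideal expectation values on the Choi state. Since $\{\ket{\phs}\}$ is an orthonormal basis and, by Lemma~\ref{lem:bell sampling}, $\braket{\phs}{J(U)} = \alpha_\bfs$ (possibly up to a fixed conjugation convention, which we would fix once and for all), we get $\expval{M_\bfs}{J(U)} = |\alpha_\bfs|^2$ and
\[
\expval{R_{\bft,\bfs}}{J(U)} + i\,\expval{I_{\bft,\bfs}}{J(U)} = \overline{\alpha_\bft}\,\alpha_\bfs
\]
(or its conjugate, depending on the convention). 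Each of $M_\bfs, R_{\bft,\bfs}, I_{\bft,\bfs}$ has operator norm at most $1$ and a classical description, so Theorem~\ref{thm:CST} applies with $M = 3|\CX|$ observables. With $m_2 = O(\log(|\CX|/\delta)/\varepsilon^2)$ copies per call and failure probability $\delta/2$ per call, a union bound gives that with probability at least $1-\delta$ we have $|\tilde M_\bfs - |\alpha_\bfs|^2| \le \varepsilon$, $|\tilde R_\bfs - \mathrm{Re}(\overline{\alpha_\bft}\alpha_\bfs)|\le\varepsilon$ and $|\tilde I_\bfs - \mathrm{Im}(\overline{\alpha_\bft}\alpha_\bfs)|\le\varepsilon$ for all $\bfs\in\CX$. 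A subtlety: $\bft$ is chosen from the first batch, and we use fresh copies for the second call, so the guarantee for the second batch holds conditionally on $\bft$.

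Next we control the normalizer. Algorithm~\ref{alg:estimate-unitary} writes $\hat\alpha_\bft \gets \tilde M_\bft$, but dimensionally the divisor should estimate $|\alpha_\bft|$, so we interpret it as $\sqrt{\tilde M_\bft}$. The needed fact is $|\sqrt{a}-\sqrt{b}|\le\sqrt{|a-b|}$ for $a,b\ge0$, which gives $|\hat\alpha_\bft - |\alpha_\bft|| \le \sqrt{\varepsilon}$. In particular $\hat\alpha_\bft \ge |\alpha_\bft| - \sqrt\varepsilon$, which is the source of the denominator in the statement. We would assume $|\alpha_\bft|>\sqrt\varepsilon$, since otherwise the bound is vacuous.

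Finally we do the ratio error analysis. Write $\alpha_\bft = |\alpha_\bft| e^{i\phi}$, so that $\overline{\alpha_\bft}\alpha_\bfs = |\alpha_\bft|\, e^{-i\phi}\alpha_\bfs$, and set $z_\bfs := \tilde R_\bfs + i\tilde I_\bfs$, which satisfies $|z_\bfs - |\alpha_\bft| e^{-i\phi}\alpha_\bfs| \le \sqrt2\,\varepsilon$. Then
\[
\hat\alpha_\bfs - e^{-i\phi}\alpha_\bfs = \frac{z_\bfs - |\alpha_\bft|e^{-i\phi}\alpha_\bfs}{\hat\alpha_\bft} + e^{-i\phi}\alpha_\bfs\,\frac{|\alpha_\bft|-\hat\alpha_\bft}{\hat\alpha_\bft}.
\]
Using $|\alpha_\bfs|\le 1$ (from Parseval), $\varepsilon\le\sqrt\varepsilon$ for $\varepsilon<1$, and $\hat\alpha_\bft\ge|\alpha_\bft|-\sqrt\varepsilon$, the total is at most $(\sqrt2\varepsilon+\sqrt\varepsilon)/(|\alpha_\bft|-\sqrt\varepsilon) \le 3\sqrt\varepsilon/(|\alpha_\bft|-\sqrt\varepsilon)$.

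The phase range $[0,\pi]$ in the statement is cosmetic: the global phase is some $\phi\in[0,2\pi)$, and we would note this, or absorb a sign if the convention requires it. The main obstacle is bookkeeping rather than analysis. We have to pin down the conjugation convention in $\braket{\phs}{J(U)}$ so that the combination $R+iI$ really yields $\overline{\alpha_\bft}\alpha_\bfs$, and we have to justify using the square root of $\tilde M_\bft$ as the normalizer.
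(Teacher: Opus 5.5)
Your proposal is correct and follows the paper's proof: you use the same add-and-subtract split of $\hat\alpha_\bfs - e^{-i\phi}\alpha_\bfs$ into a shadow-error term divided by $\hat\alpha_\bft$ plus a normalizer-error term, with $\phi=\arg\alpha_\bft$. The only difference is that the paper uses the cruder bound $2\varepsilon$ where you use $\sqrt{2}\varepsilon$. The points you spell out are exactly what the paper's proof uses silently: the normalizer $\sqrt{\tilde M_\bft}$, fresh copies for the $\bft$-dependent second call, the requirement $|\alpha_\bft|>\sqrt\varepsilon$, and $\phi\in[0,2\pi)$. Two small corrections: $|\alpha_\bft|>\sqrt\varepsilon$ is genuinely necessary rather than just avoiding vacuity, since otherwise the right-hand side is negative. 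And the $[0,\pi]$ range cannot be reached by absorbing a sign, because $\phi$ is essentially pinned to $\arg\alpha_\bft$.
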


\begin{proof}
    From Theorem \ref{thm:CST}, the estimates $\tilde{M}_\bfs$,  $\tilde{R}_{\bft,\bfs}$ and $\tilde{I}_{\bft,\bfs}$ in Algorithm \ref{alg:estimate-unitary}  are $\varepsilon$-accurate; that is, they satisfy:
$$\abs{\Tr[M_{\bfs}\ketbra{J(U)}]- \tilde{M}_{\bfs}} \leq \varepsilon$$
    and
    $$\abs{\Tr[R_{\bft,\bfs}\ketbra{J(U)}]- \tilde{R}_{\bft,\bfs}} \leq \varepsilon, \quad \abs{\Tr[I_{\bft,\bfs}\ketbra{J(U)}]- \tilde{I}_{\bft,\bfs}} \leq \varepsilon$$
    for every $\bfs\in \CX$ with probability $\geq 1- \delta$. 
    
    Note that the Choi matrix is:
    $$ J(U) = \ketbra{J(U)}{J(U)} = \sum_{\bfu, \bfv} \alpha_\bfu \alpha_{\bfv}^* \ketbra{\phi_\bfu}{\phi_{\bfv}}. $$
    By orthonormality ($\braket{\phi_\bfu}{\phi_\bfv} = \delta_{\bfu\bfv}$), $\Tr[\ketbra{\phi_\bfs}{\phi_\bft} J(U)]$ isolates the coefficient $\alpha_\bft \alpha_\bfs^*$:
    $$ \Tr\left[ \ketbra{\phi_\bfs}{\phi_\bft} \sum_{\bfu,\bfv} \alpha_\bfu \alpha_\bfv^* \ketbra{\phi_\bfu}{\phi_\bfv} \right] = \sum_{\bfu,\bfv} \alpha_\bfu \alpha_\bfv^* \braket{\phi_\bft}{\phi_\bfu} \braket{\phi_\bfv}{\phi_\bfs} = \alpha_\bft \alpha_\bfs^*. $$

    Applying this to our observables:
\begin{equation}
    \Tr[M_{\bfs} J(U)] = \Tr[\ketbra{\phi_\bfs}{\phi_\bfs} J(U)] = \alpha_\bfs \alpha_\bfs^* = |\alpha_\bfs|^2,
    \label{eq:M_s}
\end{equation}
\begin{align*}
    \Tr[R_{\bft,\bfs} J(U)] &= \frac{1}{2} \left( \Tr[\ketbra{\phi_\bft}{\phi_\bfs} J(U)] + \Tr[\ketbra{\phi_\bfs}{\phi_\bft} J(U)] \right)  \\
    &= \frac{1}{2}(\alpha_\bfs \alpha_\bft^* + \alpha_\bft \alpha_\bfs^*) = \operatorname{Re}(\alpha_\bfs\alpha^*_{\bft}), \\
    \Tr[I_{\bft,\bfs} J(U)] &= \frac{1}{2i} \left( \ Tr[\ketbra{\phi_\bft}{\phi_\bfs} J(U)] - \Tr[\ketbra{\phi_\bfs}{\phi_\bft} J(U)] \right) \\
    &= \frac{1}{2i}(\alpha_\bfs \alpha_\bft^* - \alpha_\bft \alpha_\bfs^*) = \operatorname{Im}(\alpha_\bfs\alpha^*_{\bft}).
    \label{eq:RI_s}
\end{align*}

    Note that \eqref{eq:M_s} implies that $\abs{\hat{\alpha}_\bft - |\alpha_\bft|} \leq \sqrt{\varepsilon}$. 
    Let the true values of the traces be $R_\bfs = \Tr[R_{\bft,\bfs} J(U)]$ and $I_\bfs = \Tr[I_{\bft,\bfs} J(U)]$. From the above equations, we have that $R_\bfs + i I_\bfs = \alpha_\bfs \alpha_{\bft}^*$.

    Let $\tilde{Z}_\bfs = \tilde{R}_\bfs + i \tilde{I}_\bfs$ be the estimator we obtain in the algorithm. Then write
     \begin{equation*}
        \hat{\alpha}_\bfs = \frac{\tilde{Z}_\bfs}{|\hat{\alpha}_{\bft}|}
    \end{equation*}

    Writing the polar form of $\alpha_{\bft} = e^{i\theta_\bft}|\alpha_{\bft}|$, and using  the identity $$e^{-i\theta_\bft}\alpha_\bfs = e^{-i\theta_\bft}\frac{\alpha_\bfs (\alpha_{\bft} \alpha^*_{\bft})}{|\alpha_{\bft}|^2} = \frac{\alpha_\bfs \alpha^*_{\bft}}{|\alpha_{\bft}|} ,$$

        we can write the error as a difference of fractions:
    \begin{equation*}
        \hat{\alpha}_\bfs - e^{-i\theta_\bft}\alpha_\bfs = \frac{\tilde{Z}_\bfs}{|\hat{\alpha}_{\bft}|} - \frac{\alpha_\bfs \alpha_{\bft}^*}{|\alpha_{\bft}|}.
    \end{equation*}
    To separate the error contributions, we add and subtract the term $\frac{\alpha_\bfs \alpha_{\bft}^*}{|\hat{\alpha}_{\bft}|}$:
    \begin{align*}
        \hat{\alpha}_\bfs - e^{-i\theta_\bft}\alpha_\bfs &= \frac{\tilde{Z}_\bfs}{|\hat{\alpha}_{\bft}|} - \frac{\alpha_\bfs \alpha_{\bft}^*}{|\hat{\alpha}_{\bft}|} + \frac{\alpha_\bfs \alpha_{\bft}^*}{|\hat{\alpha}_{\bft}|} - \frac{\alpha_\bfs \alpha_{\bft}^*}{|\alpha_{\bft}|} \\
        &= \frac{1}{|\hat{\alpha}_{\bft}|} \left( \tilde{Z}_\bfs - \alpha_\bfs \alpha_{\bft}^* \right) + \alpha_\bfs \alpha_{\bft}^* \left( \frac{1}{|\hat{\alpha}_{\bft}|} - \frac{1}{|\alpha_{\bft}|} \right).
    \end{align*}
    Taking the absolute value and applying the triangle inequality:
    \begin{align*}
        \abs{\hat{\alpha}_\bfs - e^{-i\theta_\bft}\alpha_\bfs} &\leq \frac{\abs{Z_\bfs - \alpha_\bfs \alpha_{\bft}^*}}{\abs{\hat{\alpha}_{\bft}}} + \abs{\alpha_\bfs \alpha_{\bft}^*} \frac{||\alpha_{\bft}| - |\hat{\alpha}_{\bft}||}{|\hat{\alpha}_{\bft}| |\alpha_{\bft}|}\\
        &\leq  \frac{2\varepsilon}{\abs{\hat{\alpha}_{\bft}}} + \frac{\abs{\alpha_\bfs}\sqrt{\varepsilon}}{\abs{\hat{\alpha}_{\bft}}}\\
        &=\frac{\sqrt{\varepsilon}}{|\hat{\alpha}_{\bft}|}\left(2\sqrt{\varepsilon} + |\alpha_\bfs|\right)\leq \frac{3\sqrt{\varepsilon}}{|{\alpha}_{\bft}|-\sqrt{\varepsilon}} ,
    \end{align*}
    where by assumption  $||\alpha_{\bft}| - |\hat{\alpha}_{\bft}|| \leq \sqrt{\varepsilon}$, and we used  the fact  that the estimation $\tilde{R}_{t,\bfs}$ and $\tilde{I}_{t,\bfs}$ each have $\varepsilon$ error, hence $Z$ is $ 2\varepsilon$ close to $\alpha_\bfs \alpha_{\bft}^*$.

\end{proof}

This leads to our first main result regarding the efficient estimation of Pauli coefficients:

\begin{theorem}
    Given  $\epsilon, \delta >0$ and $\theta \in (0,1]$, such that $\varepsilon \le \frac{\theta^2}{16}$, there exists an algorithm that, makes 
    
    $$T=O\left( \frac{\log(1/\theta) + \log(1/\delta)}{\epsilon^4} \right)$$
    queries to the target unitary $U$ and,     
    with probability at least $1-\delta$, identifies a set $\CX$ containing the indices of all Pauli coefficients $|\alpha_\bfs|\geq \theta$ and estimates each one, up to a shared global phase factor, with $\frac{\epsilon}{\theta}$ additive error:
    $$|\hat{\alpha}_\bfs - e^{-i\phi}\alpha_\bfs| \leq \frac{\epsilon}{\theta}, \quad \forall \bfs \in \CX,$$
    for some $\phi \in [0, 2\pi]$.   The algorithm runs in $\poly(n, 1/\epsilon, \log(1/\delta), 1/\theta)$.
        \label{th:pauli_estimate}
\end{theorem}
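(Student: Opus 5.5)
The proof will run Algorithm~\ref{alg:find_coef} and then Algorithm~\ref{alg:estimate-unitary} on its output, with the accuracy of the second call chosen as $\varepsilon' = \epsilon^2$. Each algorithm gets failure probability $\delta/2$, and a union bound combines them.

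First, by Lemma~\ref{lm:coupon}, running Algorithm~\ref{alg:find_coef} with threshold $\theta$ and failure probability $\delta/2$ uses $m_1 = O((\log(1/\theta)+\log(1/\delta))/\theta^2)$ queries. With probability at least $1-\delta/2$ it returns $\CX \supseteq \{\bfs : |\alpha_\bfs|\ge\theta\}$ with $|\CX| \le m_1$. Since $\varepsilon'=\epsilon^2 \le \theta^2/16$ gives $\epsilon \le \theta/4$, we have $1/\theta^2 \le 1/(16\epsilon^2) \le 1/\epsilon^4$, so this stage fits in the claimed budget. If $\CX$ contains no coefficient of magnitude at least $\theta$, then the set to be estimated is empty and the guarantee is vacuous. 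I therefore assume some $|\alpha_\bfs|\ge\theta$ lies in $\CX$.

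Second, I run Algorithm~\ref{alg:estimate-unitary} with accuracy $\varepsilon'=\epsilon^2$ and failure probability $\delta/2$. It uses $2m_2 = O(\log(|\CX|/\delta)/\epsilon^4)$ queries, and $\log|\CX| = O(\log(1/\theta)+\log\log(1/\delta))$, which gives the stated $T$. The main obstacle is to lower bound $|\alpha_\bft|$ for the selected index $\bft = \argmax_\bfs \tilde M_\bfs$, because Lemma~\ref{lm:pauli_error} only gives an error in terms of $|\alpha_\bft|$. Let $\bfs^\ast\in\CX$ satisfy $|\alpha_{\bfs^\ast}|\ge\theta$. Each $\tilde M_\bfs$ is $\varepsilon'$-close to $|\alpha_\bfs|^2$, so
\[
|\alpha_\bft|^2 \ge \tilde M_\bft - \varepsilon' \ge \tilde M_{\bfs^\ast}-\varepsilon' \ge \theta^2 - 2\varepsilon' \ge \tfrac{7}{8}\theta^2 ,
\]
and hence $|\alpha_\bft| \ge \theta\sqrt{7/8}$. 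Together with $\sqrt{\varepsilon'} = \epsilon \le \theta/4$, this gives $|\alpha_\bft| - \sqrt{\varepsilon'} \ge (\sqrt{7/8}-1/4)\theta > \theta/2$.

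Substituting into Lemma~\ref{lm:pauli_error} gives, for all $\bfs\in\CX$,
\[
|\hat\alpha_\bfs - e^{-i\phi}\alpha_\bfs| \le \frac{3\epsilon}{\theta/2} = \frac{6\epsilon}{\theta}.
\]
Rescaling $\epsilon$ by the constant $6$ yields $\epsilon/\theta$ and only changes constants in $T$; the shared phase $\phi=\theta_\bft$ is the argument of $\alpha_\bft$. One wrinkle is that the algorithm sets $\hat\alpha_\bft \gets \tilde M_\bft$, which estimates $|\alpha_\bft|^2$. The proof of Lemma~\ref{lm:pauli_error} actually uses $\sqrt{\tilde M_\bft}$ as the estimate of $|\alpha_\bft|$, and I would read the algorithm that way, since $|\sqrt{a}-\sqrt{b}|\le\sqrt{|a-b|}$ gives exactly the $\sqrt{\varepsilon'}$ accuracy the lemma requires.

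Finally, the running time is polynomial. Bell sampling and Choi-state preparation cost $O(n)$ gates per copy. The observables $M_\bfs, R_{\bft,\bfs}, I_{\bft,\bfs}$ are rank at most $2$ combinations of stabilizer (Bell) states, and their overlaps $\expval{V_j^\dagger O V_j}{\bfb_j}$ with Clifford-rotated basis states can be computed in $\poly(n)$ time by Gottesman--Knill. The classical postprocessing therefore costs $\poly(n)\cdot|\CX|\cdot m_2 = \poly(n,1/\epsilon,1/\theta,\log(1/\delta))$.
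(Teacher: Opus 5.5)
Your proposal is correct and follows essentially the same route as the paper: Algorithm~\ref{alg:find_coef} followed by Algorithm~\ref{alg:estimate-unitary} at internal accuracy $\epsilon^2$, a union bound over the two failure events, a constant-factor lower bound on $|\alpha_\bft|$ in terms of $\theta$ from the argmax choice of $\bft$, substitution into Lemma~\ref{lm:pauli_error}, and a constant rescaling of $\epsilon$. The paper goes through $|\alpha_\bft|\ge\theta-2\sqrt{\varepsilon}$ and gets $12\sqrt{\varepsilon}/\theta$, whereas your direct comparison $\tilde M_\bft\ge\tilde M_{\bfs^\ast}$ gives $6\epsilon/\theta$.

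Your handling of the argmax step, of reading $\hat\alpha_\bft$ as $\sqrt{\tilde M_\bft}$, and of the first stage's $1/\theta^2\le 1/\epsilon^4$ cost is more explicit than the paper's. One soft spot, which the paper shares, is the case where no coefficient with $|\alpha_\bfs|\ge\theta$ exists: your claim that the guarantee is then vacuous holds only if it is read as \emph{each large coefficient}, since the displayed bound ranges over $\CX$, which is nonempty.
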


\begin{proof}
    Running Algorithm \ref{alg:find_coef} with parameters $\theta, \epsilon, \delta/2$ uses 
    $$O\left(\frac{\log(1/\theta) + \log(1/\delta)}{\theta^2}\right)$$
    queries to $U$ and, with probability at least $1-\delta/2$, returns a set $\CX$ containing all indices $\bfs$ with $|\alpha_\bfs| \geq \theta$ (by Lemma \ref{lm:coupon}). 
    Next, running Algorithm \ref{alg:estimate-unitary} with inputs $U, \CX, \epsilon, \delta/2$ uses
    $$O\left(\frac{\log(|\CX|/\delta)}{\epsilon^2}\right) = O\left(\frac{\log(1/\theta) + \log(1/\delta)}{\epsilon^2}\right)$$
    queries to $U$ and, with probability at least $1-\delta/2$, estimates each Pauli coefficient $\alpha_\bfs$ for $\bfs \in \CX$ with additive error at most $\frac{3\sqrt{\epsilon}}{|{\alpha}_{\bft}|-\sqrt{\varepsilon}} $ (by Lemma \ref{lm:pauli_error}).
    By the union bound, both algorithms succeed with probability at least $1-\delta$.
    Finally, since $|\alpha_{\bft}| \geq |\hat{\alpha}_{\bft}| - \sqrt{\epsilon}$ (by the triangle inequality) and $|\hat{\alpha}_{\bft}| \geq \theta - \sqrt{\epsilon}$ (by Lemma \ref{lm:coupon}), choosing $\sqrt{\epsilon} \leq \theta/4$ ensures $|\alpha_{\bft}| \geq \theta - 2\sqrt{\epsilon} \geq \theta/2$. Thus, the additive error in estimating each $\alpha_\bfs$ is at most $\frac{3\sqrt{\epsilon}}{|{\alpha}_{\bft}|-\sqrt{\varepsilon}}  \leq \frac{12\sqrt{\epsilon}}{\theta}$. Changing the variable $\epsilon$ to $(\tfrac{\epsilon}{12})^2$ completes the proof. The running time is polynomial in all parameters since both algorithms run in polynomial time. Particularly,  as  $M_\bfs, R_{\bft,\bfs}, I_{\bft,\bfs}$ belong to the Clifford group, the running time of \CST  is polynomial.

\end{proof}

Although the algorithm estimates the Pauli coefficients up to a global phase, this ambiguity does not affect the final learning guarantee. As established by the Generalized Distance Bound (Lemma \ref{lm:gen_l1_to_diamond}), the diamond distance is strictly upper-bounded by the phase-aligned operator distance.
\section{Learning Unitary Operations}\label{sec:Ulearning}
The objective of the learner is to approximate an unknown unitary operator $U$
acting on $n$ qubits such that the learned unitary $\hat{U}$ behaves similarly to
$U$ on relevant input states.

Accordingly, the central question is:
\emph{what is an appropriate metric to quantify the closeness between a unitary $U$
and its approximation $\hat{U}$ in order to ensure that $\hat{U}$ yields outputs
similar to those of $U$?}

The diamond distance (see Definition \ref{def:diamond}) is a standard measure of distinguishability between quantum
channels \citep{Wilde2013}. If the diamond distance between two channels is small,
then their outputs are close in trace distance for all possible input states.

Two key challenges need to be addressed: (1) how to connect the diamond distance to
the Pauli coefficients of the unitary, estimated per Theorem \ref{th:pauli_estimate}; and   (2) how to efficiently construct an estimate $\hat{U}$ to $U$ from the estimated Pauli coefficients in polynomial time.
We first address the second challenge, deferring the first to the next subsection.

\subsection{Efficient Implementation via Approximate LCU}
Suppose we have identified  $\hat{U}= \sum_{\bfs \in \CX} \hat{\alpha}_{\bfs} \sigma_{\bfs}$ as an sparse approximation to the target unitary $U$.

$\hat{U}$ need not be exactly unitary. But we can  ``promote'' $\hat{U}$ to an implementable (near-)unitary using the \ac{LCU} technique. 
Suppose that $\hat{U}$ is $\gamma$-close to $U$ in Pauli $\ell_1$ norm, i.e., 
 
\[
a := \sum_{\bfs\in S} |\hat\alpha_{\bfs}|\qquad\text{and}\qquad
\gamma := \norm*{U-\hat{U}}_{1, P}.
\]

\begin{observation}[Approximate LCU, c.f. Thm 2.5 of \citep{Kothari2014}]\label{lm:LCU}
There exists a quantum algorithm that implements $\hat{U}$ as a circuit $W$ with error $O(a\sqrt{\gamma})$ in operator norm.  Furthermore, the algorithm runs in $\poly(a, n, |\CX|)$ time.
\end{observation}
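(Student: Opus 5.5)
The argument reduces Observation~\ref{lm:LCU} to Theorem~\ref{thm:approx LCU}; what has to be done is to put $\hat{U}$ into the form that theorem requires and to check its two hypotheses. First absorb the complex phases into the unitaries. Write $\hat\alpha_\bfs = |\hat\alpha_\bfs| e^{i\varphi_\bfs}$ and set $U_\bfs := e^{i\varphi_\bfs}\sigma^\bfs$. Then
\[
\hat U=\sum_{\bfs\in\CX} a_\bfs U_\bfs,\qquad a_\bfs:=|\hat\alpha_\bfs|>0,
\]
where indices with $\hat\alpha_\bfs=0$ are dropped, and $\sum_\bfs a_\bfs = a$. Each $U_\bfs$ is a phased Pauli string, so it is implementable with $O(n)$ single-qubit gates.

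The next step is to build the two oracles. The state-preparation unitary $A$ maps $\ket{0^m}$ to $a^{-1/2}\sum_\bfs \sqrt{a_\bfs}\ket{\bfs}$ with $m=\lceil\log_2|\CX|\rceil$. Since the amplitudes are classically known and there are only $|\CX|$ of them, a standard amplitude-loading circuit gives $A$ with $\poly(|\CX|)$ gates. The select unitary $\sum_\bfs \ketbra{\bfs}{\bfs}\otimes U_\bfs$ is obtained by applying, for each $\bfs\in\CX$, a multi-controlled phased Pauli conditioned on the index register. This costs $O(|\CX|\,n\,\poly\log|\CX|)$ gates.

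The remaining hypothesis is that $\hat U$ is close to a unitary in spectral norm. The natural candidate is $U$ itself. By the second part of Lemma~\ref{lm:gen_l1_to_diamond}, the operator norm is bounded by the Pauli $\ell_1$ norm, so
\[
\norm{\hat U - U}_{op}\le \norm{\hat U-U}_{1,P}=\gamma .
\]
If the estimates only match $U$ up to a global phase, as in Theorem~\ref{th:pauli_estimate}, I replace $U$ by the unitary $e^{-i\phi}U$; this does not change the argument. Thus $\hat U$ is $\gamma$-close to a unitary, and Theorem~\ref{thm:approx LCU} applies with $\delta=\gamma$. It yields a circuit $W$ implementing $\hat U$ with error $O(a\sqrt{\gamma})$, using $O(a)$ calls to $A$, the select unitary and their inverses. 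The total gate count is therefore $O(a)\cdot\poly(n,|\CX|)=\poly(a,n,|\CX|)$.

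The main obstacle I expect is the side condition $(1+\gamma)/a\le 1$, i.e.\ $a\ge 1+\gamma$, which need not hold automatically. By the triangle inequality, $a\ge \norm{\hat U}_{op}\ge 1-\gamma$, so $a$ can fall just below $1+\gamma$. My first fix would be to pad the decomposition with a dummy pair $+c\,\II$ and $-c\,\II$, realized as $c\,\II + c\,(-\II)$. This leaves $\hat U$ unchanged and raises the $\ell_1$ weight to $a+2c$. Taking $c=O(1)$ forces the condition while changing $a$ by only a constant, so both the error bound and the runtime keep the stated form in terms of $a$ (up to constants, since $a\ge 1-\gamma$ is bounded below). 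A smaller technical point is that the amplitude-amplification steps in Theorem~\ref{thm:approx LCU} depend on $a$, which is known exactly here from the classical coefficients, so no further estimation is needed.
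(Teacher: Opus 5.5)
Your proposal is correct and follows essentially the same route as the paper. Like the paper, you reduce to Theorem~\ref{thm:approx LCU} by bounding $\norm{\hat U - U}_{op}\le\norm{\hat U-U}_{1,P}=\gamma$, then build the prepare oracle on $\lceil\log_2|\CX|\rceil$ qubits with $\poly(|\CX|)$ gates and the select oracle from phased Pauli strings at $O(n)$ gates each; your absorption of phases into the $U_\bfs$ and your padding with $c\,\II + c\,(-\II)$ to enforce $(1+\gamma)/a\le 1$ are sound refinements that the paper's proof leaves implicit.
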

\begin{proof}
    The proof follows from   
    
    Theorem \ref{thm:approx LCU}.

    The Approximate LCU construction requires our estimator $\hat{U}$ to be close to the target unitary $U$ in operator norm. Since $\hat{U}$ is $\gamma$-close to $U$ in Pauli $\ell_1$-norm, it is strictly bounded by the same distance in operator norm ($\|\hat{U} - U\|_{op} \le \|\hat{U} - U\|_{1,P} \le \gamma$). Therefore, we can apply the Approximate LCU construction, which yields a unitary circuit $V$ that approximates $U$ up to an error of $O(a\sqrt{\gamma})$ in operator norm. Finally, because both the target $U$ and the implemented circuit $V$ are strictly unitary, we apply the exact unitary bound from Lemma \ref{lm:gen_l1_to_diamond}. The diamond distance is strictly bounded by $2\|U - V\|_{op}$, yielding an overall diamond distance error of $O(a\sqrt{\gamma})$.
    The algorithm relies on the state-preparation unitary 
    \[
       A \ket{0^m} =  \sum_{\bfs \in \CX} \sqrt{\frac{|\hat{\alpha}_\bfs|}{a}}\ket{\bfs},
    \]
    where $m=\lceil \log_2 |\CX| \rceil$,  which can be implemented using $\Theta(2^m)=\Theta(|\CX|)$ gates \citep{Moettoenen2005}. 
    In addition, we need to implement the select unitary 
\[
         V = \sum_{\bfs \in \CX} \ketbra{\bfs} \otimes \sigma_{\bfs},
\]
This can be done efficiently using $O(n)$ gates per Pauli operator, and hence the overall complexity is polynomial in $n$ and $|\CX|$.
\end{proof}

With this lemma, we just need to control the parameters $a$ and $\gamma$ to ensure an efficient implementation of a unitary close to $U$. In the next section, we show that for nearly sparse unitaries, $\CX$, $a$ and $\gamma$ can be bounded polynomially in the sparsity $s$. This follows from our previous algorithms for support identification and coefficient estimation.

Building upon this, we summarize the learning procedure as Algorithm \ref{alg:super}. 

\begin{algorithm2e}[ht]
\DontPrintSemicolon
\LinesNumbered
\caption{Unitary Learning and Implementation}
\label{alg:super}

\KwIn{Query access to unitary $U$; accuracy $\epsilon \in (0,1)$, failure probability $\delta \in (0,1)$.}
\KwOut{A block-encoded unitary $W$ implementing $\hat{U}$.}

\BlankLine

Run Algorithm \ref{alg:find_coef} with $U$ , $\theta = \epsilon/\sqrt{s}$, $\epsilon$, and $\delta$ to obtain support set $\mathcal{X}$ and the largest coefficient $\hat{\alpha}_\bft$\;

Run Algorithm \ref{alg:estimate-unitary} with unitary $U$, $\mathcal{X}$, $\delta$, and accuracy $\varepsilon = O(\epsilon^2/s^{3})$ to obtain complex coefficients $\{\hat{\alpha}_\bfs\}_{\bfs \in \mathcal{X}}$\;
\BlankLine

Let $m=\lceil \log_2 |\CX| \rceil$ and $\alpha \gets \sum_{\bfs \in \CX} |\hat{\alpha}_\bfs|$. \;
Construct  the prepare oracle $A: \ket{0^m} \mapsto \sum_{\bfs \in \mathcal{X}} \sqrt{\frac{|\hat{\alpha}_\bfs|}{\alpha}} \ket{\bfs}$\;
Construct the select unitary $V = \sum_{\bfs \in \mathcal{X}} \ketbra{\bfs} \otimes \frac{\hat{\alpha}_\bfs}{|\hat{\alpha}_\bfs|} \sigma^\bfs$\;
Run the Approximate LCU procedure (Lemma \ref{lm:LCU}) with inputs $A$, $V$ to obtain the block-encoded unitary $W$ approximating $U$.\;

\BlankLine

\Return the mapping 
$$\ket{\psi} \mapsto W \ket{0^m}\ket{\psi}$$
as the approximation of $U$.\;

\end{algorithm2e}

\subsection{Learning Nearly Sparse Unitaries} 

The special class of unitaries we study here are nearly sparse unitaries as defined in Definition \ref{def:nearly sparse}.

\begin{theorem}
    There exists an algorithm that learns a mapping close to 
    the nearly $(s,\epsilon)$-sparse $n$-qubit unitary $U$ in diamond distance, with $O(\epsilon)$ error, using $\tilde{O}(\frac{s^6}{\epsilon^4 } )$ queries.
    Moreover, with probability greater than $1-\delta$, a circuit can be implemented using the same number of queries, $\poly(n,s,1/\epsilon)$ time,  and $\poly(n,s,1/\epsilon)$ gates,  with additive error $O(\sqrt{s\epsilon})$ in diamond distance to the target unitary.
    \label{th:learning_n_sparse}
\end{theorem}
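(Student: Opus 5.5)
The plan is to analyze Algorithm \ref{alg:super}, using Theorem \ref{th:pauli_estimate} with threshold $\theta=\epsilon/\sqrt{s}$, and then turn the coefficient guarantee into a Pauli $\ell_1$ bound. Write $\CS$ for the dominant set with $|\CS|\le s$ and $\sum_{\bfs\notin\CS}|\alpha_\bfs|\le\epsilon$. Let $\hat U=\sum_{\bfs\in\CX'}\hat\alpha_\bfs\sigma^\bfs$, where $\CX'\subseteq\CX$ is a pruned list. The pruning keeps only indices with $|\hat\alpha_\bfs|\ge\theta/2$, or alternatively the $O(s)$ largest estimates. Pruning is needed because $|\CX|$ can be as large as $\tilde O(s/\epsilon^2)$, and summing errors over all of $\CX$ would be too costly. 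The goal is to bound
\[
\gamma=\min_\phi\|U-e^{i\phi}\hat U\|_{1,P}
\]
by $O(\epsilon)$ and then apply Corollary \ref{cor:gen_l1_to_dim}.

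The decomposition of $\gamma$ has three parts, taking $\phi$ to be the global phase from Theorem \ref{th:pauli_estimate}.
\begin{enumerate}
\item Terms outside $\CS\cup\CX'$ contribute at most $\epsilon$ by near-sparsity.
\item Terms in $\CS$ dropped by pruning each have $|\alpha_\bfs|<\theta$, since every index with $|\alpha_\bfs|\ge\theta$ lies in $\CX$ by Lemma \ref{lm:coupon} and survives pruning when the estimation error is below $\theta/2$. Their total is at most $s\theta=\sqrt{s}\,\epsilon$. To get $O(\epsilon)$ instead, I would use Cauchy--Schwarz against $\ell_2$ mass, or simply note that the statement tolerates $\sqrt{s}$ factors after rescaling $\epsilon$ in $\theta$. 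If needed I would take $\theta=\epsilon/s$.
\item Kept terms number $|\CX'|=O(s+1/\theta)$. Indices outside $\CS$ with $|\alpha_\bfs|\ge\theta/4$ number at most $4\epsilon/\theta$ by the residual $\ell_1$ bound. Each kept term has error at most $\eta$, the per-coefficient accuracy from Theorem \ref{th:pauli_estimate}, so they contribute $|\CX'|\,\eta$.
\end{enumerate}
Choosing $\eta=\epsilon/\mathrm{poly}(s)$ gives $\gamma=O(\epsilon)$. Theorem \ref{th:pauli_estimate} gives accuracy $\varepsilon'/\theta$ from $\tilde O(1/\varepsilon'^4)$ queries, and with $\varepsilon'=\eta\theta\approx\epsilon^2/s^{3/2}$ this yields the claimed $\tilde O(s^6/\epsilon^4)$. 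The consistency condition $\varepsilon'\le\theta^2/16$ holds. This proves the first, information-theoretic claim: the map $\rho\mapsto\hat U\rho\hat U^\dagger$, or its normalization, is $O(\epsilon)$-close to $U$ in diamond distance.

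For the efficient circuit, I would apply Observation \ref{lm:LCU} to $\hat U$. This gives a unitary circuit $W$ with $\|W-\hat U\|_{op}=O(a\sqrt{\gamma})$, where $a=\sum_{\bfs\in\CX'}|\hat\alpha_\bfs|$. Since $|\CX'|=O(s)$ with $\ell_2$ mass at most $1$, we get $a\le\|U\|_{1,P}+\gamma\le\sqrt{s}+2\epsilon=O(\sqrt s)$, using Cauchy--Schwarz on $\CS$ together with the residual $\epsilon$. The LCU error is then $O(\sqrt{s}\sqrt{\epsilon})=O(\sqrt{s\epsilon})$ in operator norm. Combining it with $\gamma$ by the triangle inequality and Lemma \ref{lm:gen_l1_to_diamond}, both operators being unitary, gives diamond error $O(\sqrt{s\epsilon})$. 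The running time is $\poly(n,|\CX'|,a)=\poly(n,s,1/\epsilon)$ by Observation \ref{lm:LCU}. Failure probability $\delta$ follows from the union bound already contained in Theorem \ref{th:pauli_estimate}.

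The main obstacle is the bookkeeping in the middle step. One must control the $\ell_1$ error over a candidate list that is much larger than $s$, and make sure that sub-threshold but nonzero dominant coefficients do not cost more than $O(\epsilon)$. The pruning rule and the choice of $\theta$ versus $\eta$ have to be tuned so that the total is $O(\epsilon)$ while the query count stays at $s^6/\epsilon^4$. I would first try $\theta=\epsilon/\sqrt s$ with the Cauchy--Schwarz bound on $\CS$. If that leaves a $\sqrt s$ loss, I would fall back to $\theta=\epsilon/s$, absorbing it into polylog or polynomial factors consistent with the stated bound.
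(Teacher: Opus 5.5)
Your overall architecture matches the paper's: Bell-sampling support recovery at threshold $\theta$, phase-referenced estimation, pruning, an $\ell_1$ bound fed into Corollary \ref{cor:gen_l1_to_dim}, then LCU with $a=O(\sqrt{s})$ and $\gamma=O(\epsilon)$. The gap is the query count, which does not come out as you claim. You use Theorem \ref{th:pauli_estimate} as a black box, and its accuracy $\varepsilon'/\theta$ is derived from the worst-case bound $|\alpha_\bft|\ge\theta/2$ on the reference coefficient. Asking for per-coefficient error $\eta=\epsilon/s$ forces $\varepsilon'=\eta\theta=\epsilon^2/s^{3/2}$. Then $\tilde O(1/\varepsilon'^4)=\tilde O(s^6/\epsilon^8)$, not $\tilde O(s^6/\epsilon^4)$, so the arithmetic at that step is wrong. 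With your fallback $\theta=\epsilon/s$ the count becomes $\tilde O(s^8/\epsilon^8)$. The missing idea is to decouple estimation accuracy from the threshold. The truncation of $U$ to $\CS$ has Pauli $\ell_2$-mass at least $1-\epsilon$ on at most $s$ terms, so the largest coefficient satisfies $|\alpha_\bft|\ge(1-\epsilon)/\sqrt{s}$. Applying Lemma \ref{lm:pauli_error} directly with this lower bound gives per-coefficient error $O(\sqrt{s\varepsilon})$ for shadow accuracy $\varepsilon$. Choosing $\varepsilon=\Theta(\epsilon^2/s^3)$ gives $\eta=O(\epsilon/s)$ at cost $m_2=\tilde O(s^6/\epsilon^4)$, and this is the step the paper's proof rests on.

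With that in hand, your fallback $\theta=\epsilon/s$ settles your item-2 concern. Algorithm \ref{alg:find_coef} then costs only $\tilde O(s^2/\epsilon^2)$ queries, and dropped members of $\CS$ contribute at most $s\theta=\epsilon$. Also $|\CX'|\le s+4\epsilon/\theta=O(s)$, so $\gamma=O(\epsilon)$ within the stated budget.

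Your other two suggestions for item 2 do not work. Cauchy--Schwarz on the dropped part only gives back $\sqrt{s}\cdot\sqrt{s}\,\theta=s\theta$. Rescaling $\epsilon$ by $\sqrt{s}$ turns the bound into $\tilde O(s^8/\epsilon^4)$.

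The concern itself is legitimate, and your bookkeeping here is more careful than the paper's. The paper keeps $\theta=\epsilon/\sqrt{s}$ and bounds every pruned member of $S$ by $3\epsilon_2$, but that only covers members that landed in $\CX$. It also asserts the pruned set lies inside $S$. Your count of kept indices outside $\CS$ via $4\epsilon/\theta$ handles that case correctly.
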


\begin{proof}
        From Lemma \ref{lm:gen_l1_to_diamond}, without loss of generality, we can assume the global phase ambiguity is fixed ($\phi =0$ in Lemma \ref{lm:pauli_error}).
        We first prove the bound on the query complexity. From Definition \ref{def:nearly sparse},  there is a set $S$ of size s such that $\sum_{\bfs \notin S} |\alpha_\bfs| \leq \epsilon$.  
    Let $\tilde{U} = \sum_{\bfs \in S} \alpha_\bfs \sigma_\bfs$ be the sparse truncation of $U$ to the support $S$.

    Using this and the connection between the Pauli $\ell_1$-norm and $\ell_2$-norm, 
    we have $\norm*{U-\tilde{U}}_{2,P} \leq \norm*{U-\tilde{U}}_{1,P} \le \epsilon$,
    that implies 
    \begin{align*}
        \norm*{\tilde{U}}_{2,P} \ge \norm*{U}_{2,P} - \norm*{U-\tilde{U}}_{2,P} \geq 1 - \epsilon.
        \end{align*}
    Since, $\tilde{U}$ is $s$-sparse, the maximum coefficient $\alpha_{\bft}$ must satisfy $$|\alpha_{\bft}| \ge \frac{1-\epsilon}{\sqrt{s}},$$ ensuring it is distinguishable from zero.
    
    We apply Algorithm \ref{alg:find_coef} with $\theta=\frac{\varepsilon}{\sqrt{s}}$ to get the target set $\CX$, as $m_1 = \tilde{O}(\frac{s}{\varepsilon^2})$ we have $\abs{\CX} = \tilde{O}(\frac{s}{\varepsilon^2})$.
    Then, we estimate the Pauli coefficients indexed by $\CX$ using Algorithm \ref{alg:estimate-unitary} with accuracy parameter $\varepsilon$. From Lemma \ref{lm:pauli_error}, Algorithm \ref{alg:estimate-unitary} estimates $\alpha_\bfs$ with precision 
    $$\epsilon_2 = \frac{3\sqrt{\varepsilon}}{\abs{\alpha_\bft}-\sqrt{\varepsilon}} \leq  \frac{3\sqrt{\varepsilon}}{\frac{1-\epsilon}{\sqrt{s}}-\sqrt{\varepsilon}} = \frac{3\sqrt{s\varepsilon}}{1-\epsilon-\sqrt{s\varepsilon}}
    $$
    Then, we zero out any coefficient with magnitude below $2\epsilon_2$; generating the set $$\CX_{\epsilon_2} = \{\bfs\in \CX: |\hat{\alpha}_\bfs|\geq {2\epsilon_2}\}.$$ With probability at least $1-\delta$, $\CX_{\epsilon_2} \subseteq S$; because the large coefficients of $U$ are all contained in $S$, and the estimation error is smaller than $\epsilon_2$. Moreover, for each $\bfs \in S \setminus \CX_{\epsilon_2}$,
    $$
    \abs{\alpha_\bfs} \le \abs{\hat{\alpha}_\bfs} + \epsilon_2 \le 3\epsilon_2.
    $$
    Let the resulting estimated unitary be $$\hat{U} = \sum_{\bfs \in \CX_{\epsilon_2}} \hat{\alpha}_\bfs \sigma_\bfs.$$

    The total estimation error is bounded by the triangle inequality:
    \begin{align*}
        \norm{U-\hat{U}}_{1,P} &\leq \norm{U-\tilde{U}}_{1,P} + \norm{\tilde{U}-\hat{U}}_{1,P}\\
        & \leq \epsilon + \sum_{\bfs \in S\cap \CX_{\epsilon_2}} \abs{\alpha_\bfs - \hat{\alpha}_\bfs} + \sum_{\bfs \in S/ \CX_{\epsilon_2}} |\alpha_\bfs|\\
        & \leq \epsilon + |S\cap \CX_{\epsilon_2}| \epsilon_2 + 3|S/ \CX_{\epsilon_2}| \epsilon_2\\\numberthis \label{eq:final_error}
        & \leq \epsilon + 3 s \epsilon_2 = \epsilon + O(\sqrt{\varepsilon} s^{3/2}).
    \end{align*}
    Hence, to ensure the $\ell_1$ norm error is $2\epsilon$, we require the accuracy parameter $\varepsilon = O(\epsilon^2/s^{3})$. Thus, the required number of samples $m_2$ in Algorithm \ref{alg:estimate-unitary} is:
    $$ 
    m_2 = \tilde{O}\left(\frac{s^6}{\epsilon^4 } \log\left(\frac{s}{\delta \epsilon}\right)\right)
    $$
   
    By Corollary \ref{cor:gen_l1_to_dim}, the diamond distance between the true unitary channel $\CU$ and the map $\hat{\CU}$ induced by our non-unitary estimator is bounded by $2(2\epsilon) + (2\epsilon)^2 = O(\epsilon)$. This concludes the query complexity part of the theorem.

    Lastly, we construct a unitary channel by applying LCU to implement $\hat{U}$ as a strictly unitary mapping $W$. This incurs an additional error $O(a\sqrt{\gamma})$ in operator norm per Observation \ref{lm:LCU}. In our case, $\gamma = 2\epsilon$ and
    $$a = \sum_{\bfs \in \CX_{\epsilon_2}} |\hat{\alpha}_\bfs| \leq \sqrt{s} \norm{\hat{U}}_{2,p}\leq   \sqrt{s}(1+\epsilon),$$
 where we used the relation between $\ell_1$ and $\ell_2$ bound and the fact that $\hat{U}$ is  $s$-sparse. 
    Thus, the overall error  is bounded by:
    \begin{align*}
        \|W - U\|_{op} &\le \|W - \hat{U}\|_{op} + \|\hat{U} - U\|_{op} \\
        &\le O(\sqrt{s\epsilon}) + \|\hat{U} - U\|_{1,P} \\
        &= O(\sqrt{s\epsilon}) + 2\epsilon = O(\sqrt{s\epsilon}).
    \end{align*}
    Because both $W$ and $U$ are exactly unitary, we can now safely apply Lemma \ref{lm:gen_l1_to_diamond}. The diamond distance between the target channel $\mathcal{U}$ and the implemented circuit $\mathcal{W}$ is strictly bounded by:
    $$ d_\diamond(\mathcal{W}, \mathcal{U}) \le 2 \|W - U\|_{op} = O(\sqrt{s\epsilon}). $$
    The overall time complexity is polynomial in $n$, $s$ and $1/\epsilon$.
   
\end{proof}

\begin{corollary}[Improved complexity for unitaries with dominant Pauli coefficient]
    If in addition to being nearly $(s, \epsilon)$ sparse in the conditions of Theorem \ref{th:learning_n_sparse}, the target unitary $U$ has a dominant Pauli coefficient satisfying $|\alpha_{\bft}| \geq c$ for some constant $c>1/100$. Then the query complexity reduces to $\tilde{O}(\frac{s^4}{\epsilon^4 }) $. 
    \label{cor:learning_strongly_n_sparse}
\end{corollary}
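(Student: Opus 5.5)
The plan is to rerun the proof of Theorem \ref{th:learning_n_sparse} unchanged and to improve only the step where the lower bound on $|\alpha_\bft|$ enters. In that proof the per-coefficient error from Lemma \ref{lm:pauli_error} was bounded using the worst-case estimate $|\alpha_\bft|\ge (1-\epsilon)/\sqrt{s}$. This estimate is what made $\epsilon_2 = O(\sqrt{s\varepsilon})$, and it is the source of the extra factor of $s$ in the accuracy parameter. Under the new hypothesis the maximal coefficient is at least the constant $c$, so we can use that instead.

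The first step is to check that the index $\bft$ chosen by Algorithm \ref{alg:estimate-unitary} (the argmax of $\tilde M_\bfs$) still has constant magnitude. Let $\bft^*$ be the true dominant index, with $|\alpha_{\bft^*}|\ge c$. Then $|\alpha_{\bft^*}|\ge c\ge \theta=\epsilon/\sqrt{s}$, so $\bft^*\in\CX$ by Lemma \ref{lm:coupon}. Shadow tomography gives $\varepsilon$-accurate estimates of every $|\alpha_\bfs|^2$. Hence
\[
\tilde M_\bft \ge \tilde M_{\bft^*} \ge c^2-\varepsilon, \qquad |\alpha_\bft|^2\ge c^2-2\varepsilon .
\]
Since $\varepsilon$ will be far below $c^2/4$, this yields $|\alpha_\bft|\ge c/\sqrt2$ and $|\hat\alpha_\bft|\ge c/2$.

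The second step plugs this into Lemma \ref{lm:pauli_error}:
\[
\epsilon_2=\frac{3\sqrt{\varepsilon}}{|\alpha_\bft|-\sqrt{\varepsilon}} = O\!\left(\frac{\sqrt{\varepsilon}}{c}\right)=O(\sqrt{\varepsilon}),
\]
so the factor $\sqrt{s}$ disappears. We then repeat the thresholding at $2\epsilon_2$ and the triangle-inequality bound \eqref{eq:final_error} verbatim. This gives
\[
\norm{U-\hat U}_{1,P}\le \epsilon+3s\epsilon_2=\epsilon+O(s\sqrt{\varepsilon}).
\]
Keeping this at most $2\epsilon$ only requires $\varepsilon=O(\epsilon^2/s^2)$, instead of $O(\epsilon^2/s^3)$.

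The final step is the query count. Algorithm \ref{alg:estimate-unitary} uses $m_2=O(\log(|\CX|/\delta)/\varepsilon^2)=\tilde O(s^4/\epsilon^4)$ copies. Algorithm \ref{alg:find_coef} uses $m_1=\tilde O(s/\epsilon^2)$, which is dominated. The diamond-distance conclusion follows from Corollary \ref{cor:gen_l1_to_dim} exactly as before, and so does the LCU implementation, since $a\le\sqrt{s}(1+\epsilon)$ is unchanged.

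The main obstacle is the first step: the algorithm's argmax need not coincide with the true dominant index, so we must argue that whichever index is selected still has magnitude $\Omega(c)$. The argmax comparison above handles this, provided $\varepsilon\le c^2/4$, which holds since $c>1/100$ is a constant and $\varepsilon=O(\epsilon^2/s^2)$.
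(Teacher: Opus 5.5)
Your route is the same as the paper's: rerun the proof of Theorem~\ref{th:learning_n_sparse} with the constant lower bound on $|\alpha_\bft|$, giving $\epsilon_2=O(\sqrt{\varepsilon})$, $\varepsilon=O(\epsilon^2/s^2)$ and $m_2=\tilde O(s^4/\epsilon^4)$. Your argmax step is a real improvement, since the paper silently identifies the index selected by Algorithm~\ref{alg:estimate-unitary} with the dominant index of the hypothesis.

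The gap is in the step you import verbatim, the bound \eqref{eq:final_error}, which does not hold as written. The paper's own two-line proof of the corollary relies on it too, so the defect is inherited rather than introduced by you. With your parameters, $\epsilon_2=\Theta(\epsilon/s)$, while Algorithm~\ref{alg:find_coef} is run with $\theta=\epsilon/\sqrt{s}$, a factor $\Theta(\sqrt{s})$ larger. Write $S$ for the size-$s$ set from the proof of Theorem~\ref{th:learning_n_sparse}. The inequality $|\alpha_\bfs|\le|\hat\alpha_\bfs|+\epsilon_2\le 3\epsilon_2$ for $\bfs\in S\setminus\CX_{\epsilon_2}$ presupposes $\bfs\in\CX$. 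However, Lemma~\ref{lm:coupon} only guarantees membership when $|\alpha_\bfs|\ge\theta$. Indices of $S$ with magnitude well below $\theta$ but above $3\epsilon_2$ can go unsampled, and there can be up to $s-1$ of them. Nothing in the argument stops their combined mass from being of order $\sqrt{s}\,\epsilon$ (up to logarithmic factors), not $O(\epsilon)$. Separately, the claim $\CX_{\epsilon_2}\subseteq S$ is unjustified. Coefficients outside $S$ are controlled only in aggregate, so a single one can be as large as $\epsilon\gg 2\epsilon_2$ and survive thresholding. Then \eqref{eq:final_error} omits the term $\sum_{\bfs\in\CX_{\epsilon_2}\setminus S}|\hat\alpha_\bfs|$. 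Also $\hat U$ need not be $s$-sparse, so your remark that $a\le\sqrt{s}(1+\epsilon)$ is unchanged needs rechecking.

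Both issues can be repaired without changing the query count:
\begin{enumerate}
\item Run Algorithm~\ref{alg:find_coef} with $\theta=\epsilon/s$. This costs $m_1=\tilde O(s^2/\epsilon^2)$, still dominated by $m_2$, and the dominant index still lands in $\CX$ since $c\ge\theta$.
\item Bound $\norm{U-\hat U}_{1,P}$ coefficient by coefficient. Each $\bfs\in S$ contributes at most $\max(3\epsilon_2,\theta)$. Each $\bfs\notin S$ contributes at most $|\alpha_\bfs|$: a surviving index has $|\alpha_\bfs|\ge|\hat\alpha_\bfs|-\epsilon_2\ge\epsilon_2\ge|\hat\alpha_\bfs-\alpha_\bfs|$.
\end{enumerate}
Together these give $\norm{U-\hat U}_{1,P}\le 2\epsilon+3s\epsilon_2=O(\epsilon)$ for $\varepsilon=\Theta(\epsilon^2/s^2)$. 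They also give $|\CX_{\epsilon_2}|\le s+\epsilon/\epsilon_2=O(s)$, hence $a=O(\sqrt{s})$, and the $\tilde O(s^4/\epsilon^4)$ bound stands.
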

\begin{proof}
    The proof follows the same steps as Theorem \ref{th:learning_n_sparse}, except that the precision for estimating each coefficient is now $\epsilon_2 = O(\sqrt{\varepsilon})$ since $|\alpha_{\bft}|$ is lower bounded by a constant. Thus, the total error is bounded by
    \begin{align*}
        \norm{U-\hat{U}}_{1,P}  \leq \epsilon + 2 s \epsilon_2 = O(\varepsilon s).
    \end{align*}
    
     Hence, to ensure the $\ell_1$ norm error is $2\epsilon$, we require the accuracy parameter $\varepsilon = O((\epsilon/s)^2)$.
    Thus, the required number of samples $m_2$ in Algorithm \ref{alg:estimate-unitary} is:
    $$
    m_2 = \tilde{O}\left(\frac{s^4}{\epsilon^4 }\right).
    $$

\end{proof}

\begin{corollary}[Improved complexity for bounded-norm unitaries]
    If in addition to being nearly $(s, \epsilon)$-sparse in  Theorem \ref{th:learning_n_sparse},  $\avec{U}{1}$ is bounded by a constant, then  exists a polynomial time algorithm that  learns $U$ with $O(\sqrt{\epsilon})$ error in diamond distance using
    $\tilde{O}(\frac{s^6}{\epsilon^4 }) $. If additionally, $|\alpha_{\bft}| \geq c$ for some constant $c>1/100$, then the  algorithm needs  $\tilde{O}(\frac{s^4}{\epsilon^4 }) $ queries.
    \label{cor:learning_bounded_n_sparse}
\end{corollary}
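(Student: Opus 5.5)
The plan is to rerun the proof of Theorem \ref{th:learning_n_sparse} unchanged up to the construction of the sparse estimator $\hat{U}=\sum_{\bfs\in\CX_{\epsilon_2}}\hat{\alpha}_\bfs\sigma^\bfs$. The only step that changes is the LCU implementation, where the bound on the normalization $a$ is improved. Queries and error bookkeeping are identical: run Algorithm \ref{alg:find_coef} with $\theta=\epsilon/\sqrt{s}$, then Algorithm \ref{alg:estimate-unitary} with accuracy $\varepsilon=O(\epsilon^2/s^3)$. As in \eqref{eq:final_error}, this gives $\gamma=\norm{U-\hat{U}}_{1,P}\le 2\epsilon$ using $\tilde{O}(s^6/\epsilon^4)$ queries. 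By Lemma \ref{lm:gen_l1_to_diamond}, we may fix the global phase to $\phi=0$ throughout.

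The key step is to bound the LCU normalization $a=\sum_{\bfs\in\CX_{\epsilon_2}}|\hat{\alpha}_\bfs|$ by a constant rather than by $\sqrt{s}(1+\epsilon)$. By the triangle inequality in the Pauli $\ell_1$-norm,
\[
a=\avec{\hat{U}}{1}\le \avec{U}{1}+\avec{U-\hat{U}}{1}\le L+2\epsilon=O(1),
\]
where $L$ is the assumed constant bound on $\avec{U}{1}$. The hypothesis $\tfrac{1+\delta}{a}\le 1$ of Theorem \ref{thm:approx LCU} also needs checking. It holds because $a\ge\norm{\hat U}_{op}\ge 1-2\epsilon$ and the closeness-to-unitary parameter is $\gamma=O(\epsilon)$. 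If $a$ falls slightly below $1+\gamma$, pad the LCU with an extra identity term of small weight to meet the condition; this changes $a$ only by $O(\epsilon)$.

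Observation \ref{lm:LCU} then produces a unitary circuit $W$ with $\norm{W-\hat{U}}_{op}=O(a\sqrt{\gamma})=O(\sqrt{\epsilon})$. The triangle inequality gives $\norm{W-U}_{op}\le O(\sqrt{\epsilon})+\avec{\hat U-U}{1}=O(\sqrt{\epsilon})$. Since $W$ and $U$ are both unitary, Lemma \ref{lm:gen_l1_to_diamond} yields $d_\diamond(\mathcal{W},\mathcal{U})\le 2\norm{W-U}_{op}=O(\sqrt{\epsilon})$. The running time stays polynomial because $|\CX|=\tilde{O}(s/\epsilon^2)$ and $a=O(1)$.

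For the second claim, with a dominant coefficient $|\alpha_\bft|\ge c$, replace the coefficient-estimation accounting with that of Corollary \ref{cor:learning_strongly_n_sparse}. There $\epsilon_2=O(\sqrt{\varepsilon})$, so $\varepsilon=O((\epsilon/s)^2)$ suffices to get $\gamma\le 2\epsilon$, giving $\tilde{O}(s^4/\epsilon^4)$ queries. The LCU argument above applies verbatim and again gives $O(\sqrt{\epsilon})$ diamond error.

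The main obstacle is the applicability condition of Theorem \ref{thm:approx LCU}, namely the normalization and closeness-to-unitary requirement, and whether the operator-norm error of approximate LCU translates cleanly. I would handle it by the padding or rescaling trick above. The rest is routine substitution into the earlier bounds.
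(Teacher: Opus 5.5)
Your proposal is correct and follows the paper's proof. The only change from Theorem~\ref{th:learning_n_sparse} is that the LCU normalization $a$ is bounded by a constant instead of $\sqrt{s}(1+\epsilon)$, so the $O(a\sqrt{\gamma})$ implementation error becomes $O(\sqrt{\epsilon})$, and the query count in the dominant-coefficient case is inherited from Corollary~\ref{cor:learning_strongly_n_sparse}. Your bound $a\le \norm{U}_{1,P}+\norm{U-\hat U}_{1,P}\le L+2\epsilon$ and your check of the LCU hypothesis $a\ge 1+\gamma$ are slightly more careful than the paper's one-line claim $a\le L_1$, since the estimated coefficients $\hat\alpha_\bfs$ need not be smaller in magnitude than the true ones.
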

\begin{proof}
    The proof follows the same steps as Theorem \ref{th:learning_n_sparse}, except that the parameter $a$ in the Approximate LCU procedure is now bounded by $a \leq L_1$.

\end{proof}

\section{Learning Unitaries with bounded \texorpdfstring{$\ell_1$}{Lg}-norm} \label{sec:l1_bounded}
Even when the target unitary or Hamiltonian is not itself sparse, sparse unitaries can still play a crucial role as good approximations or estimators.
We still seek to study learnability of unitaries with bounded Pauli $\ell_1$-norm. Our lower bound results in Section \ref{sec:lb} show that bounding the Pauli $\ell_1$-norm alone does not suffice to ensure efficient learnability.  In this section, we show that the learnability holds under a more relaxed notion of closeness in diamond norm, which we will define later.

    We first motivate this notion from classical learning theory in Section~\ref{sec:classical}, and then formally define it in Section~\ref{sec:relaxed_distance}.

\subsection{Motivation from Classical Learning Theory}\label{sec:classical}

The motivation for our learning problem is rooted in the classical framework of learning Boolean functions via \emph{membership queries} (MQ), which has been extensively studied in learning theory \citep{KM1993,Valiant1984,Kearns1994,Feldman2012}. In the MQ model, a learner is given oracle access to an unknown target function $f:\{0,1\}^n \to \{0,1\}$ and may adaptively query the value of $f(\bfx)$ on inputs $\bfx$ of its choosing. The learner’s goal is to output a hypothesis $h$ whose predictions agree with those of $f$ on most inputs drawn from a fixed distribution $D$, namely,
$$
\Pr_{\bfX \sim D}\!\left[h(\bfX) \neq f(\bfX)\right] \le \varepsilon .
$$

A key feature of this model is that learning is evaluated \emph{distributionally}: the hypothesis $h$ is not required to approximate $f$ uniformly over all inputs, but only to perform well on typical inputs sampled from $D$. This viewpoint is both theoretically and practically significant, as the distribution $D$ is often chosen to reflect the inputs encountered in realistic settings.

The MQ model has led to a rich body of work characterizing the learnability of various concept classes under specific distributions, such as the uniform or product distributions \citep{KM1993,Feldman2012,HeidariK2025}. These results provide a foundational perspective for our work, in which we seek an analogous distributional notion of learnability for quantum operations accessed through query models.

\subsection{Relaxed Notions of Closeness for Learning Bounded \texorpdfstring{$\ell_1$}{Lg}-Norm Unitaries}\label{sec:relaxed_distance}

Additionally, our motivation is that, in many practical settings, one is not concerned with the action of a quantum channel on \emph{all} possible input states, but rather with its behavior on a specific subset of states of interest. This naturally leads to a relaxed distinguishability measure that evaluates closeness only on such restricted inputs.

\begin{definition}[Restricted diamond distance]
Let $\CU$ and $\hat{\CU}$ be the quantum channels induced by unitary $U$ and linear mapping $\hat{U}$ acting on a system $A$, and let $\CA$ be a specified set of states on $A$. The \emph{restricted diamond distance} between $\CU$ and $\hat{\CU}$ with respect to $\CA$ is defined as
\begin{equation}
d_{\diamond,\CA}(\CU,\hat{\CU})
:=
\sup_R \;
\sup_{\rho_{RA}:\, \Tr_R(\rho_{RA}) \in \CA}
\norm{
(\II_R \otimes \CU)(\rho_{RA})
-
(\II_R \otimes \hat{\CU})(\rho_{RA})
}_1 ,
\end{equation}
where the supremum is taken over all finite-dimensional reference systems $R$.
\end{definition}

Under this definition, the distinguishability of two unitaries is evaluated only on those joint input states whose marginal on the system of interest lies in $\CA$. Thus, $d_{\diamond,\CA}$ quantifies the worst-case deviation between $\CU$ and $\hat{\CU}$ when restricted to the relevant class of inputs. An example is the case where $\CA$ consists of all product states on $A$. In this case, the restricted diamond distance measures how well the two channels can be distinguished when the input states are unentangled with any reference system.

At one extreme, when $\CA$ consists of all quantum states on $A$, the restricted diamond distance reduces to the standard diamond distance. At the other extreme, choosing $\CA$ to be a highly structured or low-dimensional family of states yields a significantly weaker, but often more operationally meaningful notion of closeness.

An alternative and operationally motivated way to define closeness between unitaries is through the output probability distributions obtained after measurement. This perspective is natural in learning settings, as it captures the requirement that replacing the true unitary $U$ by a hypothesis $\hat{U}$ inside a larger quantum procedure should not significantly alter the algorithm’s observable outcomes.

\begin{definition}[$\varepsilon$-indistinguishability on a restricted input set]
Let $\CU$ and $\hat{\CU}$ be the unitary channels induced by unitaries $U$ and $\hat{U}$ acting on a system $A$, and let $\CA$ be a specified set of input states on $A$. We say that $\CU$ and $\hat{\CU}$ are \emph{$\varepsilon$-indistinguishable on $\CA$} if
\begin{equation}
\sup_R \;
\sup_{\rho_{RA}:\, \Tr_R(\rho_{RA}) \in \CA}
\;
\sup_{\CM_{RA}}
d_{\mathrm{TV}}\!\left(P_{\CM,\rho}, \hat{P}_{\CM,\rho}\right)
\le \varepsilon ,
\label{eq:wilde_dist}
\end{equation}
where $R$ is an arbitrary finite-dimensional reference system, $\CM_{RA}$ ranges over all measurements on the joint system $RA$, and $P_{\CM,\rho}$ and $\hat{P}_{\CM,\rho}$ denote the outcome distributions obtained by measuring
\[
(\II_R \otimes \CU)(\rho_{RA})
\quad\text{and}\quad
(\II_R \otimes \hat{\CU})(\rho_{RA}),
\]
respectively, using the measurement $\CM_{RA}$.
\end{definition}

It is worth noting that the definition of $\varepsilon$-indistinguishable unitaries

aligns with the requirement of faithful simulation of quantum measurements studied in quantum information theory \citep{wilde2012information,Atif2022,winter2004extrinsic}.

The relationship between $\varepsilon$-indistinguishability and the restricted diamond distance is formalized by the following lemma.

\begin{lemma}
If
$
d_{\diamond,\CA}(\CU,\hat{\CU}) \le \varepsilon ,
$
then $\CU$ and $\hat{\CU}$ are $\varepsilon$-indistinguishable on $\CA$, i.e., \eqref{eq:wilde_dist} holds.
\label{lm:def_eq}
\end{lemma}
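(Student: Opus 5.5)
The plan is to reduce the claim to the variational characterization of trace distance, applied state-by-state and measurement-by-measurement, and then take suprema.

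Fix an arbitrary finite-dimensional reference system $R$, a state $\rho_{RA}$ with $\Tr_R(\rho_{RA})\in\CA$, and a measurement (POVM) $\CM_{RA}=\{E_k\}_k$ on $RA$. Write
\[
\sigma := (\II_R\otimes\CU)(\rho_{RA}), \qquad \hat\sigma := (\II_R\otimes\hat{\CU})(\rho_{RA}),
\]
so that $P_{\CM,\rho}(k)=\Tr(E_k\sigma)$ and $\hat P_{\CM,\rho}(k)=\Tr(E_k\hat\sigma)$.

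The first step is to bound the total variation distance by the trace norm. Set $\Delta=\sigma-\hat\sigma$, which is Hermitian, and let $K_+=\{k:\Tr(E_k\Delta)\ge 0\}$ and $E_+=\sum_{k\in K_+}E_k$. Since $0\le E_+\le I$, Hölder's inequality gives $|\Tr(E_+\Delta)|\le\frac{1}{2}\norm{\Delta}_1+\frac{1}{2}|\Tr\Delta|$. One can see this by writing $E_+=\frac{I}{2}+(E_+-\frac{I}{2})$ with $\norm{E_+-\frac I2}_{op}\le \frac12$. Since $d_{\mathrm{TV}}=\sum_{k\in K_+}\Tr(E_k\Delta)$ when the traces agree, we obtain
\[
d_{\mathrm{TV}}(P_{\CM,\rho},\hat P_{\CM,\rho})\le \norm{\Delta}_1 .
\]
The factor $\frac{1}{2}$ is then simply dropped, which matches the unnormalized convention in the definition of $d_{\diamond,\CA}$.

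The one point needing care is that $\hat{\CU}$ is induced by a possibly non-unitary $\hat U$, so $\Tr\hat\sigma$ may differ from $1$ and $\hat P$ may not be normalized. I would handle this by using the general bound
\[
\sum_k|\Tr(E_k\Delta)|\le \norm{\Delta}_1 ,
\]
which holds for any POVM. To prove it, take the spectral decomposition $\Delta=\Delta_+-\Delta_-$ with $\norm{\Delta}_1=\Tr\Delta_++\Tr\Delta_-$. Then
\[
\sum_k|\Tr(E_k\Delta)|\le\sum_k\bigl(\Tr(E_k\Delta_+)+\Tr(E_k\Delta_-)\bigr)=\Tr\Delta_++\Tr\Delta_- .
\]
With the convention $d_{\mathrm{TV}}=\frac12\sum_k|P(k)-\hat P(k)|$, this gives $d_{\mathrm{TV}}\le\frac12\norm{\Delta}_1\le\norm{\Delta}_1$. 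The same bound holds under any convention up to that factor, and this step is the main (minor) obstacle.

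Finally, the bound holds for every admissible $R$, $\rho_{RA}$ and $\CM_{RA}$. Taking suprema over all three therefore gives that the left side of \eqref{eq:wilde_dist} is at most $d_{\diamond,\CA}(\CU,\hat{\CU})\le\varepsilon$, which is the claim.
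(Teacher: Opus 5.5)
Your proof is correct and is essentially the paper's argument: in both, the key step is that a measurement on $RA$ cannot increase trace distance, which gives $d_{\mathrm{TV}}\le\frac{1}{2}\|(\II_R\otimes\CU)(\rho_{RA})-(\II_R\otimes\hat{\CU})(\rho_{RA})\|_1\le d_{\diamond,\CA}(\CU,\hat{\CU})$, followed by taking suprema. The differences are minor: you prove the contraction directly from the decomposition $\Delta=\Delta_+-\Delta_-$ rather than citing contractivity under CPTP maps, and you work with arbitrary admissible $\rho_{RA}$ instead of first reducing to purifications via Lemmas 2 and 4 of Wilde et al., a reduction that is unnecessary here because both suprema range over the same set of states.
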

\begin{proof}
    Using Lemmas 2 and 4 from \citep{wilde2012information}, for any purification $\phi_\rho$ of $\rho_A \in \mathcal{A}$, the condition in \eqref{eq:wilde_dist} is equivalent to
$$\norm{(\II \otimes \CU)(\phi_\rho) - (\II \otimes \hat{\CU})(\phi_\rho)}_1 \leq \varepsilon.$$

    Since the trace norm is contractive under completely positive trace-preserving
(CPTP) maps, any subsequent measurement $\mathcal{M}$ on $RA$ can only decrease
the distance:
    \[
        d_{\mathrm{TV}}\!\big(P_{\mathcal{M},\rho}, \hat{P}_{\mathcal{M},\rho}\big)
        \le \tfrac{1}{2}
        \big\| (\II_R \otimes \mathcal{U})(\phi_\rho)
          - (\II_R \otimes \hat{\mathcal{U}})(\phi_\rho) \big\|_1.
    \]
    Taking the supremum over all $R$, $\rho_{RA}$ with $\Tr_R(\rho_{RA}) \in \mathcal{A}$,
    and over all POVMs $\mathcal{M}$, we obtain
    \[
        \sup_R \sup_{\mathcal{M}_{RA}}
        \sup_{\rho_{RA}: \Tr_R(\rho_{RA}) \in \mathcal{A}}
        d_{\mathrm{TV}}\!\big(P_{\mathcal{M},\rho}, \hat{P}_{\mathcal{M},\rho}\big)
        \le d_{\diamond,\mathcal{A}}(\mathcal{U}, \hat{\mathcal{U}}).
    \]
    Therefore, if $d_{\diamond,\mathcal{A}}(\mathcal{U}, \hat{\mathcal{U}}) \le \varepsilon$, then
    \eqref{eq:wilde_dist} follows.
\end{proof}

\subsection{Connection to \texorpdfstring{$\ell_2$}{Lg}-norm}
For unitary channels, the diamond distance admits dimension-dependent upper and lower bounds in terms of the Hilbert--Schmidt inner product (equivalently, the entanglement infidelity). In particular, from \citep[Proposition~1.9]{Haah2023}, for unitaries $U$ and $\hat{U}$ acting on an $N$-dimensional Hilbert space,
$$
2\sqrt{1 - \frac{|\Tr(U^\dagger \hat{U})|^2}{N^2}}
\le
d_\diamond(\mathcal{U}, \hat{\mathcal{U}})
\le
2\sqrt{N}\sqrt{1 - \frac{|\Tr(U^\dagger \hat{U})|^2}{N^2}} .
$$
These inequalities show that, while the diamond distance and Hilbert--Schmidt overlap are closely related for unitary channels, the relationship necessarily incurs a dimension-dependent loss.

We now establish a sharper characterization for the restricted diamond distance, which avoids this dimensional dependence and directly relates the distance to expectation values of the overlap operator with respect to admissible input states.

Here, we consider the specific instance where $\CA = \set{\II/N}$, the maximally mixed state. This case is of particular importance as it corresponds to the performance of the channel averaged uniformly over all pure input states.

 \begin{lemma}
    Let $\CU$ be the unitary channel for unitary $U$, and $\hat{\CU}$ be the channel for operator $\hat{U}$. If $\norm{U-\hat{U}}_{2,P} = \nu$, then 
    $$
    d_{\diamond, \{\II/N\}}(\CU, \hat{\CU}) \leq 2\nu + \nu^2 .
    $$
    \label{lm:l2tol1}
\end{lemma}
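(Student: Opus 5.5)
We prove the bound by mimicking the proof of Lemma~\ref{lm:gen_l1_to_diamond}, replacing operator norms with Hilbert--Schmidt norms. The restriction that the input marginal on $A$ is $\II/N$ is exactly what makes Frobenius norms appear. Set $E = U-\hat U$, so $\hat U = U - E$.

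\textbf{Reduction to a purification.} Fix a reference system $R$ and a state $\rho_{RA}$ with $\Tr_R\rho_{RA}=\II/N$. Purify it to $\ket{\psi}_{R'RA}$. Since the partial trace over $R'$ does not increase trace norm, it suffices to bound the pure-state case, with $R$ enlarged to $R'R$. Every purification of $\II/N$ on $A$ has the form $\ket{\psi} = (W\otimes \II_A)\ket{\Phi}$, where $\ket{\Phi}=\frac{1}{\sqrt N}\sum_i \ket{i}\ket{i}$ is maximally entangled and $W$ is an isometry on the reference side. This isometry commutes with $\II_R\otimes \CU$ and $\II_R\otimes\hat\CU$ and preserves trace norm. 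We may therefore assume $\ket\psi=\ket\Phi$.

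\textbf{The decomposition.} As in Lemma~\ref{lm:gen_l1_to_diamond}, write
\[
(\II\otimes U)\Phi(\II\otimes U^\dagger)-(\II\otimes \hat U)\Phi(\II\otimes \hat U^\dagger)=\ket{a}\bra{e}+\ket{e}\bra{a}-\ket{e}\bra{e},
\]
where $\ket a=(\II\otimes U)\ket\Phi$ and $\ket e=(\II\otimes E)\ket\Phi$. The trace norm of a rank-one operator is $\norm{\ket x\bra y}_1=\norm{x}\,\norm{y}$. We also have $\norm{a}=1$, and
\[
\norm{e}^2=\tfrac1N\Tr(E^\dagger E)=\tfrac1N\norm{E}_2^2=\norm{E}_{2,P}^2=\nu^2
\]
by \eqref{eq:eqof2l2}. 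The triangle inequality then gives the bound $2\nu+\nu^2$. Taking the supremum over $R$ and $\rho_{RA}$ finishes the proof.

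\textbf{Main obstacle.} The delicate step is the reduction to $\ket\Phi$. One must check carefully that every state whose $A$-marginal is maximally mixed is, after purification, related to $\ket\Phi$ by an isometry acting only on the reference side. This follows from the uniqueness of purifications up to isometry (Uhlmann). One must also check that mixed inputs and the supremum over $R$ introduce nothing beyond this. The key point is that the $A$-marginal is pinned to $\II/N$, whereas in the unrestricted diamond norm it is arbitrary, which is why operator norms are needed there. The remaining algebra is routine.
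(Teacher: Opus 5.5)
Your proof is correct and follows the paper's argument essentially step for step. You reduce to the canonical maximally entangled input via uniqueness of purifications up to a reference-side isometry, expand the output difference as $\ket{a}\bra{e}+\ket{e}\bra{a}-\ket{e}\bra{e}$ with $\norm{e}=\nu$, and apply the triangle inequality. Your mixed-to-pure reduction (purify, then use contractivity of the partial trace over the extra reference) is actually more careful than the paper's appeal to convexity, because the pure components of a mixed state with maximally mixed $A$-marginal need not themselves have maximally mixed $A$-marginals.
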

\begin{proof}
    By the convexity of the trace distance, it suffices to restrict the optimization in the definition of $d_{\diamond, \{\II/N\}}$ to pure states acting on a joint system $RA$. Moreover, since the marginal on system $A$ is fixed to $\II/N$, any two purifications $\ket{\phi}_{R_1A}$ and $\ket{\psi}_{R_2A}$ (assuming without loss of generality that $\dim(\mathcal{H}_{R_1}) \le \dim(\mathcal{H}_{R_2})$) are related by an isometry on the reference system \citep{Wilde2013}:
    $$
     \ket{\psi}_{R_2A} = (W \otimes \II_A) \ket{\phi}_{R_1 A} ,
    $$
    where $W: \mathcal{H}_{R_1} \to \mathcal{H}_{R_2}$ is an isometry. Let $E = \II_{R_1} \otimes (\CU - \hat{\CU})$ be the difference of the channels. Then,
    $$
    \norm{E \ketbra{\psi}_{R_2A} }_1 = \norm{E (W\ketbra{\phi}_{R_1A} W^\dagger) }_1 = \norm{W E(\ketbra{\phi}_{R_1A}) W^\dagger}_1 = \norm{E(\ketbra{\phi}_{R_1A})}_1
    $$
    where the last equality follows from the isometric invariance of the trace norm. Therefore, it suffices to evaluate the distance using the canonical maximally entangled state $\ket{\phi} = \frac{1}{\sqrt{N}}\sum_{i=0}^{N-1} \ket{i}_{R_1} \ket{i}_{A}$. We can write the restricted diamond distance as:
    $$
    d_{\diamond, \{\II/N\}}(\CU, \hat{\CU}) = \norm{E(\ketbra{\phi})}_1 .
    $$
    
    Let $\ket{\phi_\CU} = (\II \otimes U) \ket{\phi}$ and define $\ket{\phi_{\hat{\CU}}}$ analogously. Letting $\ket{\delta} = \ket{\phi_\CU} - \ket{\phi_{\hat{\CU}}}$, we evaluate its squared norm:
    \begin{align*}
        \norm{\ket{\delta}}_2^2 &= \bra{\phi} \left( \II \otimes (U-\hat{U})^\dagger(U-\hat{U}) \right) \ket{\phi} \\
        &= \frac{1}{N} \sum_{i,j=0}^{N-1} (\bra{i}_R \bra{i}_A) \left( \II_R \otimes (U-\hat{U})^\dagger(U-\hat{U}) \right) (\ket{j}_R \ket{j}_A) \\
        &= \frac{1}{N} \sum_{i=0}^{N-1} \bra{i}_A (U-\hat{U})^\dagger(U-\hat{U}) \ket{i}_A \\
        &= \frac{1}{N} \Tr\left((U-\hat{U})^\dagger(U-\hat{U})\right) \\
        &= \norm{U-\hat{U}}_{2,P}^2 = \nu^2 .
    \end{align*}
    Thus, $\norm{\ket{\delta}}_2 = \nu$. Finally, expanding the density operators, we bound the trace distance:
    \begin{align*}
        \norm{\ketbra{\phi_\CU} - \ketbra{\phi_{\hat{\CU}}}}_1 &= \norm{\ketbra{\phi_\CU} - (\ket{\phi_{\CU}} - \ket{\delta})(\bra{\phi_{\CU}} - \bra{\delta})}_1 \\
        &= \norm{\ket{\phi_\CU} \bra{\delta} + \ket{\delta}\bra{\phi_\CU} - \ketbra{\delta}}_1 \\
        &\le 2 \norm{\ket{\phi_\CU} \bra{\delta}}_1 + \norm{\ketbra{\delta}}_1 \\
        & = 2\nu + \nu^2 .
    \end{align*}
    The final equality relies on the fact that $\norm{\ket{\psi_1}\bra{\psi_2}}_1 = \norm{\ket{\psi_1}}_2 \norm{\ket{\psi_2}}_2$, and that $\ket{\phi_\CU}$ is perfectly normalized since $U$ is unitary.
\end{proof}

\subsection{Learning implications}
First, we introduce a useful lemma that connects the Pauli $\ell_1$-norm of a unitary to its sparse approximability in Pauli $\ell_2$-norm. This is motivated by similar results in learning Boolean functions with bounded spectral norm \citep{KM1993,HeidariK2025}. We show that if a unitary has a bounded Pauli $\ell_1$-norm, then it can be well-approximated by a sparse unitary in Pauli $\ell_2$-norm.

    \begin{lemma}
        Suppose an $n$-qubit unitary $U$ is approximated by $\tilde{U}$ such that

        $\norm*{U - \tilde{U}}_{2}^2 \le \varepsilon/4$ and 
        $\norm*{\tilde{U}}_{1,P} = L_1$. 
        Then, $U$ can be approximated in $\ell_2$-norm by a $\tfrac{4L_1^2}{\varepsilon}$-sparse operator $\hat{U}$ generated from the Pauli coefficients $U$ larger than $\tfrac{\varepsilon}{2L_1}$. 
         More precisely, let $$S = \{\bfs\in \{0,1,2,3\}^n \mid |\alpha_{\bfs}| \ge {\frac{\varepsilon}{2L_1}} \}.$$ 
         Then, there exists a set $\mathcal{S}^* \subseteq S$,  with size $|\mathcal{S}^*| \le \tfrac{4L_1^2}{\varepsilon}$, such that the operator 
        \[
            \hat{U} = \sum_{\bfs \in \mathcal{S}^*} \hat{\alpha}_{\bfs} \sigma^{\bfs},
        \]
        with  $|\hat{\alpha}_{\bfs} - \alpha_{\bfs}| \le \frac{\varepsilon}{2L_1}$ for all $\bfs \in \CS^*$, 
        satisfies 
        \[
            \avec{U - \hat{U}}{2}^2 \le 6\varepsilon.
        \]
        \label{lm:app_unitary}
    \end{lemma}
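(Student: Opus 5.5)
The plan is the Kushilevitz--Mansour-style argument in the Pauli basis. Write $U=\sum_\bfs \alpha_\bfs\sigma^\bfs$ and $\tilde U=\sum_\bfs\beta_\bfs\sigma^\bfs$. I read the hypothesis as the normalized (Pauli) $\ell_2$ bound $\sum_\bfs|\alpha_\bfs-\beta_\bfs|^2\le\varepsilon/4$. I will choose $\mathcal{S}^*$ using both coefficient vectors:
\[
\mathcal{S}^*:=\Big\{\bfs:\ |\alpha_\bfs|\ge \tfrac{\varepsilon}{2L_1}\ \text{ and }\ |\beta_\bfs|\ge\tfrac{\varepsilon}{4L_1}\Big\}\subseteq S .
\]

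\textbf{Size bound.} The size bound comes from the $\ell_1$ constraint on $\tilde U$, not from Parseval. Parseval would only give $|S|\le 4L_1^2/\varepsilon^2$. Instead, every $\bfs\in\mathcal{S}^*$ contributes at least $\varepsilon/(4L_1)$ to $\norm{\tilde U}_{1,P}=L_1$, so $|\mathcal{S}^*|\le 4L_1^2/\varepsilon$.

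\textbf{Error split.} Next I split the Pauli $\ell_2$ error:
\[
\avec{U-\hat U}{2}^2=\sum_{\bfs\in\mathcal{S}^*}|\alpha_\bfs-\hat\alpha_\bfs|^2+\sum_{\bfs\notin\mathcal{S}^*}|\alpha_\bfs|^2 .
\]
The first sum is at most $|\mathcal{S}^*|\cdot\varepsilon^2/(4L_1^2)\le\varepsilon$. For the second sum I use $|\alpha_\bfs|^2\le 2|\alpha_\bfs-\beta_\bfs|^2+2|\beta_\bfs|^2$ only where it helps, and I split $\bfs\notin\mathcal{S}^*$ into two cases.
\begin{enumerate}
\item[(i)] If $|\beta_\bfs|<\varepsilon/(4L_1)$, then $\sum|\beta_\bfs|^2\le \max|\beta_\bfs|\cdot\sum|\beta_\bfs|\le \frac{\varepsilon}{4L_1}L_1=\varepsilon/4$. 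Hence the $\alpha$-mass on these indices is at most $2\cdot\frac{\varepsilon}{4}+2\cdot\frac{\varepsilon}{4}=\varepsilon$.
\item[(ii)] If $|\beta_\bfs|\ge\varepsilon/(4L_1)$ but $|\alpha_\bfs|<\varepsilon/(2L_1)$, then $|\alpha_\bfs|<2|\beta_\bfs|$. Therefore $|\alpha_\bfs|^2\le|\alpha_\bfs|\cdot\frac{\varepsilon}{2L_1}\le |\beta_\bfs|\frac{\varepsilon}{L_1}$, and summing gives at most $\varepsilon$ by the $\ell_1$ bound on $\tilde U$.
\end{enumerate}
Adding the three pieces gives $\avec{U-\hat U}{2}^2\le 3\varepsilon\le6\varepsilon$. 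This leaves slack for the constant in the statement and for the alternative reading of the hypothesis as an unnormalized Frobenius bound, which would only make it stronger.

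\textbf{Main obstacle.} The delicate step is case (ii): indices where $\tilde U$ is large but $U$ is below the threshold. A naive bound there would require an $\ell_1$ control of $\alpha-\beta$, which we do not have. The trick is to charge $|\alpha_\bfs|^2$ to $|\beta_\bfs|$ via $|\alpha_\bfs|<2|\beta_\bfs|$, so that only $\norm{\tilde U}_{1,P}$ is used. The remaining steps are routine bookkeeping of constants.
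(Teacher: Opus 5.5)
Your proof is correct and is essentially the paper's argument: the paper also takes $\mathcal{S}^* = S\cap S'$ with $S'=\{\bfs : |\tilde{\alpha}_\bfs|\ge \varepsilon/(4L_1)\}$, gets $|\mathcal{S}^*|\le 4L_1^2/\varepsilon$ from $\norm{\tilde U}_{1,P}=L_1$, and bounds the $\alpha$-mass off $S'$ by $\varepsilon$ exactly as in your case (i). The differences are only bookkeeping. The paper bounds your case (ii) by $|S'|\cdot \varepsilon^2/(4L_1^2)\le\varepsilon$, which rests on the same $\ell_1$ bound as your charging to $|\beta_\bfs|$, so that step is less delicate than you suggest. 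The paper also finishes with $\norm{A+B}^2\le 2(\norm{A}^2+\norm{B}^2)$ rather than your exact orthogonal split, which is why it ends at $6\varepsilon$ while you obtain $3\varepsilon$.
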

        
    \begin{proof}
        We first show existence of a sparse $V$ that approximates $\tilde{U}$ and hence $U$.
        Write the Pauli expansion of $\tilde{U} = \sum_{\bfs} \tilde{\alpha}_{\bfs} \sigma^{\bfs}$. 
        Let 
        $S' = \{\bfs \mid |\tilde{\alpha}_{\bfs}| \ge \frac{\varepsilon}{4L_1}\}$ 
        and define 
        $V = \sum_{\bfs \in S'} \tilde{\alpha}_{\bfs} \sigma^{\bfs}$. 
        Then $$\avec{V}{1}\le \norm*{\tilde{U}}_{1,P} = L_1$$ and 
        $|S'| \le \frac{4L_1^2}{\varepsilon}$.
        Moreover,
        \[
        \avec{\tilde{U} - V}{2}^2
        = \sum_{\bfs \notin S'} |\tilde{\alpha}_{\bfs}|^2
        \le \max_{\bfs \notin S'} |\tilde{\alpha}_{\bfs}| 
            \sum_{\bfs \notin S'} |\tilde{\alpha}_{\bfs}| 
        \le \frac{\varepsilon}{4L_1} L_1 = \frac{\varepsilon}{4}.
        \]
        Using $\|A+B\|^2 \le 2(\|A\|^2 + \|B\|^2)$, we get
        \[
        \avec{U - V}{2}^2
        \le 2(\avec{U - \tilde{U}}{2}^2 +\avec{\tilde{U} - V}{2}^2)
        \le \varepsilon.
        \]

        It remains to show that the sparse approximation $\hat{U}$ constructed from $U$ satisfies the desired error bound.

        Particularly, we show that  
        $\norm*{U - \hat{U}}_{2,P}^2 \le 3\varepsilon$.
        
        From the above construction, we have
        \[
        \varepsilon \ge \avec{U - V}{2}^2 
        = \sum_{\bfs \notin S'} |\alpha_{\bfs}|^2 
          + \sum_{\bfs \in S'} |\alpha_{\bfs} - \tilde{\alpha}_{\bfs}|^2.
        \]
        implying $\sum_{\bfs \notin S'} |\alpha_{\bfs}|^2 \le \varepsilon$. 
        Let $\CS^* = S \cap S'$ and 
        $K = \sum_{\bfs \in \CS^*} \alpha_{\bfs} \sigma^{\bfs}$. 
        Then
        
        \begin{align*}
            \avec{U - K}{2}^2 &=\sum_{\bfs \notin \CS^*} |\alpha_{\bfs}|^2\\
            &
            = \sum_{\bfs \notin S'} |\alpha_{\bfs}|^2 
              + \sum_{\bfs \in S' - \CS^*} |\alpha_{\bfs}|^2 \\
            &\le \varepsilon + |S'| \cdot \frac{\varepsilon^2}{4L_1^2} \le 2\varepsilon.
        \end{align*}        
        Since $\CS^* \subseteq S',$ $|\CS^*|\leq \tfrac{4L_1^2}{\varepsilon}$. Moreover, as $\gamma \leq \tfrac{\varepsilon}{2L_1}$, the error of approximating $U$ by $\hat{U}$ can be bounded as follows.   
        \begin{align*}
            \avec{U - \hat{U}}{2}^2 \leq 2(\avec{U - K}{2}^2 + \avec{K - \hat{U}}{2}^2)\leq 2(2\varepsilon + |\CS^*| \gamma^2) \leq 6\varepsilon.
        \end{align*}

    \end{proof}

    The lemma indicates that unitaries with bounded Pauli $\ell_1$-norm can be well-approximated by sparse unitaries in Pauli $\ell_2$-norm.

    Immediately, any unitary with bounded Pauli $\ell_1$-norm satisfies the condition of the lemma, and hence can be approximated by a sparse operator in $\ell_2$-norm.

We are ready to present our main result on learning unitaries with bounded Pauli $\ell_1$-norm under the restricted diamond distance.

\begin{theorem}
    The query complexity of learning $n$-qubit unitaries, with Pauli $\ell_1$-norm bounded by $L_1$, and with $O(\epsilon)$ error in the restricted diamond distance $ d_{\diamond, \{\II/N\}}$, is 
    $ \tilde{O}\left( \frac{L_1^8}{\epsilon^{16}} \right). $
    
    \label{th:learning_apx_sparse}
\end{theorem}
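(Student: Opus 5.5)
The plan is to reduce to Lemma~\ref{lm:l2tol1}: it suffices to output an operator $\hat{U}$ with $\nu:=\norm{U-\hat{U}}_{2,P}=O(\epsilon)$, since then $d_{\diamond,\{\II/N\}}(\CU,\hat{\CU})\le 2\nu+\nu^2=O(\epsilon)$. To build such a $\hat U$ I would use Lemma~\ref{lm:app_unitary} with $\tilde{U}=U$ itself, so that $\norm{U-\tilde U}_2=0$ and $\norm{\tilde U}_{1,P}\le L_1$. I would also set the lemma's accuracy parameter to $\varepsilon'=\Theta(\epsilon^2)$, which makes the squared error $6\varepsilon'=O(\epsilon^2)$. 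The lemma then says: if we know the indices of all Pauli coefficients with $|\alpha_\bfs|\ge \varepsilon'/(2L_1)$, and estimate each retained coefficient to additive error $\varepsilon'/(2L_1)$, then a suitable subset $\CS^*$ of at most $4L_1^2/\varepsilon'$ of them yields an $\ell_2$-good approximation.

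The algorithm would invoke Theorem~\ref{th:pauli_estimate} with threshold $\theta=\varepsilon'/(2L_1)=\Theta(\epsilon^2/L_1)$. Its output set $\CX$ contains $S$, and every $\bfs\in\CX$ is estimated up to a common phase $e^{-i\phi}$. By Lemma~\ref{lm:gen_l1_to_diamond} the phase is harmless, because a global phase on $\hat U$ does not change $\hat{\CU}$. We need per-coefficient error $\epsilon_{\rm est}/\theta\le \varepsilon'/(2L_1)$, so the estimation accuracy is $\epsilon_{\rm est}=\theta\varepsilon'/(2L_1)=\Theta(\epsilon^4/L_1^2)$. This also satisfies the side condition $\epsilon_{\rm est}\le\theta^2/16$ up to constants. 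The query bound of Theorem~\ref{th:pauli_estimate} is then $\tilde O(1/\epsilon_{\rm est}^4)=\tilde O(L_1^8/\epsilon^{16})$, which matches the claimed complexity. The Bell-sampling cost $\tilde O(1/\theta^2)$ is dominated by it.

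The step I expect to be the main obstacle is choosing $\CS^*$ from data. Lemma~\ref{lm:app_unitary} only asserts that a good subset $\CS^*=S\cap S'$ exists, and $S'$ is defined through $\tilde U$, so it is not directly observable. Since $\tilde U=U$ here, I would try taking $S'=\{\bfs:|\alpha_\bfs|\ge\varepsilon'/(4L_1)\}$, which contains $S$, so $\CS^*=S$. The concrete rule would be: keep every $\bfs\in\CX$ with $|\hat\alpha_\bfs|\ge \varepsilon'/L_1$ (up to constants), and zero out the rest. The analysis then has three parts. First, the kept set is contained in $\{|\alpha_\bfs|\ge\varepsilon'/(2L_1)\}$, so by Parseval it has size $O(L_1^2/\varepsilon')$. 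Second, the discarded coefficients satisfy $|\alpha_\bfs|=O(\varepsilon'/L_1)$, so their squared mass is at most $\max|\alpha_\bfs|\cdot\sum|\alpha_\bfs|\le O(\varepsilon'/L_1)\cdot L_1=O(\varepsilon')$. Third, the estimation errors on the kept coefficients contribute at most $|\CS^*|\,(\varepsilon'/2L_1)^2=O(\varepsilon')$ in squared $\ell_2$. Combining these gives $\nu^2=O(\varepsilon')=O(\epsilon^2)$. I would check these thresholds carefully, since the truncation threshold has to sit between the estimation error and $\theta$.

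Finally, the same approach should give efficient implementation. Applying the LCU of Observation~\ref{lm:LCU} to $\hat U$ with $a\le \norm{\hat U}_{1,P}=O(L_1)$ keeps the runtime polynomial. If a strictly unitary output is required, I would bound the extra error in the restricted metric through the Frobenius-type argument of Lemma~\ref{lm:l2tol1} rather than the operator norm. The theorem as stated concerns only query complexity, so the core proof is the $\ell_2$ argument above.
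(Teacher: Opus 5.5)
Your proposal is correct and is essentially the paper's proof: Lemma~\ref{lm:app_unitary} with $\tilde U=U$ and $\varepsilon'=\Theta(\epsilon^2)$, Theorem~\ref{th:pauli_estimate} with $\theta=\varepsilon'/(2L_1)$ and accuracy $\theta^2=\Theta(\epsilon^4/L_1^2)$, then Lemma~\ref{lm:l2tol1}. Your explicit thresholding of $\CX$ at $|\hat\alpha_\bfs|\ge\varepsilon'/L_1$ usefully supplies a step the paper leaves implicit when it simply takes $\hat U$ supported on $\CS$, but one correction is needed there: the $O(L_1^2/\varepsilon')$ bound on the kept set comes from the $\ell_1$ budget (each kept coefficient has $|\alpha_\bfs|\ge\varepsilon'/(2L_1)$, so there are at most $2L_1^2/\varepsilon'$ of them), not from Parseval, which gives only $O(L_1^2/\varepsilon'^2)$ and would make your third error term $O(1)$.
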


\begin{proof}
    We use Lemma \ref{lm:app_unitary} with $\varepsilon = \tfrac{\epsilon^2}{6}$. Since $U$ has bounded Pauli $\ell_1$-norm, the sparse index set is $\CS^*=\CS$ where $\CS = \{\bfs: |\alpha_\bfs|\geq \frac{\varepsilon}{2L_1}\}$. This means if our estimated coefficients satisfy $|\alpha_\bfs - \hat{\alpha}_\bfs| \leq \frac{\varepsilon}{2L_1} =\frac{\epsilon^2}{12L_1}$, then the estimator $\hat{U} := \sum_{\bfs \in \CS} \hat{\alpha}_\bfs \sigma^\bfs$ achieves an error of $\norm*{U - \hat{U}}^2_{2,P} \leq 6 \varepsilon = \epsilon^2$. Hence, $\norm*{U - \hat{U}}_{2,P} \leq \epsilon$.

    To achieve this, we use Theorem \ref{th:pauli_estimate} with threshold $\theta \gets \frac{\varepsilon}{2L_1} = \frac{\epsilon^2}{12L_1}$, and accuracy parameter $\epsilon' \gets \frac{\epsilon^4}{144 L_1^2}$. As a result, there is an algorithm (Algorithm \ref{alg:find_coef} then \ref{alg:estimate-unitary}) that identifies a set $\CX$ containing $\CS$ and estimates $\alpha_\bfs$ with additive error:
    $$ \frac{\epsilon'}{\theta} = \frac{\epsilon^4/ (144 L_1^2)}{\epsilon^2/(12L_1)} = \frac{\epsilon^2}{12L_1} $$
    which satisfies our desired precision. 

    Now, we evaluate the query complexity. Theorem \ref{th:pauli_estimate} states the number of queries scales as $\tilde{O}(1/(\epsilon')^4)$. Substituting our chosen accuracy parameter $\epsilon'$ into the complexity yields:
    $$ T = \tilde{O}\left( \frac{1}{(\epsilon')^2} \right) = \tilde{O}\left( \left(\frac{144 L_1^2}{\epsilon^4}\right)^4 \right) = \tilde{O}\left( \frac{L_1^8}{\epsilon^{16}} \right). $$

    Finally, having established that the estimator satisfies $\norm{U - \hat{U}}_{2,P} \leq \epsilon$, we apply Lemma \ref{lm:l2tol1}. This guarantees that the restricted diamond distance between the true unitary channel $\CU$ and the estimated channel $\hat{\CU}$ acting on the maximally mixed state is bounded by:
    $$ d_{\diamond, \{\II/N\}}(\CU, \hat{\CU}) \leq 2\epsilon + \epsilon^2 = O(\epsilon). $$
    This concludes the proof of the query complexity.
\end{proof}
\section{Lower bounds}\label{sec:lb}
This section establishes lower bounds for learning sparse and nearly sparse unitaries, as well as for unitaries with bounded Pauli $\ell_1$ norm.
\subsection{Lower bound for nearly sparse unitaries}
We first establish a lower bound for learning exactly sparse unitaries, and then extend it to nearly sparse unitaries.
\begin{theorem}
Any algorithm that learns the class of $n$-qubit unitary $U$ with at most $s$ non-zero Pauli coefficients to error $\epsilon$ in diamond distance and with success probability at least $2/3$ must make at least
$\Omega(\frac{s}{\epsilon})$ queries to $U$.
\end{theorem}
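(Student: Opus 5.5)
The plan is to reduce to a known lower bound by embedding a hard family inside the $s$-sparse class. Set $k=\lfloor \log_2 s\rfloor$ and consider unitaries of the form $U=V\otimes \II_{2^{n-k}}$, where $V$ is an arbitrary $k$-qubit unitary. Any such $U$ has Pauli support contained in $\{\bfs=(\bfs',0^{n-k})\}$ of size $4^k$. Since $4^k$ may be as large as $s^2$, this sits inside the $s$-sparse class only when $4^k\le s$. So I would instead take $k=\lfloor \tfrac12\log_2 s\rfloor$, so that the whole class of $k$-qubit unitaries (quantum $k$-juntas) is contained in the $s$-sparse class and $4^k=\Theta(s)$.

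The key steps are as follows.

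(i) A learner for $s$-sparse $n$-qubit unitaries, when given query access to $V$, can simulate a query to $U$ by applying $V$ to the first $k$ qubits and doing nothing on the rest. Its output $\hat U$ (a channel) therefore satisfies $d_\diamond(\CU,\hat\CU)\le\epsilon$.

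(ii) We can extract a channel on $k$ qubits: prepare the last $n-k$ qubits in $\ket{0}$, apply $\hat\CU$, and trace them out. Since diamond distance is contractive under this composition (appending a fixed ancilla and a partial trace are channels, and $\CU$ acts trivially on the ancilla), the resulting channel is $\epsilon$-close to $\mathcal{V}$ in diamond distance.

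(iii) Invoke the known lower bound for learning an arbitrary $d$-dimensional unitary in diamond distance. \citet{Haah2023} show that $\Omega(d^2/\epsilon)$ queries are required for $d=2^k$, and this holds with coherent and adaptive queries as well. This gives $\Omega(4^k/\epsilon)=\Omega(s/\epsilon)$.

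The main obstacle is making sure the cited bound from \citet{Haah2023} applies in exactly our access model. That means forward queries only, possibly with ancilla, and an output that is an arbitrary channel rather than a unitary. Their bound is stated for these settings, so the reduction carries over with the success probability $2/3$ preserved.

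If that citation is unavailable in the needed form, the fallback is an information-theoretic packing argument. I would construct $e^{\Omega(4^k)}$ $k$-qubit unitaries that are pairwise $\Omega(\epsilon)$-apart in diamond distance, each of the form $e^{i\epsilon H_j}$ with $H_j$ a random $\pm$-signed combination of Paulis. I would then bound the information gained per query by $O(\epsilon)$ using a Heisenberg-type argument, since each query changes the state by $O(\epsilon)$ in norm. Together with Fano's inequality this yields $T\ge\Omega(4^k/\epsilon)$. I expect this per-query information bound, which must hold for adaptive coherent strategies, to be the delicate part if the citation cannot be used directly.
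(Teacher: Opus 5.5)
Your argument is correct for $s\ge 4$, but it uses a different hard subfamily and a different cited theorem. The paper takes $k=\lfloor\log_2 s\rfloor$ and observes that $n$-qubit unitaries whose Pauli support lies in a Pauli subgroup of size $2^k\le s$ are already $s$-sparse. It then invokes the $\Omega(2^k/\epsilon)$ lower bound of \citet{Grewal2025} for $k$-Pauli-dimensional unitaries. That hard family lives on the same $n$ qubits and is judged in the same metric, so the reduction is a bare class containment with nothing to simulate or restrict. You instead embed $k$-qubit juntas $V\otimes\II$ with $4^k\le s$ and apply the general $\Omega(d^2/\epsilon)$ unitary-estimation lower bound of \citet{Haah2023} at $d=2^k$. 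This rests on a more standard result: the paper quotes it in its introduction, and it holds even with inverse and controlled access. The price is your steps (i)--(ii), which are sound: appending $\ket{0^{n-k}}$ and tracing it out are channels, $\CU$ acts trivially on that register, and diamond distance cannot increase under pre- and post-composition with channels.

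Two small repairs are needed.
\begin{itemize}
\item Do not assert that the cited bound already covers arbitrary output maps. Instead, round the restricted hypothesis to the nearest unitary channel in diamond distance; this costs no queries and at most doubles the error.
\item For $s\in\{2,3\}$ your choice gives $k=0$ and a vacuous bound, whereas the paper's choice still gives $k=1$.
\end{itemize}

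Your fallback is unnecessary, and as sketched it would not work. An $O(\epsilon)$ change of the state per query does not bound the information gained per query by $O(\epsilon)$. $M$ states lying within $\eta$ of a common state can still carry about $\eta^2\log M$ bits. A norm-based argument alone therefore only gives $T=\Omega(1/\epsilon)$; the $4^k$ factor is exactly the nontrivial content supplied by the cited bound.
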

 \begin{proof}
Take $k=\lfloor \log_2 s \rfloor$. The problem can be reduced to learning $k$-Pauli-dimensional unitaries. A unitary is said to be $k$-Pauli-dimensional if its Pauli expansion is supported on a Pauli subgroup of size $2^k$. Any such unitary has at most $2^k$ non-zero Pauli coefficients, and is hence $s$-sparse with $s = 2^k$. Therefore, the class of unitaries with at most $s$ non-zero Pauli coefficients contains the class of $k$-Pauli-dimensional unitaries with $2^k \le s$. Thus, the theorem follows from the lower bound of $\Omega(2^k/\epsilon) = \Omega(s/\epsilon)$ for learning $k$-Pauli-dimensional unitaries \citep[Corollary 7.3.]{Grewal2025}.
 \end{proof}
An immediate corollary extends the lower bound to nearly sparse unitaries.
 \begin{corollary}
Any algorithm that learns the class of $n$-qubit nearly $s$-sparse unitary $U$  to error $\epsilon$ in diamond distance and with success probability at least $2/3$ must make at least $\Omega(\frac{s}{\epsilon})$ queries to $U$.
 \end{corollary}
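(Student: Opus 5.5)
The corollary follows by containment of concept classes. By Definition~\ref{def:nearly sparse}, every unitary with at most $s$ non-zero Pauli coefficients is nearly $(s,\epsilon')$-sparse for every $\epsilon'\ge 0$. To see this, take $\CS$ to be its support; then the residual mass $\sum_{\bfs\notin\CS}|\alpha_\bfs|$ is exactly $0$. So the class of exactly $s$-sparse unitaries is a subclass of the nearly $s$-sparse unitaries, whatever residual tolerance is used in the definition.

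The plan is a direct reduction. Suppose $\mathcal{L}$ is an algorithm that, for every nearly $s$-sparse $U$, outputs a hypothesis within diamond distance $\epsilon$ with probability at least $2/3$, using $T$ queries. Run $\mathcal{L}$ unchanged on an arbitrary exactly $s$-sparse unitary. Such a unitary satisfies the promise, so the same guarantee holds and $\mathcal{L}$ learns exactly $s$-sparse unitaries with the same $T$, $\epsilon$ and success probability. The preceding theorem then forces $T=\Omega(s/\epsilon)$.

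The only point needing care is bookkeeping of the parameters. The residual-mass parameter in the definition of near sparsity need not coincide with the learning accuracy $\epsilon$. The argument is insensitive to this, because the hard instances from the preceding theorem have zero residual mass. These instances are the $k$-Pauli-dimensional unitaries with $2^k\le s$, taken from \citep[Corollary 7.3]{Grewal2025}. I would state this explicitly, so that the lower bound holds for any residual parameter, including the regime where it equals the target error. No step here is a genuine obstacle. The substantive work is in the preceding theorem, and this corollary only uses the monotonicity of query complexity under restriction to a subclass.
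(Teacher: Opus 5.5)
Your argument is correct and is exactly the reasoning behind the paper's one-line justification. Exactly $s$-sparse unitaries are nearly $s$-sparse with zero residual mass, so the $\Omega(s/\epsilon)$ bound of the preceding theorem transfers to the larger class, since restricting the promise to a subclass can only lower query complexity. Your remark that this holds for any value of the residual-mass parameter is a useful clarification that the paper leaves implicit.
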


\subsection{Pauli $\ell_1$ bound is not enough}
There is a natural question whether bounding the Pauli $\ell_1$ norm of a unitary suffices to ensure its efficient learnability. Our next result shows that this is not the case and the bounded Pauli $\ell_1$ norm alone does not suffice to ensure efficient learnability.

\begin{theorem}\label{th:lower_l1}
Any algorithm that learns the class of $n$-qubit unitary $U$ with Pauli $\ell_1$ norm at most $L = O(1)$ to error $\epsilon = 1/10$ in diamond distance and with success probability at least $2/3$ must make at least $\Omega(2^n)$ queries to $U$.
\end{theorem}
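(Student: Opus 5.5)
The plan is a reduction from an unstructured-search problem in which the marked-item oracles themselves have bounded Pauli $\ell_1$ norm. Let $N=2^n$ and, for a small constant $\eta$, consider the diagonal phase oracles
\[
U_x = \II + (e^{i\eta}-1)\ketbra{x}, \qquad x\in\{0,1\}^n .
\]
The Pauli expansion $\ketbra{x} = \frac{1}{N}\sum_{\bfs\in\{0,3\}^n}(-1)^{x\cdot \bfs}\sigma^{\bfs}$ has Pauli $\ell_1$ norm exactly $1$. Hence $\norm{U_x}_{1,P}\le 1+|e^{i\eta}-1|\le 3$, so every $U_x$ lies in the class with $L=O(1)$.

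\textbf{Step 1 (learning identifies $x$).} I would show that $d_\diamond(\CU_x,\CU_y)$ is bounded below by a constant for $x\neq y$. Feeding the input $(\ket{x}+\ket{y})/\sqrt2$ gives two output states with overlap $|\cos(\eta/2)|$, so the distance is about $\eta$. Choosing $\eta$ close to $\pi$ pushes this above $2/5$. A learner with diamond error $1/10$ on both sides therefore outputs a hypothesis from which the unique $x$ consistent with it can be decoded, with success probability $2/3$. This step is routine.

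\textbf{Step 2 (query lower bound).} Next I would lower-bound the number of queries needed to identify $x$ from $U_x$, allowing coherent queries with ancillas. The tool is the BBBV hybrid argument. Compare the final state $\ket{\psi^x_T}$ with the final state $\ket{\psi_T}$ obtained by running the same algorithm with $\II$ in place of $U_x$. The per-step deviations satisfy
\[
\sum_x \norm{\ket{\psi^x_T}-\ket{\psi_T}}^2 \le T^2|e^{i\eta}-1|^2 .
\]
Identification forces the left-hand side to be $\Omega(N)$. This yields only $T=\Omega(\sqrt N)=\Omega(2^{n/2})$.

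\textbf{Main obstacle: reaching the exponent $2^n$.} This is exactly the step where the argument falls short. Grover search shows the $\sqrt N$ bound is tight for single-item identification, so the family $\{U_x\}$ alone cannot give $\Omega(2^n)$. To go beyond it I would try a packing/information argument: a family of $\exp(\Omega(nN))$ bounded-$\ell_1$ unitaries that are pairwise $2/5$-separated in diamond distance. Combined with the fact that each coherent query conveys $O(n)$ bits (Holevo plus superdense-coding bounds), such a family would force $T=\Omega(N)$.

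The difficulty is that bounded Pauli $\ell_1$ norm is very restrictive. Diagonal $\pm1$ matrices with a constant number of sign flips give only $N^{O(1)}$ members, i.e.\ $O(n)$ bits, and generic Boolean patterns have large Fourier $\ell_1$ norm. I do not see such a packing. Unless one is found, the hybrid argument above is what I can actually establish, namely $\Omega(2^{n/2})$. This matches the bound quoted in the paper's abstract and summary, which suggests the exponent in the statement should be $n/2$ rather than $n$.
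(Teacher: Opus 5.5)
Your approach is the paper's. The paper takes the Grover family $U_x=\II-2\ket{x}\!\bra{x}$ (your $\eta=\pi$), computes $\norm{U_x}_{1,P}=3-4/N<3$, and notes $d_\diamond(\CU_x,\CU_y)=2$ for $x\neq y$, so a learner with error $1/10$ must identify $x$; it then appeals to unstructured search and BBBV. The paper's proof concludes exactly $\Omega(2^{n/2})$, which is the bound you derive and the one given in the abstract and the summary of results. So the exponent $n$ in the theorem statement is an overstatement in the paper, not a gap in your attempt. You are also right that Grover's algorithm caps what this family can certify at $2^{n/2}$.

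Step 1 contains a slip. On input $(\ket{x}+\ket{y})/\sqrt2$ the outputs are $(e^{i\eta}\ket{x}+\ket{y})/\sqrt2$ and $(\ket{x}+e^{i\eta}\ket{y})/\sqrt2$. Their overlap is $|\cos\eta|$, not $|\cos(\eta/2)|$, so the separation $2|\sin\eta|$ tends to $0$ as $\eta\to\pi$; at $\eta=\pi$ the two outputs agree up to sign. The overlap $|\cos(\eta/2)|$ comes instead from the input $(\ket{x}+\ket{z})/\sqrt2$ with $z\notin\{x,y\}$ (available for $n\ge2$), which gives $d_\diamond(\CU_x,\CU_y)\ge 2|\sin(\eta/2)|$. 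Alternatively, take $\eta=\pi$ as the paper does: then $U_x^\dagger U_y$ has both eigenvalues $+1$ and $-1$, hence $d_\diamond=2$.

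Your doubt about the packing route is justified. Sample $O(L^2 n)$ phased Pauli strings with probabilities $|\alpha_{\bfs}|/\norm{U}_{1,P}$ and apply matrix Hoeffding. For constant $L$, this puts every $U$ in the class within operator-norm distance $0.05$ of one of $2^{O(n^2)}$ fixed operators. Hence any family that is pairwise more than $0.2$-separated in diamond distance has at most $2^{O(n^2)}$ members, and a Holevo/Fano argument cannot give more than an $O(n)$ lower bound. The same count shows the diagonal $\pm1$ members of the class number only $2^{O(n^2)}$, so Kothari's oracle-identification algorithm learns them with $O(\sqrt{n2^n})$ queries. An $\Omega(2^n)$ bound, if true at all, would therefore need hard instances of a different kind.
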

\begin{proof}
We prove the claim by considering the Oracle family of unitaries. Consider the \emph{phase oracle} (or Grover oracle) family
\begin{equation*}
U_x = I - 2 \ket{x}\!\bra{x},
\qquad
x \in \{0,1\}^n.
\end{equation*}
The Pauli $\ell_1$ norm of each $U_x$ is bounded by a constant independent of $n$. To see this, note that each $U_x$ admits an expansion using only $Z$-type Pauli operators:
\begin{equation*}
\alpha_I = 1 - \frac{2}{N},
\end{equation*}
and for each non-identity Pauli $Z_s$,
\begin{equation*}
\alpha_{Z_s}
=
-\frac{2}{N} (-1)^{s \cdot x},
\end{equation*}
where $s\cdot x$ denotes the mod-$2$ inner product.

Therefore,
\begin{align*}
\norm{U_x}_{P,1}
&=
\abs{1-\frac{2}{N}}
+ (N-1)\frac{2}{N} \\
&=
3 - \frac{4}{N}
< 3 .
\end{align*}

Thus, the family $\{U_x\}_{x\in\{0,1\}^n}$ has uniformly bounded Pauli $\ell_1$ norm with
$
L < 3,
$
independent of the number of qubits $n$.

For $x \neq y$, the unitary channels induced by $U_x$ and $U_y$ are perfectly distinguishable: their diamond distance satisfies
\begin{equation*}
\norm{U_x-U_y}_\diamond = 2 .
\end{equation*}
Consequently, any learner that outputs an estimate $\hat{U}$ such that
\begin{equation*}
\norm{U_x-\hat{U}}_\diamond < 1
\end{equation*}
must, in effect, identify the marked string $x$ exactly. However, determining $x$ given black-box access to $U_x$ is exactly the unstructured search problem and is known to require $\Omega(2^{n/2})$ queries. This lower bound is tight, as Grover's algorithm achieves it and is optimal by the BBBV theorem.
\end{proof}

\section{Special Cases and Examples}\label{sec:examples}
This section discusses some special cases and examples of our results.
\subsection{Hamiltonians with bounded Pauli \texorpdfstring{$\ell_1$}{Lg}-norm}
In this section, we extend our results on the learnability of L1-bounded unitaries in the restricted diamond norm to the context of Hamiltonians and Boolean functions.

\begin{lemma}[Propagation of Pauli $\ell_1$-norm]
    Let $H$ be a Hamiltonian with bounded Pauli $\ell_1$-norm, i.e., $\avec{H}{1} \le L_H$. Then, the unitary evolution $U = e^{iH}$ has a bounded Pauli $\ell_1$-norm given by $\avec{U}{1} \le e^{L_H}$. Consequently, $U$ can be learned using Algorithm \ref{alg:estimate-unitary} with sample complexity dependent on $e^{L_H}$.
    \label{lm:hamiltonianl1}
\end{lemma}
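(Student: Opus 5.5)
The plan is to show that the Pauli $\ell_1$-norm $\avec{\cdot}{1}$ is submultiplicative, and then expand the exponential as a power series. The learnability statement will follow by feeding the resulting bound into the results of Section~\ref{sec:l1_bounded}.

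\textbf{Step 1 (submultiplicativity).} For operators $A=\sum_\bfs a_\bfs \sigma^\bfs$ and $B=\sum_\bft b_\bft\sigma^\bft$, write
\[
AB=\sum_{\bfs,\bft} a_\bfs b_\bft\, \sigma^\bfs\sigma^\bft .
\]
Each product $\sigma^\bfs\sigma^\bft$ equals $c\,\sigma^{\mathbf{r}}$ for a single Pauli $\sigma^{\mathbf{r}}$ and a phase $c\in\{\pm1,\pm i\}$, so $|c|=1$. Collecting terms gives coefficients $\gamma_{\mathbf r}=\sum_{\bfs,\bft:\,\sigma^\bfs\sigma^\bft\propto\sigma^{\mathbf r}} c_{\bfs,\bft}\,a_\bfs b_\bft$. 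The triangle inequality then yields
\[
\avec{AB}{1}\le\sum_{\bfs,\bft}|a_\bfs||b_\bft|=\avec{A}{1}\avec{B}{1}.
\]
By induction, $\avec{H^k}{1}\le L_H^k$. The identity term is handled by $\avec{\II}{1}=1$.

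\textbf{Step 2 (power series).} Write $e^{iH}=\sum_{k\ge0} (iH)^k/k!$. The series converges absolutely in the finite-dimensional operator space. Because $\avec{\cdot}{1}$ is a norm on that space (the $\ell_1$ norm of the coordinate vector in the Pauli basis), it is continuous, and the triangle inequality extends to convergent sums. This gives
\[
\avec{U}{1}\le\sum_{k\ge0}\frac{\avec{H^k}{1}}{k!}\le\sum_k \frac{L_H^k}{k!}=e^{L_H}.
\]
The only technical point is justifying the interchange of the norm with the infinite sum, and continuity of the norm on finite-dimensional spaces settles it. I do not expect a genuine obstacle here.

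\textbf{Step 3 (learning consequence).} $U$ is unitary with $\avec{U}{1}\le e^{L_H}$, so it belongs to the class treated in Theorem~\ref{th:learning_apx_sparse} with $L_1=e^{L_H}$. That theorem, which runs Algorithms~\ref{alg:find_coef} and~\ref{alg:estimate-unitary} with threshold $\theta=\epsilon^2/(12L_1)$, then gives learnability in $d_{\diamond,\{\II/N\}}$ with $\tilde O(e^{8L_H}/\epsilon^{16})$ queries. This is the claimed sample complexity depending on $e^{L_H}$. The main thing to state carefully is that the bound on $L_1$ is inherited exactly, so no further estimate is needed.
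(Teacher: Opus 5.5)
Your proposal is correct and follows the paper's proof essentially step for step: Taylor-expand $e^{iH}$, apply the triangle inequality termwise, bound $\avec{H^k}{1}\le L_H^k$ by submultiplicativity, and then invoke Theorem~\ref{th:learning_apx_sparse} with $L_1=e^{L_H}$. You go slightly further than the paper in two places. First, you prove submultiplicativity (products of Paulis are unit-modulus multiples of single Paulis) and justify passing the norm through the infinite sum, both of which the paper only asserts. Second, your explicit $\tilde{O}(e^{8L_H}/\epsilon^{16})$ query bound states the learning consequence more cleanly than the paper's own parameter discussion.
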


\begin{proof}
    We expand the unitary $U$ using its Taylor series definition:
    $$U = e^{iH} = \sum_{k=0}^{\infty} \frac{i^k}{k!} H^k.$$
    Using the triangle inequality for the Pauli $\ell_1$-norm, we have:
    $$\avec{U}{1} = \avec{\sum_{k=0}^{\infty} \frac{i^k}{k!} {H^k}}
    {1}  \le \sum_{k=0}^{\infty} \frac{1}{k!} \avec{H^k}{1}.$$
    We utilize the submultiplicativity of the Pauli $\ell_1$-norm. For any operators $A$ and $B$, $\avec{AB}{1} \le \avec{A}{1} \avec{B}{1}$. 
    Applying this inductively to $H^k$, we obtain:
    $$\avec{H^k}{1} \le \avec{H}{1}^k \le L_H^k.$$
    Substituting this back into the summation:
    $$\avec{U}{1} \le \sum_{k=0}^{\infty} \frac{L_H^k}{k!} = e^{L_H}.$$
    Since $\avec{U}{1}$ is bounded by $L_{U} = e^{L_H}$, we can invoke Theorem \ref{th:learning_apx_sparse}. To learn $U$ up to restricted diamond distance $\epsilon$, we require a sparse approximation with error parameter $\varepsilon$ satisfying the condition from Lemma \ref{lm:app_unitary}:
    $$\varepsilon^2 \le \frac{\epsilon}{4 L_U^2} = \frac{\epsilon}{4 e^{2L_H}}.$$
    Thus, Algorithm \ref{alg:estimate-unitary} can estimate $U$ by targeting coefficient precision $O(\varepsilon^3)$, ensuring the learned unitary $\hat{U}$ satisfies $\avec{U-\hat{U}}{2} \le \sqrt{\epsilon}$.
\end{proof}

\subsection{Hamiltonians and Boolean functions}
The relationship between classical Boolean functions and diagonal Hamiltonians provides a rich source of examples for bounded-norm unitaries. Given a real-valued function $f:\{0,1\}^n\rightarrow \RR$ define the Hamiltonian \[
    H_f = \sum_x f(x) \ketbra{x},
\]
and its corresponding unitary evolution $U_f=e^{iH_f}$. 

Classes such as DNFs and decision trees have bounded L1 norm \citep{Blum1994,Khardon94}, therefore their corresponding unitary and Hamiltonians have bounded L1 norm. 

These structures appear naturally in combinatorial optimization problems, such as the Quantum Approximate Optimization Algorithm (QAOA). For instance, the Max-Cut Hamiltonian for a graph $G=(V,E)$ is given by:
\[
H_C = \frac{1}{2}\sum_{(i,\,j)\in E(G)} (Z_i Z_j - I),
\]

Here, the Pauli $\ell_1$-norm is proportional to the number of edges. 
Crucially, Lemma \ref{lm:hamiltonianl1} implies that efficient learnability of the unitary $e^{i H_C}$ depends on the quantity $e^{|E|}$. Thus, our algorithm is efficient for these problems where the graph is sparse.

\begin{remark}
    Identify each $Z$-string with a bitmask $z \in \{0,1\}^n$, where $z_j = 1$
indicates a $Z$ acting on qubit $j$. Then, the $Z$-Pauli coefficients of $U_f$  are the Boolean Fourier coefficients of the function $g(x):=e^{if(x)}$:
\[
\alpha_z
= \hat{g}_z:=
2^{-n}
\sum_{x \in \{0,1\}^n}
e^{i f(x)} \, \chi_z(x),
\]
where $z\in \{0,1\}^n$, and  \[
\chi_z(x) := (-1)^{z \cdot x}.
\]
are the characters. 
\end{remark}

\subsection{Examples of nearly Pauli-sparse unitaries}
\label{sec:pauli-compressible-examples}

We present several natural families of strictly nearly  $(s,\varepsilon)$ sparse that are not necessarily $s$-sparse.
Throughout we are interested in the regime $s=\poly(n)$ (typically $s=n^{O(1)}$) while allowing $U$ to have exponentially many nonzero Pauli coefficients.

\subsubsection{A product template: many tiny Pauli rotations}
Consider
\begin{equation}
U \;=\; \prod_{j=1}^{m} e^{-i\theta_j P_j},
\qquad P_j\in\mathcal{P}_n,\;\;\theta_j\in\mathbb{R},
\label{eq:product-template}
\end{equation}
with $m=\poly(n)$.  Expanding each factor as
$e^{-i\theta_j P_j}=\cos\theta_j\, I - i\sin\theta_j\, P_j$
induces a decomposition indexed by subsets $T\subseteq[m]$; terms with $|T|$ ``activated'' rotations carry magnitude proportional to $\prod_{j\in T}|\tan\theta_j|$.

Define $A:=\sum_{j=1}^m |\tan\theta_j|$ and, for $k\in\mathbb{N}$, let $S_k$ denote the set of Pauli strings obtainable as a product of at most $k$ of the $P_j$'s (ignoring phases). Then $|S_k|\le \sum_{r=0}^k\binom{m}{r}$.

\begin{proposition}[Subset-size tail bound]
\label{prop:subset-tail}
For $U$ of the form~\eqref{eq:product-template} and any $k\ge 0$,
\begin{equation}
\sum_{\bfs \notin S_k} |\alpha_\bfs|
\;\le\;
\sum_{r=k+1}^{\infty}\frac{A^r}{r!}
\;\le\;
e^{A}\,\frac{A^{k+1}}{(k+1)!}.
\label{eq:subset-tail}
\end{equation}
Consequently, if $A=O(1)$ and $k=O(1)$ is fixed, then $U$ is nearly $(s,\varepsilon)$-sparse with 
\begin{equation}
s \;=\; \sum_{r=0}^{k}\binom{m}{r} \;=\; O(m^k)=\poly(n),
\qquad
\varepsilon \;\le\; e^{A}\frac{A^{k+1}}{(k+1)!}.
\end{equation}
\end{proposition}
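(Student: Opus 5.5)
The plan is to expand the product \eqref{eq:product-template} factor by factor and sort the resulting terms by how many rotations are ``activated''. First assume $\cos\theta_j\neq 0$ for all $j$; the degenerate case is treated at the end. Factor out the cosines to get
\[
U=\Big(\prod_{j=1}^m\cos\theta_j\Big)\prod_{j=1}^m\big(I-i\tan\theta_j\,P_j\big).
\]
Expanding the ordered product gives
\[
U=C\sum_{T\subseteq[m]}(-i)^{|T|}\Big(\prod_{j\in T}\tan\theta_j\Big)\,\overrightarrow{\prod_{j\in T}}P_j,
\qquad C:=\prod_{j}\cos\theta_j .
\]
Here $\overrightarrow{\prod}$ is the product in increasing index order, and $|C|\le 1$. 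Each ordered product of Pauli strings equals $\omega_T\,\sigma^{\bfs(T)}$ for a phase $\omega_T\in\{\pm1,\pm i\}$ and a single Pauli string $\bfs(T)$. By definition of $S_k$, $\bfs(T)\in S_{|T|}\subseteq S_k$ whenever $|T|\le k$.

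Next I read off the coefficients. By uniqueness of the Pauli decomposition (Fact 1),
\[
\alpha_\bfs=C\sum_{T:\,\bfs(T)=\bfs}(-i)^{|T|}\omega_T\prod_{j\in T}\tan\theta_j .
\]
If $\bfs\notin S_k$, every $T$ contributing to $\alpha_\bfs$ must have $|T|\ge k+1$. Apply the triangle inequality and note that the sets $\{T:\bfs(T)=\bfs\}$ are disjoint over $\bfs$. This gives
\[
\sum_{\bfs\notin S_k}|\alpha_\bfs|\le\sum_{r\ge k+1}\ \sum_{|T|=r}\prod_{j\in T}|\tan\theta_j| = \sum_{r\ge k+1}e_r\big(|\tan\theta_1|,\dots,|\tan\theta_m|\big),
\]
where $e_r$ is the elementary symmetric polynomial. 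The key inequality is $e_r(x)\le (\sum_j x_j)^r/r!$ for nonnegative $x$. It holds because expanding $(\sum_j x_j)^r$ produces each $r$-subset exactly $r!$ times, and all remaining terms are nonnegative. This yields the first inequality of \eqref{eq:subset-tail}.

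For the second inequality, use $(k+1+j)!\ge (k+1)!\,j!$. Then
\[
\sum_{r\ge k+1}\frac{A^r}{r!}=\frac{A^{k+1}}{(k+1)!}\sum_{j\ge0}\frac{A^j (k+1)!\,j!}{(k+1+j)!\,j!}\le \frac{A^{k+1}}{(k+1)!}\,e^{A}.
\]
The consequence follows by taking $\CS=S_k$ in Definition \ref{def:nearly sparse}. Its size satisfies $|S_k|\le\sum_{r\le k}\binom{m}{r}=O(m^k)$, which is $\poly(n)$ for fixed $k$ and $m=\poly(n)$.

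The only delicate point is the degenerate case $\cos\theta_j=0$, where $\tan\theta_j$ is infinite and $A=\infty$. There the bound is vacuous, so the statement holds trivially. Alternatively, one can expand directly with weights $|\cos|$ and $|\sin|$ and bound the tail by the corresponding symmetric sums. Beyond this, the main care needed is in the bookkeeping: phases and collisions of different $T$ onto the same Pauli string can only cause cancellations, so they never hurt the triangle-inequality bound.
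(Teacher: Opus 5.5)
Your proof is correct and follows essentially the same route as the paper: expand the product over subsets $T$ of activated rotations, apply the triangle inequality with $\prod_{j\notin T}|\cos\theta_j|\prod_{j\in T}|\sin\theta_j|\le\prod_{j\in T}|\tan\theta_j|$ (which you get by factoring out $C=\prod_j\cos\theta_j$ up front), bound the degree-$r$ elementary symmetric sum by $A^r/r!$, and finish with the Taylor remainder. Your version is slightly more careful than the paper's: the paper writes $\sum_{|T|=r}\prod_{j\in T}|\tan\theta_j|=\frac{1}{r!}A^r$ as an equality, whereas only your inequality $e_r\le A^r/r!$ holds, and the paper does not address the degenerate case $\cos\theta_j=0$.
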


\begin{proof}
Expanding~\eqref{eq:product-template} yields
\begin{equation}
U \;=\; \sum_{T\subseteq[m]} \Big(\prod_{j\in T}(-i\sin\theta_j)\Big)
\Big(\prod_{j\notin T}\cos\theta_j\Big)
\Big(\prod_{j\in T} P_j\Big).
\end{equation}
Each term is associated with a subset $T\subseteq[m]$ of activated rotations. Grouping terms by the resulting Pauli string $\prod_{j\in T} P_j$ gives the Pauli coefficients
\begin{equation}
\alpha_\bfs \;=\; \sum_{\substack{T\subseteq[m]:\\ \prod_{j\in T} P_j = \sigma_\bfs}}
\Big(\prod_{j\in T}(-i\sin\theta_j)\Big)
\Big(\prod_{j\notin T}\cos\theta_j\Big).
\end{equation}
Thus, 
\begin{align*}
\sum_{\bfs \notin S_k} |\alpha_\bfs|
&\le\;
\sum_{r=k+1}^{m} \sum_{\substack{T\subseteq[m]:\\ |T|=r}}
\Big|\prod_{j\in T}(-i\sin\theta_j)\Big|
\Big|\prod_{j\notin T}\cos\theta_j\Big| \\
&\le\;
\sum_{r=k+1}^{m} \sum_{\substack{T\subseteq[m]:\\ |T|=r}}
\prod_{j\in T}|\tan\theta_j| \\
&=\;
\sum_{r=k+1}^{m} \sum_{\substack{T\subseteq[m]:\\ |T|=r}}
\prod_{j\in T}|\tan\theta_j| \\
&=\;
\sum_{r=k+1}^{m} \frac{1}{r!}\Big(\sum_{j=1}^m |\tan\theta_j|\Big)^r \\
&=\;
\sum_{r=k+1}^{m} \frac{A^r}{r!}
\;\le\;
\sum_{r=k+1}^{\infty} \frac{A^r}{r!},
\end{align*}
which establishes the first inequality in~\eqref{eq:subset-tail}. The second inequality follows from the Taylor remainder bound for $e^{A}$.
\end{proof}
Each of the following yields (typically) exponentially many nonzero Pauli coefficients, yet nearly $(s,\varepsilon)$-sparse with $s=\poly(n)$ by Proposition~\ref{prop:subset-tail} (for fixed $k$). 

\begin{remark}[Clifford conjugation]
  When $U$ is nearly $(s,\varepsilon)$-sparse, then so is $U':=CUC^\dagger$ for any $C\in\mathsf{Cl}_n$, with the same parameters $s,\varepsilon$. This follows from the fact that Clifford conjugation permutes Pauli strings (up to sign):
\end{remark}

\begin{example}
[Weak local fields (product unitaries).]
Consider
\begin{equation}
U \;=\; \exp\!\Big(-i\frac{\alpha}{n}\sum_{i=1}^n Z_i\Big)
\;=\; \bigotimes_{i=1}^n e^{-i(\alpha/n)Z_i}.
\end{equation}
Here $m=n$ and $A\asymp |\alpha|$ (for $|\alpha|/n\ll 1$).
\end{example}

\subsubsection{Short-time evolution under Pauli-sparse Hamiltonians}
Let
\begin{equation}
H \;=\; \sum_{j=1}^{m} h_j \sigma^{\bfs_j},
\qquad m=\poly(n),
\label{eq:pauli-sparse-H}
\end{equation}
and define $L:=\sum_{j=1}^m |h_j|$.
For $U:=e^{-itH}$, consider the degree-$k$ truncation
$U_{\le k}:=\sum_{\ell=0}^{k} \frac{(-it)^\ell}{\ell!} H^\ell$.

\begin{proposition}
\label{prop:taylor-tail}
For $U=e^{-itH}$ with $H$ as in~\eqref{eq:pauli-sparse-H},
\begin{equation}
\|U-U_{\le k}\|_{P,1}
\;\le\;
\sum_{\ell=k+1}^{\infty}\frac{(|t|L)^\ell}{\ell!}
\;\le\;
e^{|t|L}\,\frac{(|t|L)^{k+1}}{(k+1)!}.
\label{eq:taylor-tail}
\end{equation}
Moreover, $U_{\le k}$ is supported on at most $\sum_{\ell=0}^{k} m^\ell = O(m^k)$ Pauli strings. In particular, for $|t|L=O(1)$ and fixed $k$, $U$ is $(s,\varepsilon)$-Pauli-compressible with $s=\poly(n)$.
\end{proposition}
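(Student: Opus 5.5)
The plan is to write the Taylor remainder explicitly, $U-U_{\le k}=\sum_{\ell\ge k+1}\frac{(-it)^\ell}{\ell!}H^\ell$, which converges absolutely in any norm since $H$ is a finite-dimensional operator. Apply the triangle inequality for the Pauli $\ell_1$-norm (a genuine norm on the coefficient vector, so it extends to convergent series by continuity) to get
\[
\|U-U_{\le k}\|_{P,1}\le \sum_{\ell\ge k+1}\frac{|t|^\ell}{\ell!}\,\|H^\ell\|_{P,1}.
\]
Then use submultiplicativity of the Pauli $\ell_1$-norm, as already invoked in Lemma \ref{lm:hamiltonianl1}. I would justify it briefly: for $A=\sum_\bfs a_\bfs\sigma^\bfs$ and $B=\sum_\bfu b_\bfu\sigma^\bfu$, we have $AB=\sum_{\bfs,\bfu}a_\bfs b_\bfu\,\sigma^\bfs\sigma^\bfu$. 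Each product $\sigma^\bfs\sigma^\bfu$ is a phase of modulus one times a single Pauli string, so grouping terms and applying the triangle inequality gives $\|AB\|_{P,1}\le\|A\|_{P,1}\|B\|_{P,1}$. Induction then gives $\|H^\ell\|_{P,1}\le L^\ell$, which yields the first inequality in \eqref{eq:taylor-tail}.

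For the second inequality, use the standard Lagrange remainder for the exponential series. Set $x=|t|L\ge 0$. Then $\sum_{\ell\ge k+1}x^\ell/\ell!=e^{\xi}x^{k+1}/(k+1)!$ for some $\xi\in[0,x]$, and this is at most $e^{x}x^{k+1}/(k+1)!$. Alternatively, factor out $x^{k+1}/(k+1)!$ and bound the remaining sum by $\sum_j x^j/j!$, using $(k+1)!\,j!\le (k+1+j)!$.

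For the support claim, expand $H^\ell=\sum_{j_1,\dots,j_\ell}h_{j_1}\cdots h_{j_\ell}\sigma^{\bfs_{j_1}}\cdots\sigma^{\bfs_{j_\ell}}$. Each of the $m^\ell$ ordered products is, up to phase, a single Pauli string, so $H^\ell$ is supported on at most $m^\ell$ strings. Taking the union over $\ell=0,\dots,k$ shows that $U_{\le k}$ is supported on a set $\CS$ of size at most $\sum_{\ell=0}^k m^\ell=O(m^k)$ when $m\ge 2$.

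To conclude near-sparsity in the sense of Definition \ref{def:nearly sparse}, write $\alpha_\bfs$ for the coefficients of $U$. For $\bfs\notin\CS$, the coefficient $\alpha_\bfs$ equals the coefficient of $U-U_{\le k}$, so $\sum_{\bfs\notin\CS}|\alpha_\bfs|\le\|U-U_{\le k}\|_{P,1}\le e^{x}x^{k+1}/(k+1)!$. When $x=O(1)$ and $k$ is fixed, this is a constant-size $\varepsilon$ with $s=O(m^k)=\poly(n)$.

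No step is a real obstacle. The only point needing care is stating submultiplicativity and the term-by-term triangle inequality for the infinite series rigorously; the rest is the Lagrange remainder bound.
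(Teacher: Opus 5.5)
Your proposal is correct and follows the same route as the paper. You expand the remainder $U-U_{\le k}=\sum_{\ell\ge k+1}\frac{(-it)^\ell}{\ell!}H^\ell$, apply the triangle inequality and submultiplicativity of the Pauli $\ell_1$-norm to get $\|H^\ell\|_{P,1}\le L^\ell$, and finish with the Taylor remainder bound for $e^{|t|L}$. You also supply details the paper leaves implicit: a proof of submultiplicativity, the count $\sum_{\ell=0}^{k}m^\ell$ of Pauli strings supporting $U_{\le k}$, and the observation that off that support the coefficients of $U$ coincide with those of $U-U_{\le k}$, which gives near-sparsity.
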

\begin{proof}
From  the submultiplicativity of the Pauli $\ell_1$-norm and the expansion of $U-U_{\le k}$, we have
\begin{align*}
\|U-U_{\le k}\|_{P,1}
&=\;\Big\|\sum_{\ell=k+1}^{\infty} \frac{(-it)^\ell}{\ell!} H^\ell \Big\|_{P,1} \\
&\le\;\sum_{\ell=k+1}^{\infty} \frac{|t|^\ell}{\ell!} \|H^\ell\|_{P,1} \\
&\le \sum_{\ell=k+1}^{\infty} \frac{(|t|L)^\ell}{\ell!},
\end{align*}
establishing the first inequality in~\eqref{eq:taylor-tail}. The second inequality follows from the Taylor remainder bound for $e^{|t|L}$.
\end{proof}

\begin{example}
  [Weak commuting Ising layer (bounded-degree graphs).]
For a graph $G=(V,E)$ with $|V|=n$ and $|E|=\Theta(n)$,
\begin{equation}
U \;=\; \exp\!\Big(-i \sum_{(i,j)\in E} J_{i,j}Z_iZ_j\Big),
\end{equation}
so $m=|E|=\Theta(n)$ and $A\asymp |\alpha|$ (for $|\alpha|/n\ll 1$).
  
\end{example}

\subsection{Examples of unitaries with small Pauli \texorpdfstring{$\ell_1$}{Lg}-norm}\label{sec:exp}
\label{sec:examples l1 small}
In this section, we present several examples of unitaries with small Pauli $\ell_1$-norm, i.e., unitaries $U$ satisfying $\|U\|_{P,1} \leq L$ for small constant $L$.

\begin{proposition}[Multi-controlled phase unitary]
 For any $k$-qubit system ($k\le n$), define the unitary that applies a phase $\phi$ to the $\ket{1^k}$ state:
\begin{equation*}
U_{k,\phi} := I + (e^{i\phi}-1)\,\lvert 1^k\rangle\langle 1^k\rvert .
\end{equation*}
Then, $\|U_{k,\phi}\|_{P,1} \le 3$.
\end{proposition}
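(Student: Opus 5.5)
Since $U_{k,\phi}$ acts nontrivially only on $k$ qubits, I would write it as $U_{k,\phi}=U'\otimes I_{n-k}$, where $U'$ acts on those $k$ qubits. Tensoring with identity just pads each Pauli label with zeros, so the coefficients are unchanged and it suffices to bound $\|U'\|_{P,1}$ on $k$ qubits with $K:=2^k$. The plan is to compute the Pauli expansion of the projector $\ketbra{1^k}{1^k}$ exactly, in the same way as the phase-oracle computation in the proof of Theorem~\ref{th:lower_l1}.

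\textbf{Step 1.} On a single qubit, $\ketbra{1}{1}=\tfrac12(I-Z)$. Taking tensor products,
\[
\ketbra{1^k}{1^k}=\bigotimes_{j=1}^k \tfrac12(I-Z_j)=\frac{1}{K}\sum_{z\in\{0,1\}^k}(-1)^{|z|} Z^{z},
\]
where $Z^z$ is the $Z$-type Pauli string with support $z$. Hence the only nonzero Pauli coefficients of $U'$ are
\[
\alpha_I = 1+\frac{e^{i\phi}-1}{K}, \qquad \alpha_{Z^z}=\frac{(e^{i\phi}-1)(-1)^{|z|}}{K}\quad (z\neq 0).
\]

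\textbf{Step 2.} Summing absolute values and using the triangle inequality on $\alpha_I$ gives
\[
\|U'\|_{P,1}\le 1+\frac{|e^{i\phi}-1|}{K}+(K-1)\frac{|e^{i\phi}-1|}{K}=1+|e^{i\phi}-1|\le 3,
\]
because $|e^{i\phi}-1|\le 2$. One can tighten this slightly (for $\phi=\pi$ it recovers the bound $3-4/K$ obtained for the Grover oracle), but the crude estimate already gives the claim.

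There is no real obstacle in this argument. The only points needing care are the sign convention for $\ketbra{1}{1}=\tfrac12(I-Z)$, which does not affect absolute values anyway, and the invariance of the Pauli $\ell_1$-norm under tensoring with identity, which follows immediately from the coefficient formula $\alpha_\bfs=2^{-n}\Tr(A\sigma^\bfs)$ and the factorization of the trace.
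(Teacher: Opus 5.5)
Your proposal is correct and follows essentially the same argument as the paper: expand $\ketbra{1^k}{1^k}$ via $\ketbra{1}{1}=\tfrac12(I-Z)$, read off the $Z$-type Pauli coefficients, and bound the sum by $1+|e^{i\phi}-1|\le 3$ using the triangle inequality on $\alpha_I$. Your explicit remark that tensoring with $I_{n-k}$ leaves the Pauli $\ell_1$-norm unchanged is a harmless addition that the paper leaves implicit.
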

\begin{proof}
Using
$
\ketbra{1} = \tfrac{1}{2}(I-Z),
$
we obtain 
$$
\ketbra*{1^k} = \tfrac{1}{2^k} \sum_{S\subseteq [k]} (-1)^{\lvert S\rvert} Z_S ,
$$
where
$
Z_S := \prod_{j\in S} Z_j .
$

Therefore, the Pauli coefficients of $U_{k,\phi}$ are:
\[
\alpha_{Z_S} =
\begin{cases}
\dfrac{e^{i\phi}-1}{2^k}(-1)^{\lvert S\rvert}, & S\neq \emptyset,\\[1em]
1 + \dfrac{e^{i\phi}-1}{2^k}, & S=\emptyset .
\end{cases}
\]

Thus,
\begin{align*}
\|U_{k,\phi}\|_{P,1}
&= \left|1+\frac{e^{i\phi}-1}{2^k}\right|
  + (2^k-1)\frac{|e^{i\phi}-1|}{2^k}\\
  & \le 1 + |e^{i\phi}-1|
\le 3.
\end{align*}
\end{proof}
This unitary is a natural generalization of the Toffoli gate (which corresponds to $\phi=\pi$ and $k=2$). 

\begin{proposition}[Grover diffusion operator]
  For any $n$-qubit Grover diffusion operator $D$, $\|D\|_{P,1} < 3$.
\end{proposition}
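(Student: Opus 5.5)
The plan is to follow the proof of the multi-controlled phase proposition essentially verbatim. First we reduce the Grover diffusion operator to a Clifford conjugate of a reflection about $\ket{0^n}$. Concretely, write $D = 2\ketbra{\psi} - I$ (or its negation, depending on convention), with $\ket{\psi} = H^{\otimes n}\ket{0^n}$. Then
\[
D = H^{\otimes n}\left(2\ketbra*{0^n} - I\right) H^{\otimes n}.
\]
Since $H^{\otimes n}$ is Clifford, conjugation by it permutes Pauli strings up to sign, exactly as in the Clifford-conjugation remark. The multiset of coefficient magnitudes is therefore unchanged, and $\|D\|_{P,1} = \|2\ketbra*{0^n} - I\|_{P,1}$. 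An overall sign or global phase convention for $D$ does not affect the Pauli $\ell_1$-norm either.

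Next we expand the reflection. Using $\ketbra{0} = \tfrac12(I+Z)$, we get
\[
\ketbra*{0^n} = \frac{1}{2^n}\sum_{S\subseteq[n]} Z_S .
\]
Hence $2\ketbra*{0^n} - I$ has identity coefficient $\frac{2}{2^n} - 1$ and coefficient $\frac{2}{2^n}$ on each $Z_S$ with $S\neq\emptyset$. Equivalently, this is the Grover oracle $U_x$ from the proof of Theorem \ref{th:lower_l1} with $x = 0^n$, up to an overall sign.

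Summing the magnitudes gives
\[
\|D\|_{P,1} = \left|1 - \frac{2}{N}\right| + (N-1)\frac{2}{N} = 3 - \frac{4}{N} < 3
\]
for $n\ge 1$, where $N = 2^n$ and $|1 - 2/N| = 1 - 2/N$. This is the same computation already carried out in Theorem \ref{th:lower_l1}.

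There is no real obstacle here. The only point needing care is justifying that Clifford conjugation preserves the Pauli $\ell_1$-norm. This follows because the map $\sigma^\bfs \mapsto \pm\sigma^{\bfs'}$ is a signed bijection on Pauli strings, so the coefficient vector is permuted with sign flips. As an alternative, one could compute directly: $\ketbra{\psi} = \frac{1}{2^n}\sum_{S} X_S$, since $\ketbra{+} = \tfrac12(I+X)$, which gives the identical bound.
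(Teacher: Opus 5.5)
Your proof is correct and is essentially the paper's argument. The paper expands $(\lvert +\rangle\langle +\rvert)^{\otimes n} = 2^{-n}\sum_{S\subseteq[n]} X_S$ directly, which is the alternative you mention at the end, and sums the coefficients $-1+2^{1-n}$ and $2^{1-n}$ to get $3-2^{2-n}$; your preliminary Hadamard conjugation only relabels $X_S \leftrightarrow Z_S$ so you can reuse the Grover-oracle computation, which is valid but not needed.
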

\begin{proof}
The proof is similar to the previous example. The Grover diffusion operator (up to a global sign) is
$
D := 2\ketbra{+}^{\otimes n} - I .
$
Since 
$
\ketbra{+} = \tfrac{1}{2}(I+X),
$
we have
$
\ketbra{+}^{\otimes n} = \tfrac{1}{2^n} \sum_{S\subseteq [n]} X_S ,
$
where
$
X_S := \prod_{j\in S} X_j .
$
Thus,
\begin{align*}
D
&= -I + \frac{2}{2^n} \sum_{S\subseteq [n]} X_S .
\end{align*}

The Pauli coefficients are:
\[
\alpha_I = -1 + 2^{1-n}, \qquad
\alpha_{X_S} = 2^{1-n} \quad (S\neq \emptyset).
\]
Hence,
\begin{align*}
\|D\|_{P,1}
&= \left|-1+2^{1-n}\right|
  + (2^n-1)2^{1-n} \\
&= 3 - 2^{2-n}
< 3 .
\end{align*}
\end{proof}

\begin{example}
[Stabilizer projector ]
Let $\Pi$ be the projector onto a stabilizer code space (or any stabilizer subspace). Such a projector can always be written as
\begin{equation*}
\Pi = \frac{1}{2^k} \sum_{g\in S} g ,
\end{equation*}
where $S$ is a stabilizer group of size $2^k$ consisting of Pauli strings. Thus $\Pi$ has exactly $2^k$ Pauli terms, each with coefficient magnitude $2^{-k}$, implying
$
\|\Pi\|_{P,1} = 1 .
$
\end{example}

\begin{example}
[Selective phase unitary on a stabilizer subspace]
Continuing from the previous example, let $\Pi$ be a stabilizer projector. Define the selective phase unitary
\[U = I + (e^{i\phi}-1)\Pi
\quad\text{(equivalently } U = e^{i\phi \Pi}\text{)} .\]
By the triangle inequality, \[
\|U\|_{P,1}
\le 1 + |e^{i\phi}-1|
\le 3 .
\]
This includes reflections of the form $I-2\Pi$ by taking $\phi=\pi$.
\end{example}

\section{Conclusion and Future Work}\label{sec:conclusion}
This work studies the problem of learning unitaries that are nearly sparse in the Pauli basis. We provide upper bound on the query complexity and   established an efficient algorithm for learning this class of unitaries in diamond distance. We also proposed an algorithm for recovering large Pauli coefficients. 

We then extended our results to unitaries with bounded Pauli $\ell_1$ norm and introduced a relaxed distance measure, called restricted diamond distance, for which we provided upper and lower bounds on the query complexity of learning this class of unitaries.

Lastly, we presented several examples of unitaries with small Pauli $\ell_1$ norm including Hamiltonians corresponding to Boolean functions with bounded L1 norm. 

Future directions include improving the query complexity of learning nearly sparse unitaries in diamond distance using different techniques such as Heisenberg scaling and improved methods for implementing the approximated unitaries.


\begin{thebibliography}{52}
\providecommand{\natexlab}[1]{#1}
\providecommand{\url}[1]{\texttt{#1}}
\expandafter\ifx\csname urlstyle\endcsname\relax
  \providecommand{\doi}[1]{doi: #1}\else
  \providecommand{\doi}{doi: \begingroup \urlstyle{rm}\Url}\fi

\bibitem[Abbas et~al.(2025)Abbas, Cerrato, Guti{\'e}rrez, Grinko, Mele, and
  Sinha]{abbas2025nearly}
Amira Abbas, Nunzia Cerrato, Francisco~Escudero Guti{\'e}rrez, Dmitry Grinko,
  Francesco~Anna Mele, and Pulkit Sinha.
\newblock Nearly optimal algorithms to learn sparse quantum hamiltonians in
  physically motivated distances.
\newblock \emph{arXiv preprint arXiv:2509.09813}, 2025.

\bibitem[Aizenstein et~al.(1998)Aizenstein, Blum, Khardon, Kushilevitz, Pitt,
  and Roth]{ABKKPR1998}
Howard Aizenstein, Avrim Blum, Roni Khardon, Eyal Kushilevitz, Leonard Pitt,
  and Dan Roth.
\newblock On learning read-k-satisfy-j {DNF}.
\newblock \emph{SIAM Journal on Computing}, 27\penalty0 (6):\penalty0
  1515--1530, 1998.

\bibitem[Angrisani(2025)]{angrisani2025learning}
Armando Angrisani.
\newblock Learning unitaries with quantum statistical queries.
\newblock \emph{Quantum}, 9:\penalty0 1817, 2025.

\bibitem[Anshu et~al.(2020)Anshu, Arunachalam, Kuwahara, and
  Soleimanifar]{anshu2020sample}
Anurag Anshu, Srinivasan Arunachalam, Tomotaka Kuwahara, and Mehdi
  Soleimanifar.
\newblock Sample-efficient learning of quantum many-body systems.
\newblock In \emph{2020 IEEE 61st Annual Symposium on Foundations of Computer
  Science (FOCS)}, pages 685--691. IEEE, 2020.

\bibitem[Arunachalam and de~Wolf(2017)]{Arunachalam2017}
Srinivasan Arunachalam and Ronald de~Wolf.
\newblock A survey of quantum learning theory.
\newblock \emph{arXiv:1701.06806}, 2017.

\bibitem[Arunachalam and De~Wolf(2018)]{Arunachalam2018}
Srinivasan Arunachalam and Ronald De~Wolf.
\newblock Optimal quantum sample complexity of learning algorithms.
\newblock \emph{J. Mach. Learn. Res.}, 19\penalty0 (1):\penalty0 2879–2878,
  January 2018.
\newblock ISSN 1532-4435.

\bibitem[Arunachalam et~al.(2024)Arunachalam, Dutt, and
  Escudero~Guti{\'e}rrez]{arunachalam2024testing}
Srinivasan Arunachalam, Arkopal Dutt, and Francisco Escudero~Guti{\'e}rrez.
\newblock Testing and learning structured quantum hamiltonians, 2024.
\newblock URL \url{https://arxiv.org/abs/2411.00082}.
\newblock arXiv preprint arXiv:2411.00082 [quant-ph].

\bibitem[Arunachalam et~al.(2025)Arunachalam, Dutt, and
  Escudero~Gutiérrez]{Arunachalam2025}
Srinivasan Arunachalam, Arkopal Dutt, and Francisco Escudero~Gutiérrez.
\newblock Testing and learning structured quantum hamiltonians.
\newblock In \emph{Proceedings of the 57th Annual ACM Symposium on Theory of
  Computing}, STOC ’25, pages 1263--1270. ACM, June 2025.
\newblock \doi{10.1145/3717823.3718289}.

\bibitem[Atif et~al.(2022)Atif, Heidari, and Pradhan]{Atif2022}
Touheed~Anwar Atif, Mohsen Heidari, and S.~Sandeep Pradhan.
\newblock Faithful simulation of distributed quantum measurements with
  applications in distributed rate-distortion theory.
\newblock \emph{{IEEE} Transactions on Information Theory}, 68\penalty0
  (2):\penalty0 1085--1118, feb 2022.
\newblock \doi{10.1109/tit.2021.3124976}.

\bibitem[Bao and Yao(2023)]{Bao2023}
Zongbo Bao and Penghui Yao.
\newblock On testing and learning quantum junta channels.
\newblock In Gergely Neu and Lorenzo Rosasco, editors, \emph{Proceedings of
  Thirty Sixth Conference on Learning Theory}, volume 195 of \emph{Proceedings
  of Machine Learning Research}, pages 1064--1094. PMLR, 12--15 Jul 2023.

\bibitem[Beer et~al.(2020)Beer, Bondarenko, Farrelly, Osborne, Salzmann,
  Scheiermann, and Wolf]{beer2020training}
Kerstin Beer, Dmytro Bondarenko, Terry Farrelly, Tobias~J Osborne, Robert
  Salzmann, Daniel Scheiermann, and Ramona Wolf.
\newblock Training deep quantum neural networks.
\newblock \emph{Nature communications}, 11\penalty0 (1):\penalty0 808, 2020.

\bibitem[Belovs(2015)]{belovs2015}
Aleksandrs Belovs.
\newblock Quantum algorithms for learning symmetric juntas via the adversary
  bound.
\newblock \emph{computational complexity}, 24\penalty0 (2):\penalty0 255--293,
  2015.

\bibitem[Berry et~al.(2014)Berry, Childs, Cleve, Kothari, and Somma]{Berry2014}
Dominic~W. Berry, Andrew~M. Childs, Richard Cleve, Robin Kothari, and
  Rolando~D. Somma.
\newblock Exponential improvement in precision for simulating sparse
  hamiltonians.
\newblock In \emph{Proceedings of the forty-sixth annual ACM symposium on
  Theory of computing}, STOC ’14, pages 283--292. ACM, May 2014.
\newblock \doi{10.1145/2591796.2591854}.

\bibitem[Blum et~al.(1994)Blum, Furst, Jackson, Kearns, Mansour, and
  Rudich]{Blum1994}
Avrim Blum, Merrick Furst, Jeffrey Jackson, Michael Kearns, Yishay Mansour, and
  Steven Rudich.
\newblock Weakly learning {DNF} and characterizing statistical query learning
  using {Fourier} analysis.
\newblock In \emph{Proceedings of the twenty-sixth annual {ACM} symposium on
  Theory of computing - {STOC} 94}. {ACM} Press, 1994.
\newblock \doi{10.1145/195058.195147}.

\bibitem[Bshouty(1995)]{Bshouty95}
Nader~H. Bshouty.
\newblock Exact learning boolean function via the monotone theory.
\newblock \emph{Information and Computation}, 123\penalty0 (1):\penalty0
  146--153, 1995.

\bibitem[Bshouty et~al.(2004)Bshouty, Jackson, and Tamon]{BshoutyJT04}
Nader~H. Bshouty, Jeffrey~C. Jackson, and Christino Tamon.
\newblock More efficient {PAC}-learning of {DNF} with membership queries under
  the uniform distribution.
\newblock \emph{Journal of Computer and System Sciences}, 68\penalty0
  (1):\penalty0 205--234, 2004.

\bibitem[Chen et~al.(2024)Chen, Gong, and Ye]{Chen2024}
Sitan Chen, Weiyuan Gong, and Qi~Ye.
\newblock Optimal tradeoffs for estimating pauli observables.
\newblock In \emph{2024 IEEE 65th Annual Symposium on Foundations of Computer
  Science (FOCS)}, pages 1086--1105, 2024.
\newblock \doi{10.1109/FOCS61266.2024.00072}.

\bibitem[Chen et~al.(2022)Chen, Nadimpalli, and Yuen]{Chen2022junta}
Thomas Chen, Shivam Nadimpalli, and Henry Yuen.
\newblock Testing and learning quantum juntas nearly optimally.
\newblock \emph{arXiv:2207.05898}, July 2022.

\bibitem[Childs and Wiebe(2012)]{childs2012hamiltonian}
Andrew~M Childs and Nathan Wiebe.
\newblock Hamiltonian simulation using linear combinations of unitary
  operations.
\newblock \emph{arXiv preprint arXiv:1202.5822}, 2012.

\bibitem[Feldman(2007)]{Feldman07}
Vitaly Feldman.
\newblock Attribute-efficient and non-adaptive learning of parities and {DNF}
  expressions.
\newblock \emph{Journal of Machine Learning Research}, 8:\penalty0 1431--1460,
  2007.

\bibitem[Feldman(2012)]{Feldman2012}
Vitaly Feldman.
\newblock Learning {DNF} expressions from {Fourier} spectrum.
\newblock In Shie Mannor, Nathan Srebro, and Robert~C. Williamson, editors,
  \emph{Proceedings of the 25th Annual Conference on Learning Theory},
  volume~23 of \emph{Proceedings of Machine Learning Research}, pages
  17.1--17.19, 2012.

\bibitem[Goldreich and Levin(1989)]{Goldreich1989}
O.~Goldreich and L.~A. Levin.
\newblock A hard-core predicate for all one-way functions.
\newblock In \emph{Proceedings of the twenty-first annual ACM symposium on
  Theory of computing}, pages 25--32. ACM Press, 1989.

\bibitem[Grewal and Liang(2025)]{Grewal2025}
Sabee Grewal and Daniel Liang.
\newblock Query-optimal estimation of unitary channels via pauli
  dimensionality.
\newblock September 2025.
\newblock \doi{10.48550/ARXIV.2510.00168}.

\bibitem[Gutoski and Johnston(2014)]{gutoski2014process}
Gus Gutoski and Nathaniel Johnston.
\newblock Process tomography for unitary quantum channels.
\newblock \emph{Journal of Mathematical Physics}, 55\penalty0 (3), 2014.

\bibitem[Haah et~al.(2016)Haah, Harrow, Ji, Wu, and Yu]{Haah2016}
Jeongwan Haah, Aram~W. Harrow, Zhengfeng Ji, Xiaodi Wu, and Nengkun Yu.
\newblock Sample-optimal tomography of quantum states.
\newblock In \emph{Proceedings of the forty-eighth annual {ACM} symposium on
  Theory of Computing}. {ACM}, jun 2016.
\newblock \doi{10.1145/2897518.2897585}.

\bibitem[Haah et~al.(2023{\natexlab{a}})Haah, Kothari, O’Donnell, and
  Tang]{Haah2023}
Jeongwan Haah, Robin Kothari, Ryan O’Donnell, and Ewin Tang.
\newblock Query-optimal estimation of unitary channels in diamond distance.
\newblock In \emph{2023 IEEE 64th Annual Symposium on Foundations of Computer
  Science (FOCS)}, pages 363--390. IEEE, November 2023{\natexlab{a}}.
\newblock \doi{10.1109/focs57990.2023.00028}.

\bibitem[Haah et~al.(2023{\natexlab{b}})Haah, Kothari, O’Donnell, and
  Tang]{haah2023query}
Jeongwan Haah, Robin Kothari, Ryan O’Donnell, and Ewin Tang.
\newblock Query-optimal estimation of unitary channels in diamond distance.
\newblock In \emph{2023 IEEE 64th Annual Symposium on Foundations of Computer
  Science (FOCS)}, pages 363--390. IEEE, 2023{\natexlab{b}}.

\bibitem[Heidari and Khardon(2025)]{HeidariK2025}
M.~Heidari and R.~Khardon.
\newblock Learning {DNF} through generalized {Fourier} representations.
\newblock In \emph{Proceedings of the Conference on Learning Theory}, 2025.
\newblock Full paper available as arXiv preprint 2506.01075.

\bibitem[Heidari and Szpankowski(2023)]{Heidari2023a}
Mohsen Heidari and Wojciech Szpankowski.
\newblock Learning k-qubit quantum operators via pauli decomposition.
\newblock In Francisco Ruiz, Jennifer Dy, and Jan-Willem van~de Meent, editors,
  \emph{Proceedings of The 26th International Conference on Artificial
  Intelligence and Statistics}, volume 206 of \emph{Proceedings of Machine
  Learning Research}, pages 490--504. PMLR, 25--27 Apr 2023.
\newblock URL \url{https://proceedings.mlr.press/v206/heidari23a.html}.

\bibitem[Heidari and Szpankowski(2024)]{Heidari2024}
Mohsen Heidari and Wojciech Szpankowski.
\newblock New bounds on quantum sample complexity of measurement classes.
\newblock In \emph{2024 IEEE International Symposium on Information Theory
  (ISIT)}, pages 1515--1520. IEEE, July 2024.
\newblock \doi{10.1109/isit57864.2024.10619538}.

\bibitem[Heidari et~al.(2021)Heidari, Padakandla, and
  Szpankowski]{HeidariQuantum2021}
Mohsen Heidari, Arun Padakandla, and Wojciech Szpankowski.
\newblock A theoretical framework for learning from quantum data.
\newblock In \emph{2021 {IEEE} International Symposium on Information Theory
  ({ISIT})}. {IEEE}, jul 2021.
\newblock \doi{10.1109/isit45174.2021.9517721}.

\bibitem[Hellerstein et~al.(2012)Hellerstein, Kletenik, Sellie, and
  Servedio]{HellersteinKSS12}
Lisa Hellerstein, Devorah Kletenik, Linda Sellie, and Rocco~A. Servedio.
\newblock Tight bounds on proper equivalence query learning of {DNF}.
\newblock In \emph{The 25th Annual Conference on Learning Theory}, pages
  31.1--31.18, 2012.

\bibitem[Hu et~al.(2025)Hu, Ma, Gong, Ye, Tong, Flammia, and
  Yelin]{hu2025ansatz}
Hong-Ye Hu, Muzhou Ma, Weiyuan Gong, Qi~Ye, Yu~Tong, Steven~T Flammia, and
  Susanne~F Yelin.
\newblock Ansatz-free hamiltonian learning with heisenberg-limited scaling.
\newblock \emph{arXiv preprint arXiv:2502.11900}, 2025.

\bibitem[Huang et~al.(2021)Huang, Kueng, and Preskill]{Huang2021}
H.-Y. Huang, R.~Kueng, and J.~Preskill.
\newblock Information-theoretic bounds on quantum advantage for learning.
\newblock \emph{Physical Review Letters}, 127:\penalty0 030503, 2021.
\newblock \doi{10.1103/PhysRevLett.127.030503}.

\bibitem[Huang et~al.(2020)Huang, Kueng, and Preskill]{Huang2020}
Hsin-Yuan Huang, Richard Kueng, and John Preskill.
\newblock Predicting many properties of a quantum system from very few
  measurements.
\newblock \emph{Nature Physics 16, 1050--1057 (2020)}, February 2020.
\newblock \doi{10.1038/s41567-020-0932-7}.

\bibitem[Jackson(1997)]{Jackson1997}
Jeffrey~C Jackson.
\newblock An efficient membership-query algorithm for learning {DNF} with
  respect to the uniform distribution.
\newblock \emph{Journal of Computer and System Sciences}, 55\penalty0
  (3):\penalty0 414--440, dec 1997.
\newblock \doi{10.1006/jcss.1997.1533}.

\bibitem[Kalai et~al.(2009)Kalai, Samorodnitsky, and Teng]{Kalai2009}
Adam~Tauman Kalai, Alex Samorodnitsky, and Shang-Hua Teng.
\newblock Learning and smoothed analysis.
\newblock In \emph{2009 50th Annual IEEE Symposium on Foundations of Computer
  Science}, pages 395--404. IEEE, October 2009.

\bibitem[Kearns et~al.(1994)Kearns, Schapire, and Sellie]{Kearns1994}
Michael~J. Kearns, Robert~E. Schapire, and Linda~M. Sellie.
\newblock Toward efficient agnostic learning.
\newblock \emph{Machine Learning}, 17\penalty0 (2-3):\penalty0 115--141, 1994.

\bibitem[Khardon(1994)]{Khardon94}
Roni Khardon.
\newblock On using the {Fourier} transform to learn disjoint {DNF}.
\newblock \emph{Information Processing Letters}, 49\penalty0 (5):\penalty0
  219--222, 1994.

\bibitem[Kiani et~al.(2020)Kiani, Lloyd, and Maity]{kiani2020learning}
Bobak~Toussi Kiani, Seth Lloyd, and Reevu Maity.
\newblock Learning unitaries by gradient descent.
\newblock \emph{arXiv preprint arXiv:2001.11897}, 2020.

\bibitem[King et~al.()King, Gosset, Kothari, and Babbush]{King2025}
Robbie King, David Gosset, Robin Kothari, and Ryan Babbush.
\newblock \emph{Triply efficient shadow tomography}, pages 914--946.
\newblock \doi{10.1137/1.9781611978322.27}.
\newblock URL \url{https://epubs.siam.org/doi/abs/10.1137/1.9781611978322.27}.

\bibitem[Kothari(2014)]{Kothari2014}
Robin Kothari.
\newblock \emph{Efficient algorithms in quantum query complexity}.
\newblock PhD thesis, University of Waterloo, 2014.
\newblock URL \url{http://hdl.handle.net/10012/8625}.

\bibitem[Kushilevitz(1996)]{Kushilevitz96}
Eyal Kushilevitz.
\newblock A simple algorithm for learning {O}(log n)-term {DNF}.
\newblock In \emph{Proceedings of the Ninth Annual Conference on Computational
  Learning Theory}, pages 266--269, 1996.

\bibitem[Kushilevitz and Mansour(1993)]{KM1993}
Eyal Kushilevitz and Yishay Mansour.
\newblock Learning decision trees using the {Fourier} spectrum.
\newblock \emph{SIAM Journal on Computing}, 22\penalty0 (6):\penalty0
  1331--1348, December 1993.

\bibitem[Levy et~al.(2024)Levy, Luo, and Clark]{levy2024classical}
Ryan Levy, Di~Luo, and Bryan~K Clark.
\newblock Classical shadows for quantum process tomography on nearterm quantum
  computers.
\newblock \emph{Physical Review Research}, 6\penalty0 (1):\penalty0 013029,
  2024.

\bibitem[Montanaro and Osborne(2010)]{Montanaro2010}
Ashley Montanaro and Tobias~J. Osborne.
\newblock Quantum boolean functions.
\newblock \emph{arXiv:0810.2435}, October 2010.

\bibitem[M\"{o}tt\"{o}nen et~al.(2005)M\"{o}tt\"{o}nen, Vartiainen, Bergholm,
  and Salomaa]{Moettoenen2005}
Mikko M\"{o}tt\"{o}nen, Juha~J. Vartiainen, Ville Bergholm, and Martti~M.
  Salomaa.
\newblock Transformation of quantum states using uniformly controlled
  rotations.
\newblock \emph{Quantum Info. Comput.}, 5\penalty0 (6):\penalty0 467–473,
  September 2005.
\newblock ISSN 1533-7146.

\bibitem[Valiant(1984)]{Valiant1984}
L.~G. Valiant.
\newblock A theory of the learnable.
\newblock \emph{Communications of the {ACM}}, 27\penalty0 (11):\penalty0
  1134--1142, November 1984.

\bibitem[Wilde(2013)]{Wilde2013}
Mark Wilde.
\newblock \emph{Quantum information theory}.
\newblock Cambridge University Press, Cambridge, UK, 2013.
\newblock ISBN 9781139525343.

\bibitem[Wilde et~al.(2012)Wilde, Hayden, Buscemi, and
  Hsieh]{wilde2012information}
Mark~M. Wilde, Patrick Hayden, Francesco Buscemi, and Min-Hsiu Hsieh.
\newblock The information-theoretic costs of simulating quantum measurements.
\newblock \emph{arXiv preprint arXiv:1206.4121}, 2012.
\newblock URL \url{https://arxiv.org/abs/1206.4121}.
\newblock v2, revised version.

\bibitem[Winter(2004)]{winter2004extrinsic}
Andreas Winter.
\newblock ‘‘extrinsic’’and ‘‘intrinsic’’data in quantum
  measurements: Asymptotic convex decomposition of positive operator valued
  measures.
\newblock \emph{Communications in mathematical physics}, 244\penalty0
  (1):\penalty0 157--185, 2004.

\bibitem[Zhao et~al.(2024)Zhao, Lewis, Kannan, Quek, Huang, and Caro]{Zhao2024}
Haimeng Zhao, Laura Lewis, Ishaan Kannan, Yihui Quek, Hsin-Yuan Huang, and
  Matthias~C. Caro.
\newblock Learning quantum states and unitaries of bounded gate complexity.
\newblock \emph{PRX Quantum}, 5\penalty0 (4):\penalty0 040306, October 2024.
\newblock ISSN 2691-3399.
\newblock \doi{10.1103/prxquantum.5.040306}.

\end{thebibliography}
\end{document}